\theoremstyle{plain}
\newtheorem{definition}{Definition}
\newtheorem{theorem}{Theorem}[section]
\newtheorem{prop}{Proposition}
\newtheorem{remark}{Remark}
\newtheorem{lemma}[theorem]{Lemma}
\begin{document}

\begin{frontmatter}

\title{Differentially Private Confidence Intervals for Proportions under Stratified Random Sampling\support{The research presented in this paper was supported by the U.S. Census Bureau Cooperative Agreement CB20ADR0160001.}}

\runtitle{Differentially Private Confidence Intervals}


\begin{aug}
\author{\fnms{Shurong} \snm{Lin}
\ead[label=e1]{shrlin@bu.edu}}
\address{Department of Mathematics and Statistics, Boston University, Boston, MA\\
\printead{e1}}

\author{\fnms{Mark} \snm{Bun}\ead[label=e2]{mbun@bu.edu}}
\and
\author{\fnms{Marco} \snm{Gaboardi}\ead[label=e3]{gaboardi@bu.edu}}

\address{Department of Computer Science, Boston University, Boston, MA\\
\printead{e2,e3}}

\author{\fnms{Eric D.} \snm{Kolaczyk}\ead[label=e4]{eric.kolaczyk@mcgill.ca}}
\address{Department of Mathematics and Statistics, McGill University, Canada\\
\printead{e4}}

\author{\fnms{Adam} \snm{Smith}\ead[label=e5]{ads22@bu.edu}}
\address{Department of Computer Science, Boston University, Boston, MA\\
\printead{e5}}

\end{aug}

\runauthor{S. Lin et al.}

\begin{abstract}
Confidence intervals are a fundamental tool for quantifying the uncertainty of parameters of interest. With the increase of data privacy awareness,  
developing
a private version of confidence intervals has gained growing attention from both statisticians and computer scientists. 
Differential privacy is a state-of-the-art framework for analyzing privacy loss when releasing statistics computed from sensitive data.
Recent work has been done around differentially private confidence intervals, yet to the best of our knowledge, rigorous methodologies on differentially private confidence intervals
in the context of survey sampling have not been studied.
In this paper, we propose three differentially private algorithms for constructing confidence intervals for proportions under stratified random sampling. We articulate two variants of differential privacy that make sense for data from stratified sampling designs, analyzing each of our algorithms within one of these two variants.
We establish analytical privacy guarantees and asymptotic properties of the estimators. 
In addition, we conduct simulation studies to evaluate the proposed private confidence intervals, and two applications to the 1940 Census data are provided.
\end{abstract}

\begin{keyword}[class=MSC]
\kwd[Primary ]{68P27}
\kwd{62G15}
\kwd[; secondary ]{62Dxx}
\end{keyword}

\begin{keyword}
\kwd{Differential privacy}
\kwd{confidence intervals}
\kwd{stratified sampling}
\kwd{population proportion}
\end{keyword}

\tableofcontents
\end{frontmatter}

\section{Introduction}
With the increase of privacy awareness in the modern information era, establishing
privacy-preserving methodologies for statistics and machine learning has become an active research area.
\textit{Differential privacy}, a state-of-the-art privacy protection technique \cite{dwork2006calibrating}, is considered a gold standard for rigorous privacy guarantees. 
Not only has it drawn significant attention in academia \cite{dwork2014algorithmic, DworkR16}, but also it has been deployed by governments, firms, and other data agencies, such as the U.S. Census Bureau \cite{uscensus},
Google \cite{google}, Microsoft \cite{DingKY17}, and Apple \cite{apple}.
Recently, the U.S. Census Bureau released a new demonstration of its differentially private Disclosure Avoidance System (DAS) for the 2020 Census \cite{censushandbook, Groshen2022Disclosure}. 
At the intersection of differential privacy and statistics, both statisticians and computer scientists are working on developing private versions of statistical inference procedures. 
Early work discussing differential privacy in the context of statistics includes \cite{dwork2010differential,dwork2009differential,tm08651,Smith09}. More recent work has explored statistical inference and estimation under the constraint of differential privacy
\cite{cai2021cost, kroll2021ondens, lam2022mini}.

As one of the most fundamental tools for statistical 
inference, confidence intervals
are ubiquitous in quantifying the uncertainty of parameters of interest.
In this paper, we propose three differentially private algorithms for constructing confidence intervals for the population proportion under stratified random sampling.
To the best of our knowledge, our work is the first to establish rigorous methodologies on differentially private confidence intervals in the context of survey sampling.
Survey sampling is an important area in statistics that involves selecting a sample of individuals from a target population to conduct a survey. It provides timely and cost-efficient estimates of population characteristics of interest and is widely used in broad-scale data gatherings, such as the American Community Survey (ACS), the Survey of Income and Program Participation (SIPP), and the Current Population Survey (CPS).

This paper provides the first study of differentially private confidence intervals for data from stratified sampling designs. Specifically:
\begin{itemize}
    \item 
    We articulate two specific variants of differential privacy that are appropriate for data from stratified sampling designs. In addition to the standard notion of differential privacy, we also consider settings in which the sample stratum sample sizes are fixed and public. This latter setting allows for simpler algorithms and tighter confidence intervals. 
    
    \item  We give methods to propagate the uncertainty due to the application of differentially private mechanisms (adding random noise)
    into the construction of confidence intervals. 
    A necessary bias correction is made to achieve (asymptotic) unbiased variance estimates. Central limit theorem (CLT)-type statements are provided to guarantee the confidence level asymptotically.
    \item We assess the performance of the proposed algorithms both in theory and through simulations. The theoretical analysis comparing the non-private and private methods gives practitioners a sense of how the algorithms would work prior to applying them to real data.
    \item To support the theoretical analysis of one of the algorithms, we study the behavior of a reciprocal normal variable in depth. A general form of the Taylor expansion (for conditional moments) is obtained to solve the problem of the non-existence of moments due to its heavy-tailed nature.
\end{itemize}

The paper is organized as follows. We briefly discuss the existing work on differentially private confidence intervals
in Section 1.1.
Section 2 provides preliminaries on confidence intervals of population proportions and differential privacy. In Section 3, we discuss the methodology of three differentially private algorithms. Section 4 provides theorems on both privacy and asymptotic coverage guarantees. Numerical experiments, including simulation studies and two applications to the 1940 Census data, are conducted in Section 5. Section 6 discusses the implications of
our methods and general research directions on differentially private confidence intervals. 
\subsection{Related Work}
Differentially private confidence intervals
have recently been studied for other settings. 
Some studied differentially private confidence intervals for the population mean of normally distributed data \cite{Karwa2017finite, Du2020DifferentiallyPC, Gaboardi2019LocallyPM} . 
Other tasks on confidence intervals have also been explored.
Drechsler et al. designed and evaluated several strategies to obtain differentially private confidence intervals for the median \cite{Drechsler2021NonparametricDP}.
Wang et al. provided confidence intervals for differentially private models trained with objective or output
perturbation algorithms \cite{Wang2019DifferentiallyPC}.

Besides, bootstrapping is a popular technique for constructing more general differentially private confidence intervals.
Ferrando et al. proposed a general-purpose approach to construct confidence intervals for a population parameter \cite{Ferrando2022ParametricBF}.
A numerical confidence interval for the difference of mean was provided \cite{DOrazio2015DifferentialPF}.
The nonparametric bootstrap was considered 
in \cite{Brawner2018BootstrapIA}.
Covington et al. described a method to induce distributions of mean and covariance estimates via the bag of little bootstraps (BLB), which can further produce private confidence intervals \cite{Covington2021UnbiasedSE}.

Our work is the first to study design-based approaches to sampling.
In a design-based setting, the values of interest are viewed as fixed but unknown constants. Randomness only comes from the sampling design. The selection probabilities introduced with the design will be used for estimation. 
On the contrary, in a model-based setting, a parametric model is postulated.
In many cases, especially with natural populations where no accurate prior information about the population distribution is available, design-based sampling methods can be more reassuring.
More discussion of design-based versus model-based approaches in sampling can be found in \cite{thompson2013theory}.

\section{Preliminaries}
In this section, we provide some preliminaries on population proportion estimation and
differential privacy. We first review the classic Wald confidence interval for the population proportion under stratified random sampling. Then we define a notion of differential privacy specifically for stratified data. Some properties of differential privacy are revisited in preparation for the theoretical analysis in Section 4.
\subsection{Confidence Intervals for the Population Proportion}
In stratified random sampling, a population of $N$ individuals is partitioned into $H$ strata, where stratum $h$ has $N_h$ individuals, and simple random sampling of $n_h$ individuals is conducted within each stratum. When the objective is to estimate the proportion of individuals having some attribute in the population, one can estimate it by
the sample proportion.
Let $y_{hi}$ be the corresponding indicator variable: $y_{hi} = 1$ when the individual $i$ in stratum $h$ has the attribute and $y_{hi} = 0$ otherwise. 
One can estimate the population proportion 
\begin{equation*}
    p = \frac{1}{N}\sum_{h=1}^H \sum_{i=1}^{N_h} y_{hi}
\end{equation*}
 by the sample proportion 
 \begin{equation*}
    \hat p = \frac{1}{N} \sum_{h=1}^H \frac{N_h}{n_h} \sum_{i=1}^{n_h} y_{hi} = \sum_{h=1}^H w_h \hat p_h  
\end{equation*}
where $\displaystyle w_h \stackrel{def}{=} \frac{N_h}{N}$ and $\displaystyle \hat p_h \stackrel{def}{=} \frac{1}{n_h}\sum_{i=1}^{n_h} y_{hi}$. Its variance
$\displaystyle
    {\operatorname{\operatorname{Var}}}(\hat p)= \sum_{h=1}^H w_h^2 { \operatorname{\operatorname{Var}}}(\hat  p_h),
$
where 
\begin{equation*}
    {\operatorname{\operatorname{Var}}}(\hat p_h)= \left( \frac{N_h-n_h}{N_h-1} \right) \frac{ p_h(1- p_h)}{n_h}.
\end{equation*}
An unbiased estimator for ${\operatorname{\operatorname{Var}}}(\hat p_h)$ is given by the sample variance in the stratum
\begin{equation}
    \widehat{ \operatorname{\operatorname{Var}}}(\hat p_h) = \left( \frac{N_h-n_h}{N_h} \right) \frac{\hat p_h(1-\hat p_h)}{n_h-1}.
    \label{eq:varhatp}
\end{equation}
Then an unbiased estimator for ${\operatorname{\operatorname{Var}}}(\hat p)$ is given by
$\displaystyle \widehat{ \operatorname{\operatorname{Var}}}(\hat p)= \sum_{h=1}^H w_h^2 \widehat{ \operatorname{\operatorname{Var}}}(\hat p_h). $
An approximate $100\%(1-\alpha)$ confidence interval for $p$ based on a normal distribution can be constructed:
\begin{equation}
    \hat p \pm z_{1-\frac{\alpha}{2}}\sqrt{{ \operatorname{\widehat{\operatorname{Var}}}}(\hat p)},
\label{ci:wald}
\end{equation}
where $z_{1-\frac{\alpha}{2}}$ denotes the ${1-\frac{\alpha}{2}}$ quantile of standard normal distribution. 
The normal approximation is useful when all the sample sizes are moderate to large. Otherwise, the $t$ distribution with appropriate degrees of freedom is typically used to replace the standard normal distribution.
For small sample sizes, various specialized confidence intervals have been developed \cite{Franco2019ComparativeSO}.

\subsection{Differential Privacy}

Differential privacy ensures that the output of data analysis or a query does not differ much when the data set is changed by one record, such that one can not infer the presence or absence of any individual. 
If two data sets $\bm x, \bm x'$ differ
by one record,
we say that $\bm x, \bm x'$ are \emph{adjacent} or \emph{neighboring}, written as $\bm x \sim \bm x'$.
The definition of differential privacy depends on how we define adjacency.
For the partitioned data under stratified sampling, there are two ways to change a record: (1)
one way is to substitute one record within a stratum, with all the stratum sample sizes fixed. We refer to this adjacency relation as ``\textit{substitute-one relation within a stratum}'' and denote it by $\sim_{ss}$. This relation corresponds to the case where the sample sizes are public and fixed; (2) another way to obtain an adjacent data set is to remove or add one record from one stratum; we refer to the corresponding relation as, which we call ``\textit{remove/add-one relation}'', denoted by $\sim_{r}$. In this case, one of the stratum sample sizes will change by one, as will the overall sample size. This relation corresponds to the case where the sample sizes are private.

Under either adjacency relation, we can define \textit{zero-concentrated differentially private} ($\rho$-zCDP) as in \cite{Bun2016ConcentratedDP}:

\begin{definition}[$\rho$-zCDP] 
Let ${\cal X}^*$ denote the space of the input data with an arbitrary finite dimension. 
Under the adjacency relation $\sim$,
a randomized algorithm $M: {\cal X}^* \rightarrow {\cal Y}$ is $\rho$-zero-concentrated-differentially private ($\rho$-zCDP) if, for every pair of adjacent data sets $\bm x \sim \bm x' \in {\cal X}^*$, and all $\alpha \in (1, \infty)$,
 \begin{equation*}
    \operatorname{D}_\alpha (M(\bm x) \|M(\bm x')) \leq \rho\alpha,
    \label{defzcdp}
 \end{equation*}
where $ \operatorname{D}_\alpha (M(\bm x) \|M(\bm x'))$ is the $\alpha$-Rényi divergence \cite{van2014renyi}
between the distribution of $M(\bm x)$ and
the distribution of $M(\bm x')$. 
\label{def:rhozcdp}
\end{definition}

The parameter $\rho$ indicates the \textit{privacy level}.  A smaller $\rho$ means a more restrictive distance control between $M(\bm x)$ and $M(\bm x')$. As a result, the outputs on two adjacent data sets are harder to tell apart and the algorithm achieves higher privacy. We call $\rho$ the \textit{privacy budget} when we deliberately design an algorithm to satisfy $\rho$-zCDP.

Depending on the adjacency notion, there are two types of differential privacy: \emph{bounded} and \emph{unbounded differential privacy} \cite{Kifer2011NoFL}.
Definition \ref{def:rhozcdp} under the ``remove/add-one relation'' corresponds to the standard unbounded differential privacy. The sample size of the data set changes when one record is added or removed to obtain an adjacent data set.
With ``substitute-one within a stratum'' relation $\sim_{ss}$, the resulting notion corresponds to the bounded version of differential privacy where the sizes of two adjacent data sets are the same. But it is somewhat different from the standard notion of bounded differential privacy in that for the latter, substitutions can happen across strata. That is, we can change both the record and the
stratum it is part of.

In the literature on differential privacy, $(\epsilon, \delta)$-DP (\cite{dwork2014algorithmic} Definition 2.4) is considered the classic notion. We consider $\rho$-zCDP because (1) $\rho$-zCDP implies $(\epsilon, \delta)$-DP (\cite{Bun2016ConcentratedDP} 
 Proposition 1.3), (2) the application of the Gaussian mechanism to achieve zCDP facilitates the theoretical analyses, and (3) the composition of $\rho$-zCDP is straightforward. The Gaussian mechanism is a prototypical example of a mechanism satisfying zCDP, which perturbs the true values by adding Gaussian noise. We provide the Gaussian mechanism and the composition and post-processing properties of $\rho$-zCDP in the following propositions. 
All propositions can be found in \cite{Bun2016ConcentratedDP} and will be used in the analyses of privacy guarantees in Section 4.

\begin{definition}[Sensitivity]
    A function $q$: ${\cal X}^* \to \mathbb{R}$ has sensitivity $\Delta$ if for all pairs of adjacent data sets $x \sim x' \in {\cal X}^* $, we have $|q(x)- q(x')| \leq \Delta$.
\end{definition}

\begin{prop}[Gaussian Mechanism of $\rho$-zCDP]
Let $q: {\cal X}^* \rightarrow \mathbb{R}$ be a sensitivity-$\Delta$ query. Consider the mechanism $M: {\cal X}^* \rightarrow \mathbb{R}$ that on input $x$, releases a sample from $N(q(x), \Delta^2/(2\rho))$. Then, $M$ satisfies $\rho$-zCDP.
\end{prop}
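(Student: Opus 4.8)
The plan is to reduce the claim to a single explicit computation: the $\alpha$-Rényi divergence between two one-dimensional Gaussian distributions that share a common variance but differ in their means. Writing $\mu_1 = q(\bm x)$, $\mu_2 = q(\bm x')$, and $\sigma^2 = \Delta^2/(2\rho)$, the mechanism produces $M(\bm x) \sim N(\mu_1, \sigma^2)$ and $M(\bm x') \sim N(\mu_2, \sigma^2)$, so verifying the zCDP condition amounts to bounding $\operatorname{D}_\alpha(N(\mu_1,\sigma^2) \,\|\, N(\mu_2,\sigma^2))$ uniformly over every $\alpha \in (1,\infty)$ and every adjacent pair.

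First I would recall the definition $\operatorname{D}_\alpha(P\|Q) = \frac{1}{\alpha-1}\log\int p(x)^\alpha q(x)^{1-\alpha}\,dx$ and substitute the two Gaussian densities. The integrand is an exponential whose exponent is proportional to the quadratic $\alpha(x-\mu_1)^2 + (1-\alpha)(x-\mu_2)^2$. Completing the square separates this into a perfect-square piece $(x-\bar\mu)^2$, with $\bar\mu = \alpha\mu_1 + (1-\alpha)\mu_2$, plus a constant term that collapses to $\alpha(1-\alpha)(\mu_1-\mu_2)^2$. Integrating the Gaussian piece in $x$ contributes exactly the normalizing constant $\sqrt{2\pi}\,\sigma$, leaving $\int p^\alpha q^{1-\alpha}\,dx = \exp\bigl(-\alpha(1-\alpha)(\mu_1-\mu_2)^2/(2\sigma^2)\bigr)$. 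Taking the logarithm and dividing by $\alpha-1$ yields the closed form $\operatorname{D}_\alpha(N(\mu_1,\sigma^2)\,\|\,N(\mu_2,\sigma^2)) = \frac{\alpha(\mu_1-\mu_2)^2}{2\sigma^2}$.

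Finally I would substitute $\sigma^2 = \Delta^2/(2\rho)$ to obtain $\operatorname{D}_\alpha = \frac{\alpha\rho(\mu_1-\mu_2)^2}{\Delta^2}$, and then invoke the sensitivity bound $|q(\bm x) - q(\bm x')| \le \Delta$, so that $(\mu_1-\mu_2)^2 \le \Delta^2$. This gives $\operatorname{D}_\alpha(M(\bm x)\,\|\,M(\bm x')) \le \rho\alpha$ for all $\alpha > 1$ and all adjacent pairs $\bm x \sim \bm x'$, which is precisely the definition of $\rho$-zCDP in Definition~\ref{def:rhozcdp}.

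I do not anticipate a genuine obstacle, as the argument is a direct calculation. The only step requiring care is the completing-the-square manipulation, where one must verify that the cross terms collapse to $\alpha(1-\alpha)(\mu_1-\mu_2)^2$ and confirm the sign so that the divergence comes out nonnegative for $\alpha > 1$ (since $\alpha(1-\alpha) < 0$ there). Because the statement appears in \cite{Bun2016ConcentratedDP}, an alternative would be to cite their derivation of the Gaussian Rényi divergence and then apply the sensitivity bound directly, but I would prefer to include the short self-contained computation above.
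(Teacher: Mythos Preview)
Your computation is correct and is exactly the standard derivation. Note, however, that the paper does not supply its own proof of this proposition: it is stated as a preliminary fact imported from \cite{Bun2016ConcentratedDP} (the text says ``All propositions can be found in \cite{Bun2016ConcentratedDP}''), so there is no in-paper argument to compare against. Your self-contained calculation is precisely what one finds in that reference, so including it is fine and arguably an improvement over a bare citation.
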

 A smaller budget leads to larger noise added to the query on average. Consequently, the output is more private. 

\begin{prop}[Composition]
Let $M: {\cal X}^* \rightarrow {\cal Y}$ and $M': {\cal X}^* \rightarrow {\cal Z}$ be two randomized algorithms. Suppose $M$ satisfies $\rho$-zCDP and $M'$ satisfies $\rho'$-zCDP, then algorithm $M'' = (M, M'): {\cal X}^* \rightarrow {\cal Y}\times  {\cal Z}$ is $(\rho + \rho')$-zCDP.
\label{prop:composition}
\end{prop}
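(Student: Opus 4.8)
The plan is to reduce the composition bound to the additivity of the $\alpha$-Rényi divergence across independent coordinates, and then invoke the two $\rho$- and $\rho'$-zCDP hypotheses coordinatewise. Fix an arbitrary pair of adjacent data sets $\bm x \sim \bm x'$ and an arbitrary order $\alpha \in (1, \infty)$; by Definition \ref{def:rhozcdp} it suffices to show $\operatorname{D}_\alpha(M''(\bm x) \| M''(\bm x')) \le (\rho + \rho')\alpha$. The central observation is that $M$ and $M'$ draw their randomness independently, so the output law of $M'' = (M, M')$ on any input is the product of the laws of its two components. Thus $M''(\bm x)$ is distributed as $P_1 \times P_2$ and $M''(\bm x')$ as $Q_1 \times Q_2$, where $P_1, Q_1$ are the laws of $M(\bm x), M(\bm x')$ and $P_2, Q_2$ the laws of $M'(\bm x), M'(\bm x')$.

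First I would write the divergence in its integral form, $\operatorname{D}_\alpha(P\|Q) = \frac{1}{\alpha-1}\log \int (dP/dQ)^\alpha \, dQ$, and note that for product measures the Radon--Nikodym derivative factorizes as $\frac{d(P_1\times P_2)}{d(Q_1 \times Q_2)}(y,z) = \frac{dP_1}{dQ_1}(y)\,\frac{dP_2}{dQ_2}(z)$. Raising to the power $\alpha$ and integrating against $Q_1 \times Q_2$, I would apply Fubini--Tonelli (all integrands are nonnegative) to separate the double integral into the product of the two single-coordinate $\alpha$-moment integrals. Taking $\frac{1}{\alpha-1}\log$ then yields the additivity identity $\operatorname{D}_\alpha(M''(\bm x)\|M''(\bm x')) = \operatorname{D}_\alpha(M(\bm x)\|M(\bm x')) + \operatorname{D}_\alpha(M'(\bm x)\|M'(\bm x'))$. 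Finally, since $M$ is $\rho$-zCDP and $M'$ is $\rho'$-zCDP under the same adjacency relation, each summand is bounded by $\rho\alpha$ and $\rho'\alpha$ respectively, and summing gives the claimed $(\rho+\rho')\alpha$ bound for every $\alpha$.

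The main obstacle here is mild: it is the careful justification that the joint law factors as a product measure and that the Radon--Nikodym derivative and the interchange of integration behave as expected (for instance, handling the degenerate case where $Q_1 \times Q_2$ does not dominate $P_1 \times P_2$, in which the divergence is $+\infty$ and one of the hypotheses would already fail). I would also note that the genuinely delicate version of composition is the \emph{adaptive} one, in which $M'$ is allowed to depend on the realized output of $M$; there the joint law no longer factorizes and one must instead condition on the first output, bound a conditional Rényi divergence, and combine the pieces through a chain-rule argument. Since the statement composes $M$ and $M'$ non-adaptively as the pair $(M, M')$, the product-measure route above suffices and this complication does not arise.
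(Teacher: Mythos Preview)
Your argument is correct and is exactly the standard route: factorize the joint output law as a product measure, use the tensorization identity $\operatorname{D}_\alpha(P_1\times P_2\|Q_1\times Q_2)=\operatorname{D}_\alpha(P_1\|Q_1)+\operatorname{D}_\alpha(P_2\|Q_2)$ via Fubini--Tonelli, and then apply the two zCDP hypotheses coordinatewise. Your remarks on the non-domination edge case and on the distinction with adaptive composition are apt but indeed unnecessary for the non-adaptive statement here.

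Note, however, that the paper does \emph{not} supply its own proof of this proposition: it is quoted from \cite{Bun2016ConcentratedDP} (Lemma~1.7 there), along with the Gaussian mechanism and post-processing facts, as background. The argument in that reference is the same product-measure/Rényi-additivity computation you give (in fact it is proved there in the more general adaptive form you allude to). So there is no discrepancy to flag; you have simply filled in a proof the paper chose to cite rather than reproduce.
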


\begin{prop}[Post-processing]
Let $M: {\cal X}^* \rightarrow {\cal Y}$ and $f: {\cal Y} \rightarrow {\cal Z}$ be randomized algorithms. If $M$ is $\rho$-zCDP, then so is the composed algorithm $M' = f\circ M: {\cal X}^* \rightarrow {\cal Z}$.
\end{prop}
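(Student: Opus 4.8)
The plan is to derive the post-processing property directly from the \emph{data-processing inequality} (DPI) for the R\'enyi divergence, which is the only nontrivial ingredient. Fix an arbitrary adjacent pair $\bm x \sim \bm x' \in {\cal X}^*$ and an arbitrary order $\alpha \in (1,\infty)$. Write $P$ and $Q$ for the distributions of $M(\bm x)$ and $M(\bm x')$ on ${\cal Y}$. Because $M$ is $\rho$-zCDP, we already know $\operatorname{D}_\alpha(P \| Q) \le \rho\alpha$. The goal is to show that this bound survives after we push both distributions through $f$.

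First I would model the (possibly randomized) map $f$ as a Markov kernel $K$ from ${\cal Y}$ to ${\cal Z}$, so that the output distributions $M'(\bm x) = f(M(\bm x))$ and $M'(\bm x') = f(M(\bm x'))$ are exactly the mixtures $KP$ and $KQ$. Concretely, $f$ can be realized as $f(y) = g(y, R)$ for a deterministic $g$ and an independent source of randomness $R$; marginalizing over $R$ produces the kernel $K$. The key inequality to establish is then
\begin{equation*}
    \operatorname{D}_\alpha(KP \,\|\, KQ) \le \operatorname{D}_\alpha(P \,\|\, Q),
\end{equation*}
i.e., post-processing cannot increase the R\'enyi divergence. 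Chaining this with the hypothesis gives $\operatorname{D}_\alpha(M'(\bm x) \| M'(\bm x')) \le \rho\alpha$, and since $\bm x \sim \bm x'$ and $\alpha$ were arbitrary, $M'$ satisfies $\rho$-zCDP by Definition~\ref{def:rhozcdp}.

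The main obstacle is proving the displayed DPI. I would express the R\'enyi divergence through the Hellinger-type integral $H_\alpha(P\|Q) = \int (dP/d\mu)^\alpha (dQ/d\mu)^{1-\alpha}\, d\mu$ computed against a common dominating measure $\mu$, and use that $\operatorname{D}_\alpha = \tfrac{1}{\alpha-1}\log H_\alpha$. For $\alpha > 1$ the map $t \mapsto t^\alpha$ is convex, so $H_\alpha$ is precisely the $f$-divergence generated by $t \mapsto t^\alpha$ and therefore inherits the standard $f$-divergence DPI; equivalently, one applies Jensen's inequality to the conditional densities induced by $K$, which is exactly where the convexity of $t \mapsto t^\alpha$ enters. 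Since $\tfrac{1}{\alpha-1}\log(\cdot)$ is increasing for $\alpha>1$, the divergence $\operatorname{D}_\alpha$ is a monotone transform of $H_\alpha$ and the DPI transfers to it. For a purely deterministic $f$ the argument degenerates to a grouping/change-of-variables step over the preimages $f^{-1}(\cdot)$. Because the whole derivation is carried out at a fixed $\alpha$ and a fixed adjacent pair, no extra care is needed in combining over $\alpha$ or over the adjacency relation, so once the DPI is in hand the proposition follows immediately.
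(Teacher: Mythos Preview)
Your argument is correct and is essentially the standard proof: reduce to the data-processing inequality for R\'enyi divergence, which in turn follows from the $f$-divergence DPI via the convexity of $t\mapsto t^\alpha$ for $\alpha>1$. There is nothing to compare against, however, because the paper does not prove this proposition at all; it is quoted as a known fact from \cite{Bun2016ConcentratedDP} (the text states explicitly that ``All propositions can be found in \cite{Bun2016ConcentratedDP}''). Your sketch is exactly the route taken in that reference (their Lemma~1.7, relying on the DPI from \cite{van2014renyi}), so you are simply filling in the citation with its intended content.
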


\section{Methodology}
Our goal is to release a $\rho$-zCDP confidence interval for the population proportion $p$ under stratified random sampling. To construct a confidence interval as in (\ref{ci:wald}), we need to estimate both $p$ and the variance of the estimator privately. 
Recall that the non-private estimator of population proportion is given by the sample proportion
\begin{equation*}
    \hat p = \sum_{h = 1}^H w_h\hat p_h.
\end{equation*}
We assume the stratum sizes $N_h$ are all public and fixed, thus so are $w_h$.
To get a private estimator for $p$, denoted by $\tilde p$, we can add noise at the level of either the (non-private) estimator $\hat p$ or the estimator $\hat p_h$. With $\tilde p$, we further devise a private estimator for ${\operatorname{Var}}(\tilde p)$.
Based on this idea, two algorithms for the case of public sample sizes are designed by 
adding noise at the stratum or population level in section 3.1. 
In section 3.2, we additionally propose adding noise at the stratum level when sample sizes are private.  
Throughout the paper,
the accents $\bm\hat{\cdot}$ and $\bm \tilde{\cdot}$ are used to represent non-private and private estimators, respectively.

\subsection{Estimating with Public Sample Sizes}
When sample sizes $n_h$ are fixed, there are two natural strategies for perturbing $\hat p$:
add Gaussian noise to (1) the stratum-level statistics $\hat p_h$'s, or (2) the overall statistic $\hat p$. Adding noise to the $\hat p_h$'s has the advantage of producing private estimators for stratum-level proportions simultaneously.

\subsubsection{Adding Noise at the Stratum Level}

We apply the Gaussian mechanism to each stratum to derive a private estimator $\tilde p_h \stackrel{def}{=} \hat p_h + e_h$ where $e_h$ is the Gaussian noise.
Then the private estimator for the population proportion is
\begin{equation*}
    \tilde p \stackrel{def}{=}\sum_{h=1}^H w_h\tilde{p}_h.
\end{equation*}
As a result, the variance of $\tilde{p}$ consists of both the intrinsic variances of estimating $p_h$'s by $\hat p_h$'s and the additional variability from added noise:
\begin{equation}
    {\operatorname{Var}}(\tilde p) =\sum_{h=1}^H w_h^2 \left( \operatorname{Var}(\hat p_h) + w_h^2 \operatorname{Var}(e_h) \right)
    \label{eq:true var stra}
\end{equation}
where $\operatorname{Var}(e_h), h = 1, ..., H$ are public since they do not depend on the data. 

To obtain a private confidence interval for $\hat p$,
we will need to privately estimate ${\operatorname{Var}}(\hat p_h)$. Note that the added noise biases the term $\hat p_h(1-\hat p_h)$ in the non-private estimate of ${\operatorname{Var}}(\hat p_h)$ in (\ref{eq:varhatp}). More specifically,
 $\mathbb{E}_{e} [\tilde p_h(1-\tilde p_h)] = \hat p_h(1-\hat p_h) - \operatorname{Var}(e_h)$ where $\mathbb{E}_{e}$ denotes the expectation taken on the randomness of the added noise. 
Then a private unbiased estimator of $\operatorname{Var}(\hat p_h)$ in the right-hand side in (\ref{eq:true var stra}) is given by
\begin{equation}
    \widetilde{\operatorname{Var}}(\hat p_h) \stackrel{def}{=} \left(\frac{N_h-n_h}{N_h} \right)\frac{\tilde p_h(1-\tilde p_h) +\operatorname{Var}(e_h)}{n_h-1}.
\label{eq:nas_var_ph}
\end{equation}
To estimate ${\operatorname{Var}}(\tilde p)$, we set
\begin{equation*}
    \widetilde{\operatorname{Var}}(\tilde p) \stackrel{def}{=}  \sum_{h = 1}^H w_h^2 \left( \widetilde{\operatorname{Var}}(\hat p_h) + \operatorname{Var}(e_h) \right)
\end{equation*}
This approach is formulated in Algorithm \ref{alg:strlevel} which we call \textbf{StrNz-PubSz} (adding noise at the stratum level with public sample sizes). The theoretical results regarding privacy level and asymptotic coverage are provided in Theorems \ref{thm:privacy} and \ref{thm:NoiStraPubSize}.
 
\begin{algorithm}[!ht]
\begin{algorithmic}[1]
\Require $\hat p_h$, $n_h$, $N_h$, $w_h$, $\rho$, $\alpha$.
\Ensure $\rho$-zCDP $(1-\alpha)$ CI  for the population proportion. 
\For{$h=1$ to $H$}
    \State Generate Gaussian noise $e_h \sim \mathcal{N}(0, \frac{1}{2\rho n_h^2})$, and let $$\tilde p_h \leftarrow \hat p_h + e_h.$$
    \State Estimate ${\operatorname{Var}}(\tilde p_h)$ by $$\widetilde{V}_h \leftarrow \left(\frac{N_h-n_h}{N_h}\right)\frac{\tilde p_h(1-\tilde p_h) +\frac{1}{2\rho n_h^2}}{n_h-1} + \frac{1}{2\rho n_h^2}.$$
\EndFor

\State Estimate $p$ by $\displaystyle \tilde p \leftarrow \sum_{h=1}^H w_h\tilde{p}_h$ and ${\operatorname{Var}}(\tilde p)$ by
$\displaystyle \widetilde{V} \leftarrow \sum_{h=1}^H w_h^2 \widetilde{V}_h.$
    \State Return $$\tilde p \pm z_{1-\alpha/2}\sqrt{\widetilde{V}},$$
    where $z_{1-\alpha/2}$ is the $({1-\alpha/2})$-quantile of the standard normal distribution.
\end{algorithmic}
\caption{Adding noise at the stratum level with public sample sizes, {StrNz-PubSz}}
\label{alg:strlevel}
\end{algorithm}

\subsubsection{Adding Noise at the Population Level}
An alternative strategy is to directly add noise to the non-private estimator of $p$, i.e., $\hat p$. The sensitivity of $\hat p$ is $$\Delta_p = \max_h\frac{w_h}{n_h}.$$
Since $w_h$ and $n_h$ are public, $\Delta_p$ can be made public.
We set $\tilde p = \hat p + {e}$ where $e$ is the Gaussian noise with standard deviation proportional to $\Delta_p$.
Then, the variance of $\tilde p$ becomes 
\begin{equation}
    {\operatorname{Var}}(\tilde p) = \operatorname{Var}(\hat p)+\operatorname{Var}(e).
    \label{eq:true var pop}
\end{equation}
Recall that $$\widehat{\operatorname{Var}}(\hat p) = \sum_{h=1}^H w_h^2 \left(\frac{N_h-n_h}{N_h} \right)\frac{\hat p_h(1-\hat p_h) }{n_h-1}$$ is an unbiased estimator for ${\operatorname{Var}}(\hat p)$.
To get a private estimator for ${\operatorname{Var}}(\tilde p)$, we again apply the Gaussian mechanism to $\widehat{\operatorname{Var}}(\hat p)$ based on the sensitivity of $\operatorname{Var}(\hat p)$:
\begin{equation*}
    \Delta_{V} = \max_h\left(\frac{C_h}{n_h}\left(1-\frac{1}{n_h}\right) \right),
\end{equation*}
where $C_h = w_h^2 \frac{N_h-n_h}{N_h}\frac{1}{n_h-1}$.

Since we apply the Gaussian mechanism twice, the total privacy budget should be divided into two parts: $\rho = \rho_1 + \rho_2$ to spend on adding noise to $\hat p$ and $\operatorname{Var}(\hat p)$, respectively.
The composition property (Proposition \ref{prop:composition}) ensures the total privacy level is $\rho$. The resulting algorithm, \textbf{PopNz-PubSz}, is presented in Algorithm \ref{alg:poplevel}.

\begin{algorithm}[!ht]
\begin{algorithmic}[1]
\Require $\hat p$, $\hat p_h$, $n_h$, $N_h$, $w_h$, $\rho$, $\alpha$.
\Ensure A $\rho$-zCDP $(1-\alpha)$ CI for Population Proportion. 
\State Split the budget $\rho = \rho_1 + \rho_2$. 

\State Generate noise $e \sim \mathcal{N}(0, \frac{\Delta_{ p}^2}{2\rho_1})$ where $\Delta_p = \max_h\frac{w_h}{n_h}$ and let $$\tilde p \leftarrow \hat p + e.$$
\State Generate noise $e_{V} \sim \mathcal{N}(0, \frac{\Delta_{V}^2}{2\rho_2})$ where $\Delta_{V} = \max_h\left(\frac{C_h}{n_h}\left(1-\frac{1}{n_h}\right) \right)$ and $C_h = w_h^2 \frac{N_h-n_h}{N_h}\frac{1}{n_h-1}$.  
Let $$\widetilde{V} \leftarrow \sum_{h=1}^H w_h^2 \left( \frac{N_h-n_h}{N_h} \right)\frac{\hat p_h(1-\hat p_h) }{n_h-1} + \frac{\Delta_{ p}^2}{2\rho_1} + e_{V}.$$
\State Return $$\tilde p \pm z_{1-\alpha/2}\sqrt{\widetilde{V}},$$
where $z_{1-\alpha/2}$ is the $({1-\alpha/2})$-quantile of the standard normal distribution.
\end{algorithmic}
\caption{Adding noise at the population level with public sample sizes, {PopNz-PubSz}}
\label{alg:poplevel}
\end{algorithm}

\begin{remark} When there are multiple strata with similar sampling rates, Algorithm \ref{alg:strlevel} yields a wider confidence interval for $p$ than Algorithm \ref{alg:poplevel} does, given the same privacy budget.
However, Algorithm \ref{alg:strlevel} additionally produces private confidence intervals for $\hat p_h$ which may be of interest for release. In Section 4.2.1, we compare the two algorithms quantitatively.
\end{remark}

\begin{remark}
Proportions are always between 0 and 1. One can post-process proportion estimates ($\tilde p_h$ in Algorithm \ref{alg:strlevel} and $\tilde p$ in Algorithm \ref{alg:poplevel}) by clipping them onto interval [0,1] without undermining privacy.
When the privacy budget is very small, the noisy proportion estimates are likely to lie outside [0,1]. Thus, clipping moves the confidence interval toward the truth
and a higher coverage rate will be observed. With a moderate or large budget, clipping does not make a noticeable difference. 

Lastly, one can always clip the output confidence intervals onto [0,1] without privacy loss.
\label{rmk:clipping}
\end{remark}

\subsection{Estimating with Private Sample Sizes}
\label{sub:privatesize}

When sample sizes are public information, keeping the proportions private is essentially protecting only the numerator, i.e., the counts of individuals with $y = 1$.
In some cases where subpopulation proportions also need to be estimated, 
Algorithms \ref{alg:strlevel} and \ref{alg:poplevel} with public sample sizes can lead to privacy leakage since the counts become the denominator. 
For example, one may ask the following queries: (1) what is the proportion of females in the US; and (2) what is the proportion of unemployed among females in the US. The number of females is the numerator in query (1) but becomes the denominator in query (2). Employing Algorithms \ref{alg:strlevel} or \ref{alg:poplevel} protects the number of females in query (1) but reveals it in query (2).
Therefore, a method of constructing confidence intervals for proportions to keep both the counts and sample sizes private is necessary for subpopulation analysis.  
We protect the sample sizes by adding noise to them. As a result, sample sizes are not fixed and therefore we need the unbounded notion of differential privacy with the adjacency relation $\sim_r$.

In the following, we extend Algorithm \ref{alg:strlevel} to serve the needs of privacy protection of sample sizes by adding noise at the stratum level.
(It is not obvious how to extend Algorithm \ref{alg:poplevel}, which adds noise at the population level.
It requires more sophisticated mechanisms; we briefly discuss in Section 6.)

To begin, we first consider the setting of simple random sampling. The idea is to add independent Gaussian noise to both the numerator and denominator for each stratum. For ease of notation, we first consider a single stratum with count $c = \sum_{i=1}^nx_i$. We know
\begin{equation*}
    c \sim \text{Hypergeometric}(N, K, n),
\end{equation*}
where $K$ is the total number of individuals with the attribute of interest. 
The count $c$ has mean $n\frac{K}{N} = np$ and variance $ n\frac{K}{N}\frac{N-K}{N}\frac{N-n}{N-1} = n^2\operatorname{Var}(\hat p)$.
By applying the Gaussian mechanism to $c$ and $n$ with privacy budgets $\rho_1$ and $\rho_2$, respectively, we have private count $\tilde c$ and sample size $\tilde n$:
\begin{equation*}
    \tilde c \mid c \sim \mathcal{N}(c,  \frac{1}{2\rho_1})
\end{equation*}
and
\begin{equation*}
    \tilde n  \sim \mathcal{N}(n, \frac{1}{2\rho_2}).
\end{equation*}
The unconditional mean and variance for $c$ are
\begin{equation*}
    \mathbb{E}(\tilde c) =  \mathbb{E}[\mathbb{E}(\tilde c \mid c)] = \mathbb{E}( c) = np
\end{equation*}
and
\begin{equation}
\begin{aligned}
    \operatorname{Var}(\tilde c) & = 
    \mathbb{E}\operatorname{Var}(\tilde c \mid c)+
    \operatorname{Var}\mathbb{E} (\tilde c \mid c) = \frac{1}{2\rho_1} + n^2 \operatorname{Var}(\hat p).
\end{aligned}
\label{eq:varofc}
\end{equation}
By the composition property of zCDP, we get a private estimator for proportion $p$, denoted by $\tilde p$, with privacy level $\rho = \rho_1 + \rho_2$. 
Since $\tilde c $ and $ \tilde n$ are independent variables, in principle,
\begin{equation}
    \mathbb{E}(\tilde p) =  \mathbb{E}\left(\frac{\tilde c}{\tilde n} \right) = \mathbb{E}(\tilde c) \mathbb{E}\left(\frac{1}{\tilde n} \right),
\label{eq: mean}
\end{equation}
and
\begin{equation}
\begin{aligned}
    \operatorname{Var}(\tilde p) & = \mathbb{E}\left(\frac{\tilde c}{\tilde n} \right)^2 - \left(\mathbb{E}\left(\frac{\tilde c}{\tilde n} \right)\right)^2  = \mathbb{E}\tilde c^2\mathbb{E}\left(\frac{1}{\tilde n^2}\right) - (\mathbb{E}\tilde c)^2\left(\mathbb{E}\frac{1}{\tilde n}\right)^2.
\end{aligned}
\label{eq:var}
\end{equation}

However, the moments of $\frac{1}{\tilde n}$ do not exist, thus neither do those of $\tilde p$.
Generally speaking, the ratio of two independent normal random variables has a heavy-tailed distribution with no moments \cite{Marsaglia2006ratios, ratiov2013onthe}. The shape of the distribution could be unimodal, bimodal, symmetric, or asymmetric. It is primarily determined by the coefficient of variation 
of the denominator variable, $CV$. When $CV$ is sufficiently small, a normal distribution approximation can be effective. 
It has been shown theoretically that a normal distribution can be arbitrarily close to the ratio variable in an interval centered at the ratio of means of two normal random variables \cite{ratiov2013onthe}.
Experiments have provided guidelines for when the normal approximation can be used.
For example, a simple rule is that the approximation is reasonable whenever $CV$ is less than 0.1 \cite{kuethe2000Imaging}. 
Other practical rules are mentioned in \cite{Hayya1975anote, Marsaglia2006ratios}.

\begin{algorithm}[!ht]
\begin{algorithmic}[1]
\Require $N_h$, $w_h$,  $n_h$, $c_h$, $\rho$, $\alpha$.
\Ensure A $\rho$-zCDP $(1-\alpha)$ CI  for the population proportion. 
\State Split the budget 
$\rho = \rho_1 + \rho_2$. 
\For{$h=1$ to $H$}
    \State Generate $e_{h}^{(1)} \sim \mathcal{N}(0, \frac{1}{2\rho_1})$ and $e_{h}^{(2)} \sim \mathcal{N}(0, \frac{1}{2\rho_2})$, and let
    \begin{equation}
        \begin{cases}
        \tilde c_h \leftarrow c_h + e_{h}^{(1)}\\
        \tilde n_h \leftarrow \max(n_h + e_{h}^{(2)}, 2)
    \end{cases}
    \label{eq:tilde n_h}
    \end{equation}

    \State Let 
    \begin{equation}
    \tilde p_h \leftarrow \frac{\tilde c_h}{\tilde n_h}
        \label{eq:tilde p_h}
    \end{equation}
    \State Let 
    \begin{equation}
        \widetilde V_h \leftarrow  \left(\frac{N_h-\tilde n_h}{N_h - 1}\right) \frac{\tilde p_h(1- \tilde p_h)}{\tilde n_h} + \frac{1}{2\rho_1 \tilde n_h^2} +  \frac{\tilde p_h^2}{2\rho_2 \tilde n_h^2}.
        \label{eq:tilde v_h}
    \end{equation}
\EndFor
\State Estimate $p$ by $\displaystyle \tilde p \leftarrow \sum_{h=1}^H w_h\tilde{p}_h$ and let $\displaystyle \widetilde{V} \leftarrow \sum_{h=1}^H w_h^2 \widetilde{V}_h.$

    \State Return $$\tilde p \pm z_{1-\alpha/2}\sqrt{\widetilde V},$$ where $z_{1-\alpha/2}$ is the $({1-\alpha/2})$-quantile of the standard normal distribution.
\end{algorithmic}
\caption{Adding noise at the stratum level with private sample sizes, {StrNz-PrivSz}}
\label{alg:{StrNz-PrivSz}}
\end{algorithm}

We take advantage of the normal approximation to construct a $\rho$-zCDP confidence interval for the proportion.
We present the following estimation strategy in Algorithm \ref{alg:{StrNz-PrivSz}}, \textbf{StrNz-PrivSz}. In the algorithm, we clip $\tilde n_h$ in (\ref{eq:tilde n_h}) to ensure the denominator is not too small. Otherwise, the ratio can be arbitrarily large. Such a post-processing step preserves the same privacy guarantee. 
For the theoretical analysis, we do not clip $ \tilde n_h$, but instead, we consider the ratio variable $\tilde c_h/\tilde n_h$ given the event $S_h = \{1\leq  \tilde n_h\leq 2n_h-1\}$ (a symmetric area around the mean of $\tilde n_h$). It is more convenient for the analysis.
The asymptotic behaviors of $\tilde p_h$ in the algorithm and $\tilde c_h/\tilde n_h \mid S_h$ are essentially the same since $\Pr(\tilde n_h \geq 2) \rightarrow 1$ and $\Pr(S_h) \rightarrow 1$ as $n \rightarrow \infty$. 
We will see the private estimator of the variance of $\tilde p_h$ we derive from the analysis of $\tilde c_h/\tilde n_h \mid S_h$ works well and the algorithm does achieve the desired coverage level. 

We consider the ratio of two independent normal variables. By independence, what remains unclear is the behavior of the reciprocal of a normal distribution. (We should mention that the Inverse Gaussian distribution is a different distribution than the reciprocal distribution we discuss here.) 
In Theorem \ref{prop:condmom}, we provide a general form of the Taylor series of conditional mean and variance of a reciprocal normal distribution. 
To our best knowledge, this is the first complete result of the Taylor series, with the remainder term specified.
We prove the theorem in the Proofs section.   
We use $k = 2$ to derive an estimator for the variance of $\tilde p$ Algorithm \ref{alg:{StrNz-PrivSz}}, which leads to (\ref{eq:tilde v_h}).

\begin{theorem}[Conditional mean and variance of a reciprocal normal distribution]
For random variable $X\sim \mathcal{N}(\mu, \sigma^2)$ where $\mu > 1$ and $\sigma^2 > 0$, given the event $S = \{1\leq  X \leq 2\mu-1\}$, for any integer $k>0$, the first two moments of $\frac{1}{X} \mid S$ have the following expansions:
\begin{equation}
\begin{aligned}
        \mathbb{E}\left( \frac{1}{X} \mid S \right) 
        = \frac{1}{\mu}\sum_{j = 0}^{k} \frac{(2j-1)!!\sigma^{2j}}{\mu^{2j}}   + O\left(\frac{\sigma^{2k+2}}{\mu^{2k+2}}\right)
\end{aligned}
\label{eq:condmean}
\end{equation}
and 
\begin{equation}
\begin{aligned}
       \mathbb{E}\left( \frac{1}{X^2} \mid S \right)
        & = \frac{1}{\mu^2}\sum_{j = 0}^{k} \frac{(2j+1)!!\sigma^{2j}}{\mu^{2j}}   + O\left(\frac{\sigma^{2k+2}}{\mu^{2k+2}}\right).
\end{aligned}
\label{eq:condvar}
\end{equation}
\label{prop:condmom}
\end{theorem}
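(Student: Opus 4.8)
The plan is to reduce everything to truncated moments of a standard normal via the standardization $X = \mu(1 + tZ)$, where $Z \sim \mathcal N(0,1)$ and $t = \sigma/\mu$. Under this substitution the conditioning event becomes $S = \{|Z| \le a\}$ with $a = (\mu-1)/\sigma$, which is symmetric about $0$. Moreover, on $S$ we have $X \ge 1$, so $1/X \le 1$ and $1/X^2 \le 1$ are bounded (this is precisely what rescues the reciprocal from its lack of unconditional moments), and $|tZ| = |\sigma Z/\mu| \le (\mu-1)/\mu < 1$, which is where the hypothesis $\mu > 1$ enters. The small parameter throughout is $t = \sigma/\mu$, and the asymptotics are understood in the regime $\sigma/\mu \to 0$ with $a \to \infty$ (as holds, e.g., when $\mu \to \infty$, the relevant case for the noisy denominators in the private-sample-size algorithm).

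First I would expand the reciprocal by a finite geometric series with exact remainder,
\begin{equation*}
\frac{1}{1+u} = \sum_{m=0}^{2k+1}(-u)^m + \frac{u^{2k+2}}{1+u}, \qquad \frac{1}{(1+u)^2} = \sum_{m=0}^{2k+1}(m+1)(-u)^m + \tilde R_k(u),
\end{equation*}
where the second line follows by differentiating the first and $\tilde R_k(u) = u^{2k+2}[(2k+3)+(2k+2)u]/(1+u)^2$. Setting $u = tZ$ and multiplying by $1/\mu$ and $1/\mu^2$ respectively, I take the conditional expectation given $S$. Because $S$ is symmetric in $Z$, every odd power $\expect(Z^{2m+1}\mid S)$ vanishes, so only the even-indexed terms $m = 2j$ survive, leaving $\frac1\mu\sum_{j=0}^k t^{2j}\,\expect(Z^{2j}\mid S)$ and $\frac1{\mu^2}\sum_{j=0}^k (2j+1)\,t^{2j}\,\expect(Z^{2j}\mid S)$, plus the two remainders.

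Next I would evaluate the truncated even moments. Integration by parts using $\phi'(z) = -z\phi(z)$ gives the recursion $m_{2j}(a) = (2j-1)\,m_{2j-2}(a) - 2a^{2j-1}\phi(a)$ for $m_{2j}(a) = \int_{-a}^a z^{2j}\phi(z)\,dz$, with $m_0(a) = 1 - 2\Phi(-a)$. Unrolling shows $m_{2j}(a) = (2j-1)!!\,m_0(a) - O(\mathrm{poly}(a)\phi(a))$, whence $\expect(Z^{2j}\mid S) = m_{2j}(a)/m_0(a) = (2j-1)!! + O(\mathrm{poly}(a)e^{-a^2/2})$. The two leading sums then become $\frac1\mu\sum_{j=0}^k (2j-1)!!\,t^{2j}$ and, using the identity $(2j+1)(2j-1)!! = (2j+1)!!$, $\frac1{\mu^2}\sum_{j=0}^k (2j+1)!!\,t^{2j}$, which are exactly the claimed series once $t^{2j} = \sigma^{2j}/\mu^{2j}$ is substituted.

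Finally I would bound the remainders. The key simplification is that $\frac1\mu\cdot\frac{(tZ)^{2k+2}}{1+tZ} = \frac{(tZ)^{2k+2}}{X}$ and $\frac1{\mu^2}\tilde R_k(tZ) = \frac{(tZ)^{2k+2}[(2k+3)+(2k+2)tZ]}{X^2}$; since $X \ge 1$ on $S$ and $|tZ| < 1$ there, both are at most a constant multiple of $(tZ)^{2k+2}$, so their conditional expectations are $\le C\,t^{2k+2}\,\expect(Z^{2k+2}\mid S) = O(t^{2k+2}) = O(\sigma^{2k+2}/\mu^{2k+2})$. The exponentially small truncation errors $O(\mathrm{poly}(a)e^{-a^2/2})$ are dominated by any power of $t$ in the operative regime and are absorbed into this same $O$-term. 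I expect the main obstacle to be exactly this last bookkeeping: one must check that conditioning on $S$ achieves two things at once, namely (i) making the reciprocal bounded so the remainder integrals converge despite $1/X$ having no unconditional moments, and (ii) forcing $a \to \infty$ so the truncated moments collapse to the full Gaussian double factorials with only negligible correction, and that these two effects do not interfere at the claimed order $O(\sigma^{2k+2}/\mu^{2k+2})$.
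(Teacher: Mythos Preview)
Your proposal is correct and follows essentially the same route as the paper: expand $1/X$ and $1/X^2$ in powers of $(X-\mu)/\mu$, kill the odd terms by the symmetry of $S$, replace the truncated even Gaussian moments by $(2j-1)!!\,\sigma^{2j}$ via the same integration-by-parts recursion (the paper states this as a separate lemma), and bound the remainder using that on $S$ one has $X\ge 1$ and $|X-\mu|/\mu<1$. Your use of a finite geometric expansion with explicit Lagrange-type remainder is a mild streamlining of the paper, which instead passes to the full infinite series (justified by dominated convergence with dominating function $g(x)=1/x$ on $[1,\mu]$ and $1/(2\mu-x)$ on $[\mu,2\mu-1]$) and then bounds the tail $\sum_{j>k}$ by a geometric sum in $((\mu-1)/\mu)^2$; the substance is identical.
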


\section{Theoretical Results}
In this section, we present the theoretical results of both privacy and asymptotic coverage guarantees. In addition, comparisons of the three algorithms in terms of variance and width ratios are discussed.

\subsection{Privacy and Coverage Guarantees}
\label{sec:thmresults}
Our theoretical results are two-fold. First, the proposed algorithms satisfy the desired privacy level under the 
corresponding adjacency relation, which is presented in Theorem \ref{thm:privacy}.
\begin{theorem}[Privacy Guarantee]
Algorithms \ref{alg:strlevel} and \ref{alg:poplevel} satisfy $\rho$-zCDP under the adjacency relation $\sim_{ss}$; Algorithm \ref{alg:{StrNz-PrivSz}} satisfies $\rho$-zCDP under the adjacency relation $\sim_{r}$.
\label{thm:privacy}
\end{theorem}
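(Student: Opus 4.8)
The plan is to reduce each algorithm to a combination of Gaussian mechanisms: the Gaussian Mechanism proposition supplies the base guarantee for each noisy release, Proposition \ref{prop:composition} accounts for the budget split whenever the data is touched more than once, and the post-processing proposition discharges every deterministic transformation that turns the noisy quantities into the returned interval. In all three algorithms the noisy releases feed into $\tilde p_h$, $\widetilde V_h$, $\tilde p$, $\widetilde V$ and finally the interval through data-independent arithmetic (including the constant bias corrections and the clip $\max(\cdot,2)$), so once the privacy of the raw noisy quantities is established, post-processing immediately transfers it to the interval.

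For Algorithms \ref{alg:strlevel} and \ref{alg:{\NoiStraPrivSize}} I would first exploit the stratum-partitioned structure of the adjacency relations. Under $\sim_{\adjpub}$ (resp.\ $\sim_{\adjpriv}$) any two adjacent databases differ only within a single stratum $h^\star$; hence, for a fixed adjacent pair, all noisy releases in strata $h \neq h^\star$ have identical input distributions and contribute zero R\'enyi divergence. This ``parallel composition'' observation is what lets each stratum spend the full budget rather than a $1/H$ share. For Algorithm \ref{alg:strlevel} the only per-stratum query is $\hat p_h$; a within-stratum substitution changes $\sum_i y_{hi}$ by at most one, so $\hat p_h$ has sensitivity $1/n_h$, and the noise variance $1/(2\rho n_h^2)$ is exactly $\Delta^2/(2\rho)$, giving $\rho$-zCDP for that stratum and hence overall. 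For Algorithm \ref{alg:{\NoiStraPrivSize}} a remove/add-one change in stratum $h^\star$ shifts $c_{h^\star}$ by at most one and $n_{h^\star}$ by exactly one, so both $c_h$ and $n_h$ have sensitivity one; the two Gaussian mechanisms with variances $1/(2\rho_1)$ and $1/(2\rho_2)$ are $\rho_1$- and $\rho_2$-zCDP, and Proposition \ref{prop:composition} combines them into $\rho_1+\rho_2=\rho$ for that stratum.

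For Algorithm \ref{alg:poplevel} the data is touched twice, so the argument is a direct application of composition: adding $e$ to $\hat p$ is $\rho_1$-zCDP and adding $e_V$ to $\widehat{\operatorname{Var}}(\hat p)$ is $\rho_2$-zCDP, and Proposition \ref{prop:composition} merges them into $\rho=\rho_1+\rho_2$. The sensitivity of $\hat p=\sum_h (w_h/n_h)\sum_i y_{hi}$ under $\sim_{\adjpub}$ is $\max_h w_h/n_h=\Delta_p$, which is immediate since only one stratum's proportion moves, by $1/n_h$. The one genuinely non-routine step, and what I expect to be the main obstacle, is verifying that $\widehat{\operatorname{Var}}(\hat p)=\sum_h C_h\,\hat p_h(1-\hat p_h)$ has sensitivity $\Delta_V=\max_h (C_h/n_h)(1-1/n_h)$. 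Writing $\hat p_{h^\star}=k/n_{h^\star}$ with integer count $k$, I would compute the discrete difference of $k(n_{h^\star}-k)/n_{h^\star}^2$ under $k\mapsto k\pm 1$, which in absolute value equals $|n_{h^\star}-1-2k|/n_{h^\star}^2$ and is at most $(n_{h^\star}-1)/n_{h^\star}^2=(1/n_{h^\star})(1-1/n_{h^\star})$; multiplying by $C_{h^\star}$ and maximizing over strata yields $\Delta_V$. Finally, the constant bias correction $\Delta_p^2/(2\rho_1)$ added inside $\widetilde V$ is data-independent and hence irrelevant to privacy, after which post-processing completes the argument.
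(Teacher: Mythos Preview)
Your proposal is correct and follows essentially the same route as the paper: per-stratum Gaussian mechanisms plus the ``only one stratum changes'' parallel-composition observation for Algorithms \ref{alg:strlevel} and \ref{alg:{\NoiStraPrivSize}}, and Gaussian mechanism plus Proposition~\ref{prop:composition} for Algorithm \ref{alg:poplevel}, with post-processing carrying privacy to the final interval. In fact you supply more than the paper does---the paper simply asserts the sensitivities $\Delta_p$ and $\Delta_V$ without justification, whereas your discrete-difference computation $|n_{h^\star}-1-2k|/n_{h^\star}^2\le (n_{h^\star}-1)/n_{h^\star}^2$ is exactly the missing verification of $\Delta_V$.
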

Proofs are presented in the Proofs section. 

On the other hand, for the confidence intervals to be useful, we provide theorems that guarantee the asymptotic coverage for each algorithm.
The central limit theorem (CLT) asserts (essentially) that the sample mean is asymptotically normally distributed regardless of the original distribution. Therefore, the sample mean can be used to construct a confidence interval for the population mean.
In the finite-population sampling designs we are considering, variants of CLTs can be found among 
\cite{Erdos2004ONTC, hajek:1959, Lehmann} and others. We restate a general form of the finite-population CLT for simple random sampling in Theorem \ref{thm: SRS_CLT} and provide asymptotic coverage guarantees in the following theorems. 

\begin{theorem}[Algorithm \ref{alg:strlevel}]
For a population of size $N$, let $p$ be the proportion in the population with the attribute of interest. 
Under stratified random sampling with sample sizes $n_h$ within the stratum of size $N_h$, $h = 1,.., H$, let $\displaystyle\widetilde V = \sum_{h=1}^H w_h^2\widetilde V_h$ where 
\begin{equation}
    \widetilde{V}_h = \left(\frac{N_h-n_h}{N_h}\right)\frac{\tilde p_h(1-\tilde p_h) +\frac{1}{2\rho n_h^2}}{n_h-1} + \frac{1}{2\rho n_h^2}.
\end{equation}
for $\rho>0$ as described in Algorithm \ref{alg:strlevel}.
If $\rho = \omega(1/n_h)$ for all $h$, then as $N_h - n_h$ and $n_h$ both tend to infinity for every stratum,

\begin{enumerate}[label=(\roman*)]
    \item
     $\widetilde{V} \stackrel{p}{\rightarrow} \operatorname{Var}(\tilde p)$, 
    more specifically,  for all $h$,
    \begin{equation*}
        \widetilde{V}_h-\operatorname{Var}(\tilde p_h) = \widehat{\operatorname{Var}}(\hat p_h) -  \operatorname{Var}(\hat p_h) + O_P\left(\frac{1}{\sqrt{\rho}n_h^2}\right) = O_P\left(\frac{1}{n_h^{3/2}}\right),
    \end{equation*}
    where $\widehat{\operatorname{Var}}(\hat p_h)$ is the non-private estimator for $\operatorname{Var}(\hat p_h)$;

    \medskip
    \item for $0 <\alpha < 1$, 
    \begin{equation}
    \Pr \left( p \in \left(\tilde p - z_{1-\alpha/2}\sqrt{\widetilde V}, \tilde p + z_{1-\alpha/2}\sqrt{\widetilde V}\right)\right)  \to 1 - \alpha.
    \end{equation}

\end{enumerate}

\label{thm:NoiStraPubSize}
\end{theorem}

\begin{theorem}[Algorithm \ref{alg:poplevel}] 
For a population of size $N$, let $p$ be the proportion in the population with the attribute of interest. 
Under stratified random sampling with sample sizes $n_h$ within the stratum of size $N_h$, $h = 1,.., H$, let 
\begin{equation}
    \widetilde{V} = \sum_{h=1}^H w_h^2 \left( \frac{N_h-n_h}{N_h} \right)\frac{\hat p_h(1-\hat p_h) }{n_h-1} + \frac{\Delta_{ p}^2}{2\rho_1} + e_{V}
\end{equation}
where $e_{V} \sim \mathcal{N}(0, \frac{\Delta_{V}^2}{2\rho_2})$
for $\rho_1,\rho_2 > 0$ as described in Algorithm \ref{alg:poplevel}.
If $\rho_1 = \omega(1/n_h)$ and $\rho_2 = \omega(1/n_h)$ for all $h$, then as $N_h - n_h$ and $n_h$ both tend to infinity for every stratum, 

\begin{enumerate}[label=(\roman*)]
        \item
        $\widetilde{V} \stackrel{p}{\rightarrow} \operatorname{Var}(\tilde p)$, 
        more specifically, 
        \begin{equation*}
            \widetilde{V}-\operatorname{Var}(\tilde p) = \widehat{\operatorname{Var}}(\hat p) -  \operatorname{Var}(\hat p) + O_P\left(\frac{1}{\max\limits_{h}\sqrt{\rho_2}n_h^2}\right) = O_P\left(\frac{1}{\max\limits_h n_h^{3/2}}\right);
        \end{equation*}
        where $\widehat{\operatorname{Var}}(\hat p)$ is the non-private estimator for $\operatorname{Var}(\hat p)$;
        \medskip
        \item for $0 <\alpha < 1$, 
    \begin{equation}
    \Pr \left( p \in \left(\tilde p - z_{1-\alpha/2}\sqrt{\widetilde V}, \tilde p + z_{1-\alpha/2}\sqrt{\widetilde V}\right)\right)  \to 1 - \alpha.
    \end{equation}
\end{enumerate}
\label{thm:NoiPopPubSize}
\end{theorem}

Proofs of the above theorems use the finite-population CLT and are provided in the Proofs section.

For Algorithm \ref{alg:{StrNz-PrivSz}}, the asymptotic behavior of $\tilde p$ is grounded on the normal approximation to a ratio variable in addition to the CLT. 
We revisit the result of normal approximation by \cite{ratiov2013onthe} in Theorem \ref{lem:ratio}. 
Based on the approximation, 
we have shown the consistency of $\tilde p$ in the case of simple random sampling.
\begin{theorem} Under simple random sampling, let $c$ be the count of individuals having the attribute of interest and $n$ be the sample size. The true population proportion is denoted by $p$. Let $\tilde p = \tilde c / \tilde n$ where $\tilde c \sim \mathcal{N}(c,  \frac{1}{2\rho_1})$ and  $\tilde n \sim \mathcal{N}(n,  \frac{1}{2\rho_2})$ for $\rho_1,\rho_2 >0$.
Under the conditions that $\rho_2 = \omega(1/n)$, $\rho_1 = \omega(1/n)$, $\tilde p$ is a consistent estimator for $p$.
\label{thm:consistency}
\end{theorem}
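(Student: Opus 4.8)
The plan is to sidestep the moment pathologies entirely by establishing consistency directly as convergence in probability, for which no moments of $1/\tilde n$ are needed. I would write the estimator in normalized form
\[
\tilde p = \frac{\tilde c}{\tilde n} = \frac{\tilde c / n}{\tilde n / n},
\]
and show separately that the numerator converges in probability to $p$ and the denominator to $1$. Because the denominator concentrates at a nonzero constant, the continuous mapping theorem (applied to $x \mapsto 1/x$, continuous at $x=1$) together with Slutsky's theorem then delivers $\tilde p \stackrel{p}{\rightarrow} p$.

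For the denominator, writing $\tilde n / n = 1 + (\tilde n - n)/n$ with $\tilde n - n \sim \mathcal{N}(0, \tfrac{1}{2\rho_2})$, the noise term has mean zero and variance $\frac{1}{2\rho_2 n^2} = \frac{1}{2n}\cdot\frac{1}{\rho_2 n}$, which tends to $0$ under the assumption $\rho_2 = \omega(1/n)$. Chebyshev's inequality then gives $(\tilde n - n)/n \stackrel{p}{\rightarrow} 0$, hence $\tilde n / n \stackrel{p}{\rightarrow} 1$.

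For the numerator, I would decompose $\tilde c / n = \hat p + e^{(1)}/n$, where $\hat p = c/n$ is the non-private sample proportion and $e^{(1)} \sim \mathcal{N}(0, \tfrac{1}{2\rho_1})$ is the count noise. Since $\operatorname{Var}(\hat p) = \frac{N-n}{N-1}\frac{p(1-p)}{n} \leq \frac{p(1-p)}{n} \to 0$ as $n \to \infty$, Chebyshev gives $\hat p \stackrel{p}{\rightarrow} p$; and $e^{(1)}/n$ has variance $\frac{1}{2\rho_1 n^2} \to 0$ under $\rho_1 = \omega(1/n)$, so it vanishes in probability as well. Thus $\tilde c / n \stackrel{p}{\rightarrow} p$. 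Combining this with $n/\tilde n \stackrel{p}{\rightarrow} 1$ via Slutsky applied to the product $(\tilde c/n)(n/\tilde n)$ yields $\tilde p \stackrel{p}{\rightarrow} p\cdot 1 = p$.

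The main obstacle is conceptual rather than computational: because $\tilde n$ is Gaussian and can take values arbitrarily close to $0$, the ratio $\tilde c/\tilde n$ has no finite moments (as the excerpt emphasizes), so one must resist any argument that manipulates $\expect(1/\tilde n)$ or $\operatorname{Var}(\tilde p)$ directly. The crux is recognizing that consistency concerns only convergence in probability, which is governed solely by how tightly $\tilde n/n$ concentrates around $1$; once the denominator is shown to stay bounded away from $0$ with probability tending to $1$, the heavy-tailed behavior of the ratio is immaterial, and the decomposition-plus-Slutsky argument goes through cleanly.
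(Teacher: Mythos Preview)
Your proposal is correct and takes a genuinely different, more elementary route than the paper. The paper conditions on the high-probability event $S=\{1\le \tilde n\le 2n-1\}$, invokes its Theorem on conditional moments of a reciprocal normal to expand $\expect(\tilde p\mid S)$ and $\operatorname{Var}(\tilde p\mid S)$ as $p+o(1/n)$ and $\operatorname{Var}(\hat p)+o(1/n)$, concludes $\tilde p\mid S \stackrel{p}{\to} p$ via Chebyshev, and then transfers to the unconditional statement using $\Pr(S)\to 1$. You instead normalize numerator and denominator by $n$ and apply Chebyshev plus Slutsky/continuous mapping directly, never touching any moment of $1/\tilde n$ at all. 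Your argument is shorter, self-contained, and sidesteps the Taylor-remainder bookkeeping entirely. The paper's heavier machinery is not gratuitous, though: the conditional-moment expansions are developed primarily to motivate the form of the variance estimator $\widetilde V_h$ in Algorithm~3 and the approximate-variance comparisons in Section~4.2, and consistency drops out as a corollary once that apparatus is in place. For the consistency statement in isolation, your decomposition is the cleaner proof.
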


With the foundation of the above consistency, we establish the asymptotic properties:
\begin{theorem}[Algorithm \ref{alg:{StrNz-PrivSz}}] 
For a population of size $N$, let $p$ be the proportion in the population with the attribute of interest. 
Under stratified random sampling with sample sizes $n_h$ within the stratum of size $N_h$, $h = 1,.., H$, 
let $\displaystyle \widetilde V = \sum_{h=1}^H w_h^2\widetilde V_h$ where
\begin{equation}
    \widetilde V_h = \left(\frac{N_h-\tilde n_h}{N_h - 1}\right) \frac{\tilde p_h(1- \tilde p_h)}{\tilde n_h} + \frac{1}{2\rho_1 \tilde n_h^2} +  \frac{\tilde p_h^2}{2\rho_2 \tilde n_h^2}
\end{equation}
for $\rho_1,\rho_2 >0$ as described in Algorithm \ref{alg:{StrNz-PrivSz}}.
If $\rho_1 = \omega(1/n_h)$ and $\rho_2 = \omega(1/n_h)$ for all $h$, then as $N_h - n_h$ and $n_h$ both tend to infinity for every stratum, 
\begin{enumerate}[label=(\roman*)]
    \item
     $\widetilde{V} \stackrel{p}{\rightarrow} \operatorname{Var}(\tilde p\mid S)$ where $S$ is an event with $\Pr(S) \to 1$,  more specifically, for all $h$,
    \begin{equation*}
        \tilde{V}_h - \operatorname{Var}(\tilde p_h \mid S) = \widehat{\operatorname{Var}}(\hat p_h) -  \operatorname{Var}(\hat p_h) + 
        O_p\left(\frac{1}{\rho_1n_h^2} + \frac{1}{\rho_2n_h^2}\right)= o_p\left(\frac{1}{n_h}\right),
    \end{equation*}
     where $S_h$ is an event with $\Pr(S_h) \to 1$;
    
    \medskip
    \item
    for $0 <\alpha < 1$, 
    \begin{equation}
    \Pr \left( p \in \left(\tilde p - z_{1-\alpha/2}\sqrt{\widetilde V}, \tilde p + z_{1-\alpha/2}\sqrt{\widetilde V}\right)\right)  \to 1 - \alpha.
    \end{equation}

\end{enumerate}
\label{thm:NoiStraPrivSize}
\end{theorem}

The event $S_h$ is discussed in Section \ref{sub:privatesize} and $S$ can be set to $\cap_h S_h$.
As the variance of both $\tilde p$ and $\tilde p_h$ does not exist, we resort to the conditional variance
under high probability events.
To prove Theorem \ref{thm:NoiStraPrivSize}, we start with a single stratum. We use a normal distribution (denoted by $p_h^*$) to approximate that of the proportion estimator $\tilde p_h$, with the distance between the two distribution vanishing to zero in an interval.
Then for multiple strata, we show that the linear combination of the normal variables (denoted by $p^*$) is  an 
 accurate 
 approximation to $\tilde p$. Last but not least, due to the consistency stated in Theorem \ref{thm:consistency}, the noisy estimator $\widetilde V$ is a consistent estimator for the variance of $p^*$. 
 Then, a Wald confidence interval can be constructed using $\tilde p$ and $\widetilde V$. Details are presented
in the Proofs section.

Note that, in addition to consistency for our estimates of the variance, the results above provide convergence rates. 
Compared to the estimation of variance in non-private settings, the additional biases are merely nuances given the conditions on $\rho$, $\rho_1$ and $\rho_2$.
In fact, we impose these conditions to ensure that the introduced noise does not dominate when estimating the variance.
In principle, these rates may be used in practice to adjust the length of confidence intervals accordingly, although we do not explore that direction here.

\subsection{Comparisons of Variances}

The theorems presented in Section 4.1 ensure that, under proper conditions, the desired coverage is achieved asymptotically. 
Therefore, to compare the performance of the different proposed confidence intervals, we compare their widths, which are determined by their variance estimates. In this section, we will analyze our variance estimates and compare the resulting widths to that of the non-private confidence interval.

\subsubsection{Extrinsic Variances}

To investigate how much additional uncertainty is caused by adding noise, we decompose the variances of the private estimators into two parts: (1) the inherent component coming from the estimation from the sampling data, i.e, $\operatorname{Var}(\hat p)$, and (2) the extrinsic component introduced by the added noise, written as $$V_{\text{ex}} \stackrel{def}{=} \operatorname{Var}(\tilde p) - \operatorname{Var}(\hat p).$$
Table \ref{tab:vex} provides the (approximate) variances of $\tilde p$ for three algorithms, where $w_h = \frac{N_h}{N}$ are the stratum weights. The variances are derived in the proofs of Theorems \ref{thm:NoiStraPubSize}, \ref{thm:NoiPopPubSize}, and \ref{thm:NoiStraPrivSize}. The additional variance terms, $V_{\text{ex}}$, can be rewritten in terms of $u_h \stackrel{def}{=} \frac{N_h}{n_h}$ instead of $w_h$, as shown in the second row of the table.
In fact, $u_h$ are called \textit{sampling weights} in the literature on survey sampling. A sample weight is defined as the number of individuals that each respondent in the sample is representing in the population.  It is the reciprocal of the sampling rate $\frac{n_h}{N_h}$ and
plays an important role in statistical inference for survey data \cite{Pfeffermann1993therole, DuMouchel1983using}. Understanding the relation between sampling weights and the variance of the noisy estimators is helpful for practitioners to make survey designs and the choice of algorithms.

\begin{table}[H]
\caption{(Approximate) variances of $\tilde p$. }
\setlength{\tabcolsep}{8pt}
\renewcommand{\arraystretch}{1.8}
\resizebox{\columnwidth}{!}{%
\begin{tabular}{@{}c|ccc@{}}
\toprule
Algorithm                      & {StrNz-PubSz}                                                                 & {PopNz-PubSz}                                                              & {StrNz-PrivSz} (approximate)                                                                                                                                \\ \midrule
$\operatorname{Var}(\tilde p)$& $\operatorname{Var}(\hat p) + \frac{1}{2\rho} \sum_{h = 1}^H \frac{w_h^2}{n_h^2}$ & $\operatorname{Var}(\hat p) + \frac{1}{2\rho_1} \max_{h} \frac{w_h^2}{n_h^2}$ & $\operatorname{Var}(\hat p) + \frac{1}{2\rho_1}\sum_{h = 1}^H \frac{w_h^2}{n_h^2} + \frac{1}{2\rho_2}\sum_{h = 1}^H \frac{w_h^2 p_h^2}{n_h^2}$ \\
$V_{\text{ex}}$                & $\frac{1}{2N^2}\sum_{h=1}^H\frac{ u^2_h}{\rho} $                                            & $\frac{1}{2N^2}\max_h\frac{ u^2_h}{\rho_1} $                                            & $\frac{1}{2N^2}\sum_{h=1}^H u^2_h (\frac{1}{\rho_1} + \frac{  p_h^2}{\rho_2})$      \\
\bottomrule
\end{tabular}}

\label{tab:vex}
\end{table}
With a fixed population size $N$ and a chosen privacy level $\rho$, the extra variances $V_{\text{ex}}$ induced by the added noise are primarily dictated by $u_h$.
In {PopNz-PubSz} where we add noise at the population level, $V_{\text{ex}}$ is solely determined by the largest sample weight among all strata. If noise is injected into each stratum, then sampling weights in all strata collectively affect $V_{\text{ex}}$.
In particular, for {StrNz-PrivSz}, $V_{\text{ex}}$ is impacted by $p_h$ additionally. 
For all three algorithms, smaller sampling weights lead to lower extrinsic variance. 

For comparison, we look at the ratio of  $V_{\text{ex}}$ with the budgeting $\rho_1 = \rho_2 = \rho/2$ for {PopNz-PubSz} and {StrNz-PrivSz}. The ratio of $V_{\text{ex}}$ for {StrNz-PubSz} to {PopNz-PubSz} is 
\begin{equation}
    \frac{\sum_{h = 1}^H u_h^2}{2 \max_h u_h^2}.
    \label{eq:weight1}
\end{equation}
Roughly speaking, when there are many strata, adding noise at the population level gives a smaller variance.
To compare {StrNz-PrivSz} and {StrNz-PubSz}, the ratio of $V_{\text{ex}}$ is
\begin{equation}
    \frac{2\sum_{h = 1}^H u_h^2(1+ p_h^2)}{\sum_{h = 1}^H u_h^2},
     \label{eq:weight2}
\end{equation}
which will always be greater than 2 (due to the cost it takes to protect sample sizes in {StrNz-PrivSz}) and at most 4.

\subsubsection{Comparing with Non-Private CI: One Stratum Case}
To assess the width in theory,
we also look at the confidence interval width ratios by comparing them to the non-private one. 
Since the parameters $N_h$, $n_h$, $p_h$, $\rho_h$ come into play together in the stratification setting, it is more practical to analyze the special case with one stratum.  

Let the theoretical width ratio (TWR) be
\begin{equation*}
    \text{TWR} = \sqrt{\frac{\operatorname{Var}(\tilde p)}{\operatorname{Var}(\hat p)}}.
\end{equation*}
In the implementation, the real width ratio (WR), defined as $\sqrt{\widetilde V / \operatorname{Var}(\hat p)}$, will be very close to TWR in that $\widetilde V$ is a consistent estimator for $\operatorname{Var}(\tilde p)$. Table \ref{tab:TWR} displays some relevant quantities. 
Note that $\frac{N-1}{N-n}$ is always less than 1 but tends to 1 when the population size is far larger than the sample size. 

\begin{table}[ht]
\caption{Theoretical width ratios and lower bounds. The budgeting $\rho_1 = \rho_2 = \rho/2$ are used for {PopNz-PubSz} and {StrNz-PrivSz}.}
\setlength{\tabcolsep}{8pt}
\renewcommand{\arraystretch}{2}
\resizebox{\columnwidth}{!}{%
\begin{tabular}{@{}c|ccc@{}}
\toprule
Algorithm                      & {StrNz-PubSz}                                  & {PopNz-PubSz}                                  & {StrNz-PrivSz}                                                                 \\ \midrule
$\tilde p$                     & $\hat p + \mathcal{N}(0, \frac{1}{2\rho n^2})$         & $\hat p + \mathcal{N}(0, \frac{1}{\rho  n^2})$      & $(c + \mathcal{N}(0, \frac{1}{\rho})) / (n + \mathcal{N}(0, \frac{1}{\rho}))$ \\
$\operatorname{Var}(\tilde p)$ & $\operatorname{Var}(\hat p) + \frac{1}{2\rho  n^2}$     & $\operatorname{Var}(\hat p) + \frac{1}{\rho  n^2} $ & $\operatorname{Var}(\hat p) +\frac{1 + p^2}{\rho n^2}$               \\
TWR                  & $\sqrt{1 + \frac{N-1}{N-n}\frac{1}{2p(1-p)n\rho}}$ & $\sqrt{1 + \frac{N-1}{N-n}\frac{1}{p(1-p)n\rho}}$ & $\sqrt{1 + \frac{N-1}{N-n}\frac{1+p^2}{p(1-p)n\rho}}$                              \\
Lower bound of TWR                & $\sqrt{1 + \frac{2}{n\rho}}$ & $\sqrt{1 + \frac{4}{n\rho}}$ & $\sqrt{1 + \frac{2(1+\sqrt{2})}{n\rho}}$                              \\ \bottomrule
\end{tabular}}
\label{tab:TWR}
\end{table}
We can obtain a lower bound for TWR by dropping the factor $\frac{N-1}{N-n}$ and minimizing over $p$.
We can see that the width ratio mainly depends on $p$ and the relative magnitude between $n$ and $\rho$.
If $p$ is extreme (tends to 0 or 1), TWR is drastically large; when $p$ is around 0.5, TWR is close to the lower bound. Also, the added noise induces a term involving $\rho$.
For example, under the regime $\rho = 1/n$, the three algorithms result in an interval of length at least $\sqrt{3} \approx 1.73$, $\sqrt{5} \approx 2.24$, and $\sqrt{3+2\sqrt{2}} \approx 2.41$ as wide, respectively. 
It is trivial that with one stratum, {StrNz-PubSz} produces a tighter confidence interval than {PopNz-PubSz} does in that the ratio of $V_{\text{ex}}$ in (\ref{eq:weight1}) 
is 1/2. However, {PopNz-PubSz} will outperform {StrNz-PubSz} once there are enough strata such that (\ref{eq:weight1}) is greater than 1. 

\section{Numerical Results} 
In this section, we conduct both simulation studies and applications to assess and illustrate the numerical performance of the proposed algorithms. The budgeting $\rho_1 = \rho_2 = \rho/2$ are used for {PopNz-PubSz} and {StrNz-PrivSz}. We clip the proportions $\tilde p_h$ onto $[0,1]$ as mentioned in Remark \ref{rmk:clipping}. 

\subsection{Simulations}

We set up a set of experiments to (1) check the normality of noisy estimators, and (2) evaluate the performance of the proposed confidence intervals by varying the number of strata $H$, the true population proportion $p$, and the privacy level $\rho$. 
To generate the data, we need to specify the strata sizes $N_h$ and the sampling rates $r_h$. The setup of these parameters is presented in Table \ref{tab:parameters}. 
We generate a proportion for each stratum to create heterogeneity across strata. The true population proportion is then calculated and reported in each experiment.
\begin{table}[H]
\caption{Parameter setup. The resulting sample sizes are between 60 and 160.}
\resizebox{\columnwidth}{!}{%
\begin{tabular}{@{}c|cl||cc|c@{}}
\toprule
Fixed parameter & Value / Distribution &  &  & Varying parameter & Value / Distribution \\ \cmidrule(r){1-2} \cmidrule(l){5-6} 
$\alpha$ & 0.1 &  &  & $H$ & 1 or 20 \\
$N_h$ & Uniform(1500, 2000) &  &  & $p_h$ & 0.5, Uniform(0.4, 0.6) or Uniform(0.05, 0.15)\\
$r_h$ & Uniform(0.04, 0.08) &  &  & $\rho$ & $1/\max(n_h)$ or specified in the axis of the plot \\ \bottomrule
\end{tabular}}

\label{tab:parameters}
\end{table}

\subsubsection{Normality Check}
We first check whether the distributions of $\tilde p$ in the three algorithms are reasonably close to the theoretical normal distributions with the corresponding means and variances.  
Figure \ref{fig: qq} displays the Q-Q plots of the theoretical distribution  of $\tilde p$ versus its sample distribution:
\begin{itemize}
    \item Non-private: $\displaystyle\mathcal{N}(p, \operatorname{Var}(\hat p))$;
    \item {StrNz-PubSz}: $\displaystyle\mathcal{N}(p, \operatorname{Var}(\tilde p))$ as $\operatorname{Var}(\tilde p)$ in (\ref{eq:true var stra});
    \item {PopNz-PubSz}: $\displaystyle\mathcal{N}(p, \operatorname{Var}(\tilde p))$ as $\operatorname{Var}(\tilde p)$ in (\ref{eq:true var pop});
    \item {StrNz-PrivSz}: $\mathcal{N}\left( p + \sum_{h = 1}^H  \frac{w_h p_h}{2 \rho_2 n^2_h}, \sum_{h=1}^H w_h^2 V_h \right)$ with $V_h$ specified in (\ref{eq:true varh priv}).
\end{itemize}
 Note that, $\tilde p$ in Algorithms {StrNz-PubSz} and {PopNz-PubSz} are unbiased for $p$ while $\tilde p$ in {StrNz-PrivSz} is not. 
Nevertheless, under the condition that $\rho_2 = \omega(1/n_h)$ in Theorem \ref{alg:{StrNz-PrivSz}}, the bias term $ \sum_{h = 1}^H  \frac{w_h p_h}{2 \rho_2 n^2_h}$ is negligible and thus we do not make a bias correction in Algorithm \ref{alg:{StrNz-PrivSz}}. We observe great alignments between the theoretical and experimental distributions, indicating that the private estimators in all three algorithms are indeed highly close to being normally distributed.
\begin{figure}
\includegraphics[scale = 0.40
    ]{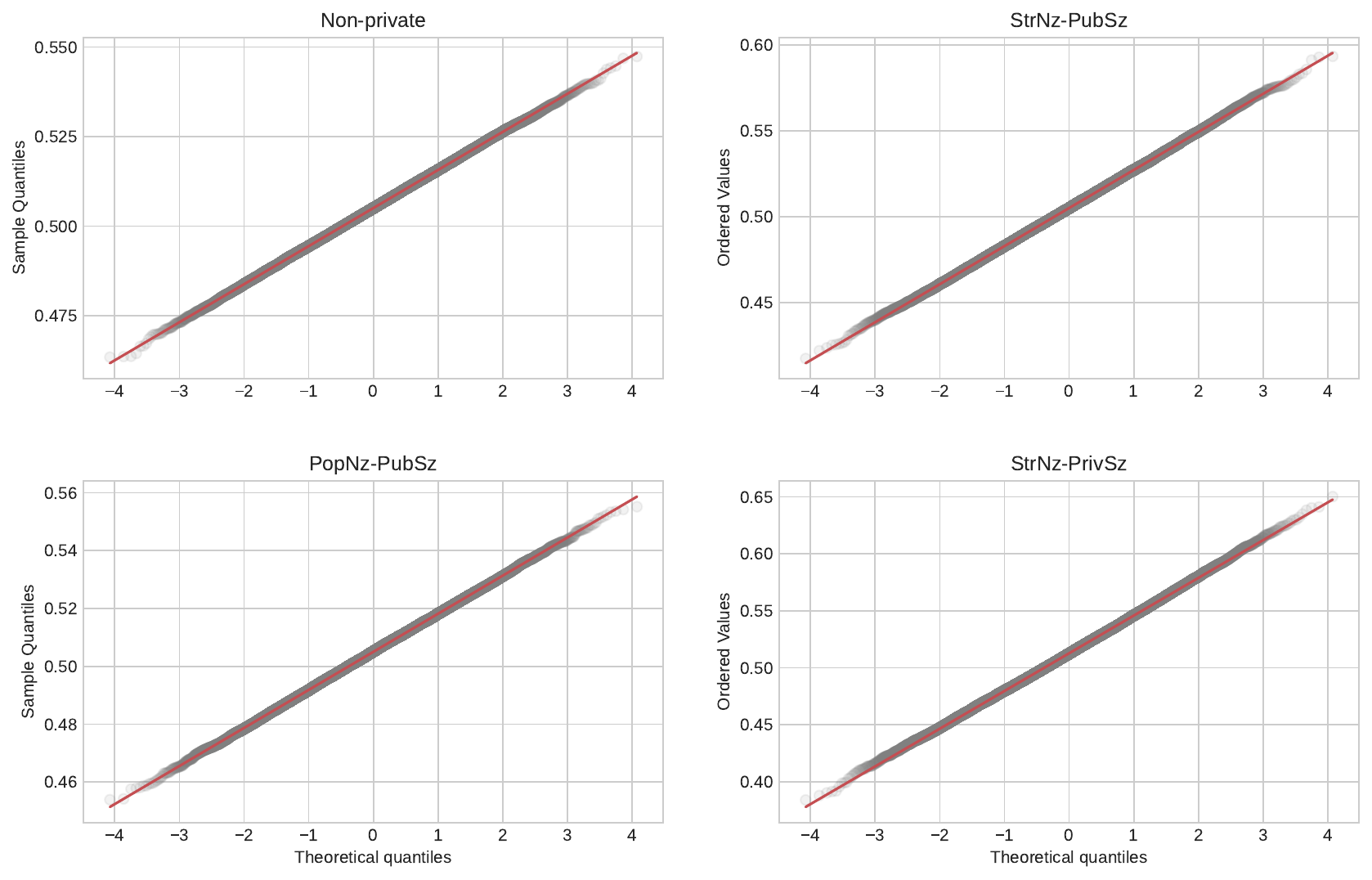} 
\caption{Q-Q plots: Theoretical versus sample distributions of $\tilde p$
    with 20 strata and $p = 0.505$ (resulting from $p_h \sim \textup{Uniform}(0.4, 0.6)$), based on 10,000 repetitions each.}
\label{fig: qq}
\end{figure}

\subsubsection{Varying Key Parameters}
Assured by the results of the normality check, we experiment with a wide range of the privacy budget, different numbers of strata, and true population proportions.

We examine the impact of $\rho$ on the performance of the three private estimators. 
The simulation is run on 10,000 repetitions and therefore the empirical coverage falling into $90\% \pm 0.006$ (departure of two standard deviations) is considered appropriate.
In Figure \ref{fig:coverage}, the empirical coverage is reasonable except that {StrNz-PrivSz} gives unnecessarily higher coverage when $\rho$ is smaller than around 0.005. 
This is because the budget is so small for the method that, with clipping, it covers the truth more often than needed. In this case, the confidence intervals are too wide to be as useful, as shown in Figure \ref{fig:width}.
For all three methods, the width grows as $\rho$ becomes smaller. However, the rates of width growth differ: in the multiple strata case we simulate, the width of {PopNz-PubSz} grows the slowest,
{StrNz-PrivSz} grows the fastest, and {StrNz-PubSz} is
in the middle.
Thus, the optimal privacy level should be chosen by taking into account the method, width, and coverage.  
For instance, if we want a 90\% of confidence level and width under 0.1, one can choose the value for $\rho$ as small as (1) $0.001$ for {PopNz-PubSz}, (2) $0.003$ for {StrNz-PubSz}, and (3) $0.01$ for {StrNz-PrivSz}.

\begin{figure}[ht]
     \centering
     \begin{subfigure}[b]{2.2in}
         \centering
         \includegraphics[width=2.2in]{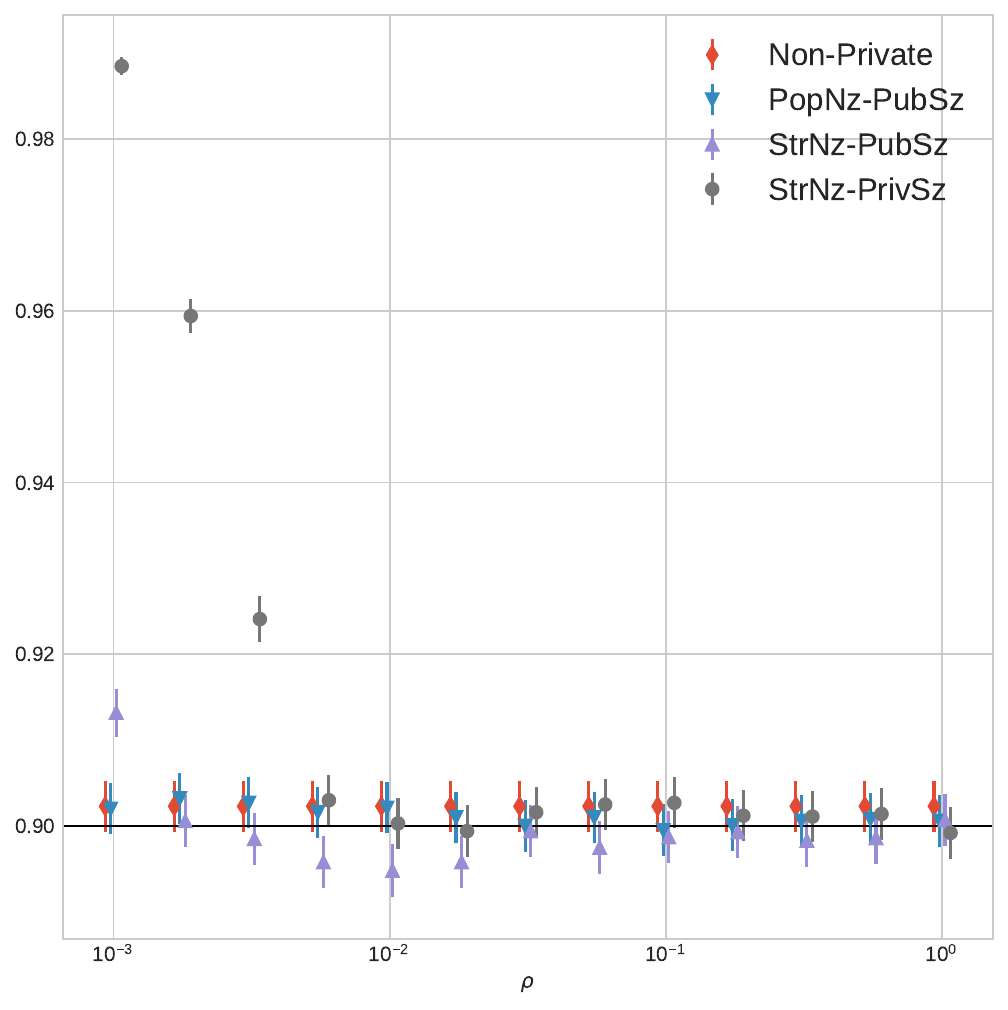}
         \caption{Empirical coverage}
         \label{fig:coverage}
     \end{subfigure}
     \hfill
     \begin{subfigure}[b]{2.4in}
         \centering
         \includegraphics[width=2.4in]{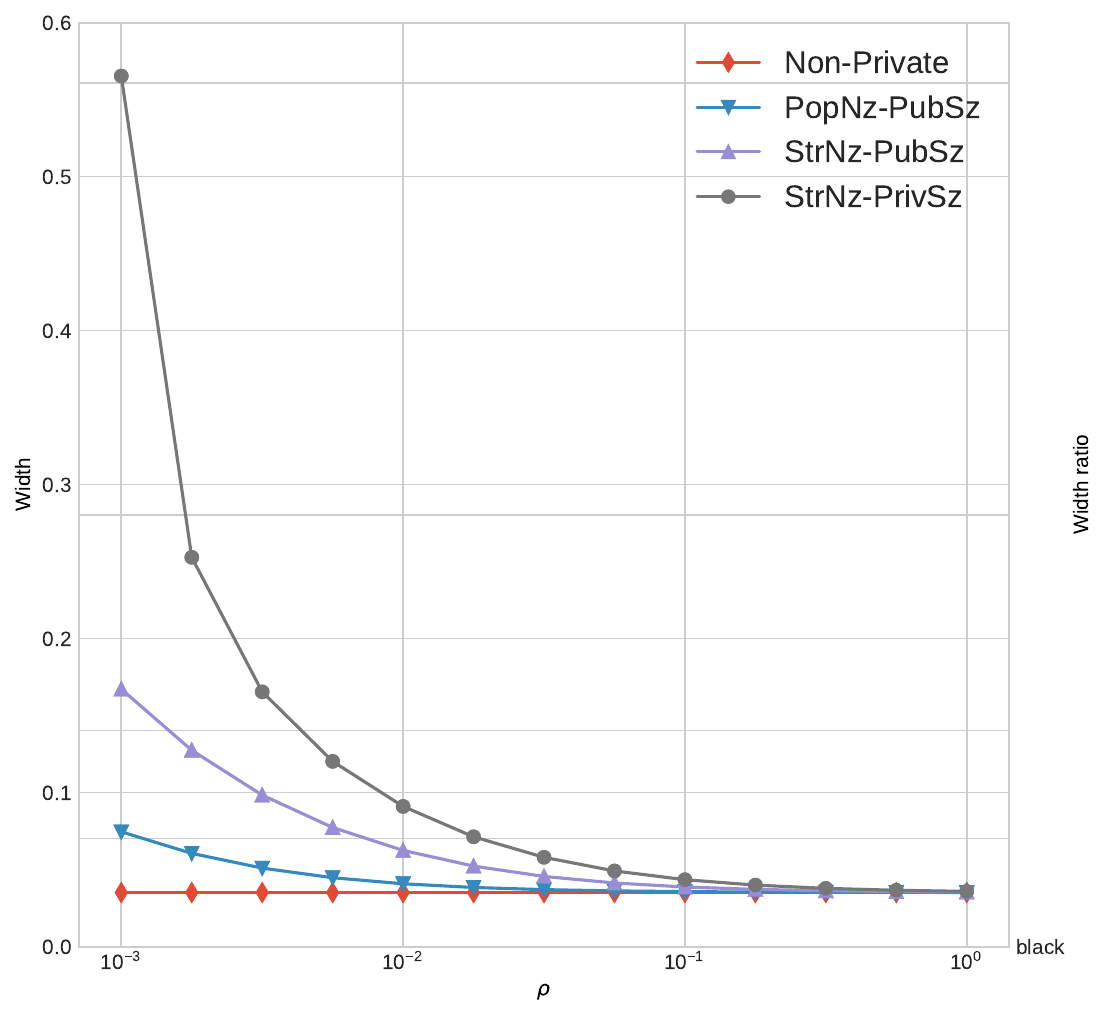}
         \caption{Width and ratio
         }
         \label{fig:width}
     \end{subfigure}

     \caption{Setup: 20 strata and $p = 0.505$ ($p_h$ $\sim$ \textup{Uniform}(0.4, 0.6)) with 10,000 repetitions. Figure (a) 
     is the empirical coverage with the black solid line indicating the nominal confidence level of 90\%. Error bars of one standard deviation are shown for coverage. The average width and width ratio are displayed in (b) with the non-private as the benchmark. Error bars of width are not visible in the plots and therefore not shown. }
    \label{fig:sim}
\end{figure}

In addition, Table \ref{tab:sim_results} 
shows the numerical results of three experiments with different combinations of the numbers of strata and the true population proportions.
The simulation in the middle panel shares the same setting as the experiment shown in Figure \ref{fig:sim} but has a fixed privacy level: $1/\max(n_h)$. This is an analogous regime to $\rho = 1/n$ (for simple random sampling) for multiple strata.
In the literature on differential privacy, the regime $\rho = 1/n$ for a simple random sample is often considered to understand how small $\rho$ can be as the sample size increases. Recall that a smaller $\rho$ means a higher privacy level.

As argued above, clipping $\tilde p_h$ (or $\tilde p$) onto [0,1] will yield better results in some cases. 
The conclusions coincide with the analyses in Section 4.
The empirical coverage of the three private ones in all simulations achieves the nominal level of 90\%, as guaranteed by Theorems \ref{thm:NoiStraPubSize}, \ref{thm:NoiPopPubSize}, and \ref{thm:NoiStraPrivSize}. The case where {StrNz-PrivSz} gives a 91.9\% confidence level in the bottom panel is due to clipping. (When the stratum proportions are close to the extreme, clipping 
is more noticeable.)

\begin{table}[ht]
\caption{Simulation results under $\rho = 1/n$ (or $\rho = 1/\max(n_h)$) regime based on 10,000 repetitions. The strata sizes and sampling rates are drawn as described in Table \ref{tab:parameters}. For the multiple strata case, the resulting sample sizes in  $n_h$ range from 72 to 152, and $\rho$ is set to be $1 / 152  \approx 6.58 \times 10^{-3}$. For the one-stratum case, we set the sample size to 152 so that we have the same level of privacy.}
\setlength{\tabcolsep}{13pt}
\renewcommand{\arraystretch}{1.25}
\resizebox{\columnwidth}{!}{%
\begin{tabular}{@{}c|cccc@{}}
\toprule
 & Non-Private & {StrNz-PubSz} & {PopNz-PubSz} & {StrNz-PrivSz} \\ \midrule
 & \multicolumn{4}{c}{$1$ stratum,  $p = 0.5$} \\
coverage & 0.893 & 0.901 & 0.894 & 0.901 \\
coverage SD & 3.09$\times 10^{-3}$ & 2.99$\times 10^{-3}$ & 3.08$\times 10^{-3}$ & 2.99$\times 10^{-3}$ \\
width & 0.127 & 0.228 & 0.295 & 0.327 \\
width SD & $5.47 \times 10^{-4} $ & $9.85 \times 10^{-4}$  & $ 8.89 \times10^{-3}$ & $ 3.15 \times 10^{-2}$ \\
CI & (0.436, 0.564) & (0.386, 0.614) & (0.352, 0.648) & (0.34, 0.667) \\
WR & 1 & 1.786 & 2.318 & 2.567 \\ \cmidrule(l){1-5} 
\textbf{} 
& \multicolumn{4}{c}{$20$ strata,  $p = 0.505$ ($p_h \sim \text{Uniform}(0.4, 0.6)$)} \\
coverage & 0.902 & 0.895 & 0.902 & 0.902 \\
coverage SD & 2.97$\times 10^{-3}$ & 3.07$\times 10^{-3}$ & 2.97$\times 10^{-3}$ & 2.97$\times 10^{-3}$ \\
width & 0.035 & 0.073 & 0.043 & 0.111 \\
width SD & $1.08 \times 10^{-4} $ & $1.58 \times 10^{-4}$  & $ 5.87 \times10^{-4}$ & $ 5.22 \times 10^{-3}$ \\
CI & (0.488, 0.523) & (0.469, 0.542) & (0.483, 0.527) & (0.457, 0.568) \\
WR & 1 & 2.074 & 1.239 & 3.168 \\
\cmidrule(l){1-5} 
 & \multicolumn{4}{c}{$20$ strata,  $p = 0.103$ ($p_h \sim \text{Uniform}(0.05, 0.15)$)} \\
coverage & 0.902 & 0.919 & 0.904 & 0.899 \\
coverage SD & 2.97$\times 10^{-3}$ & 2.73$\times 10^{-3}$ & 2.95$\times 10^{-3}$ & 3.01$\times 10^{-3}$ \\
width & 0.021 & 0.067 & 0.033 & 0.096 \\
width SD & $6.17 \times 10^{-4} $ & $5.21 \times 10^{-4}$  & $ 8.71 \times10^{-4}$ & $ 3.94 \times 10^{-3}$ \\
CI & (0.092, 0.113) & (0.073, 0.143) & (0.086, 0.119) & (0.072, 0.168) \\
WR & 1 & 3.189 & 1.571 & 4.563 \\
 \bottomrule
\end{tabular}}

\label{tab:sim_results}
\end{table}

The average width and width ratio (WR) varies. 
With one single stratum, WRs are near the lower bounds of theoretical width ratios (TWR) given in Section 4.2.2, which suggests that the lower bounds are almost tight. {StrNz-PubSz} gives a narrower CI than {PopNz-PubSz} with one stratum. 
But with more strata, {PopNz-PubSz} outperforms {StrNz-PubSz} in terms of WR. Having more strata means splitting the total privacy budget into smaller portions, which leads to adding more noise on the whole. The CI needs to be wider to achieve the same confidence level.
As for {StrNz-PrivSz}, however, it always yields the widest CI due to the additional price it pays to protect sample sizes simultaneously. 
On the other hand, with the same number of strata (20 here), we see that more extreme $p_h$ leads to a larger WR than $p_h$ in the middle range. This is because the factor $p_h(1-p_h)$ comes into play as $p(1-p)$ does in TWR in Table \ref{tab:TWR} for the one stratum case.

We also provide the sample standard deviation of the widths (width SD). In general, the non-private method results in a smaller standard deviation than the private ones. In some cases, clipping helps reduce 
the width SD for the private algorithms. With the same privacy level, there is more fluctuation in width for {PopNz-PubSz} compared to {StrNz-PubSz}. This is because we use one-half of the privacy budget and directly add noise to the variance estimate.
As expected, {StrNz-PrivSz} has the largest width SD since the magnitude of width is the largest and the ratio variable is heavy-tailed by design.
Nevertheless, compared to the width, the width SD for all methods is so small that it does not compromise the effectiveness of the confidence interval.

\subsection{Applications}
In this section, we apply the proposed methods to the 1940 Census full enumeration from IPUMS USA \cite{1940census} and evaluate the performance of three differentially private confidence intervals. 
To conduct stratified random sampling on the data set, the state-level geographical variable ``STATEICP'' (49 categories, constituting the then-48 states and Washington, D.C.) is used for stratification. 
Under stratified random sampling with $H=49$ strata,
we estimate the national unemployment rate for the first application. In the second application, we are interested in studying the discrepancy in income levels between black and white men.

\subsubsection{Confidence Intervals for the Unemployment Rate}
As an important indicator of the status of the national economy, the unemployment rate is the percentage of unemployed workers in the total labor force consisting of both the employed and unemployed. 
Thus, we consider all the individuals who are either employed or unemployed as the whole population.
In the 1940 Census data set, the binary characteristic ``EMPSTAT'' represents employment status. The full enumeration is considered the truth and the true population proportion is $p = 9.346\%$. 
To carry out stratified random sampling, sample sizes or sampling rates are selected for all 49 strata. For modern relevance, we simulate in a manner intended to mimic the canonical design implemented in the current American Community Survey (ACS), by choosing a typical range of sampling rates used in ACS which is $[0.5\%, 15\%]$. See Table \ref{tb:sampling_rates1} for detail.

\begin{table}[H]
\caption{Sampling rates}
\label{tb:sampling_rates1}
\begin{tabular}{@{}l|l@{}}
\toprule
Stratum size                          & Sampling rate \\ \midrule
$n_h \leq 5 \times 10^4$                    & $15\%$        \\
$5\times 10^4 < n_h  \leq 10^5$      & $10\%$        \\
$10^5 < n_h  \leq 5\times 10^5$     & $5\%$         \\
$5\times 10^5 < n_h  \leq  10^6$  & $2\%$         \\
$10^6< n_h  \leq 5\times 10^6$ & $1\%$         \\
$n_h >5\times 10^6$                    & $0.5\%$       \\ \bottomrule
\end{tabular}

\end{table}

To apply and assess the proposed algorithms, we experiment with a wide range of small privacy budgets: $\rho \in [10^{-6}, 10^{-3}]$. Each method is repeated 10,000 times and the empirical coverage, the average CI width, and the average CI width ratio (WR)
are computed.
As shown in Figure \ref{fig:app1_cov}, the empirical coverage is always around the nominal level which is chosen at the level of 90\% for the whole range of privacy levels. In Figure \ref{fig:app1_width}, the CI width and CI width ratio with the non-private CI as the benchmark, share the same shape. Even when the CI given by {StrNz-PrivSz} is 8 times the non-private CI width, the CI width is only 0.01 due to the large sample size. Both CI width and width ratio should be taken into account when choosing an optimal privacy level.

\begin{figure}[ht]
     \centering
     \begin{subfigure}[b]{2.2in}
         \centering
         \includegraphics[width=2.2in]{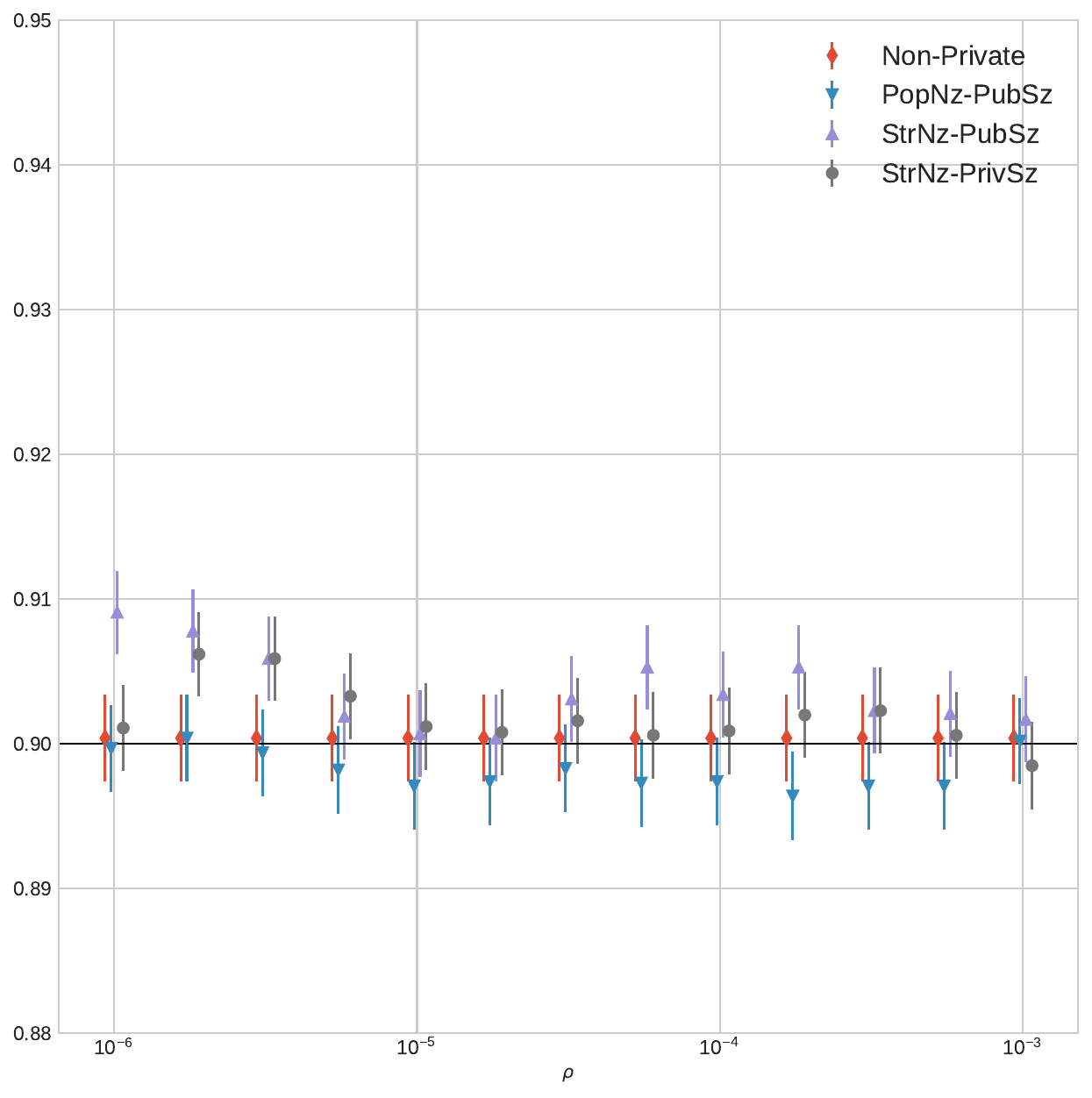}
         \caption{Empirical coverage}
         \label{fig:app1_cov}
     \end{subfigure}
     \hfill
     \begin{subfigure}[b]{2.4in}
         \centering
         \includegraphics[width=2.4in]{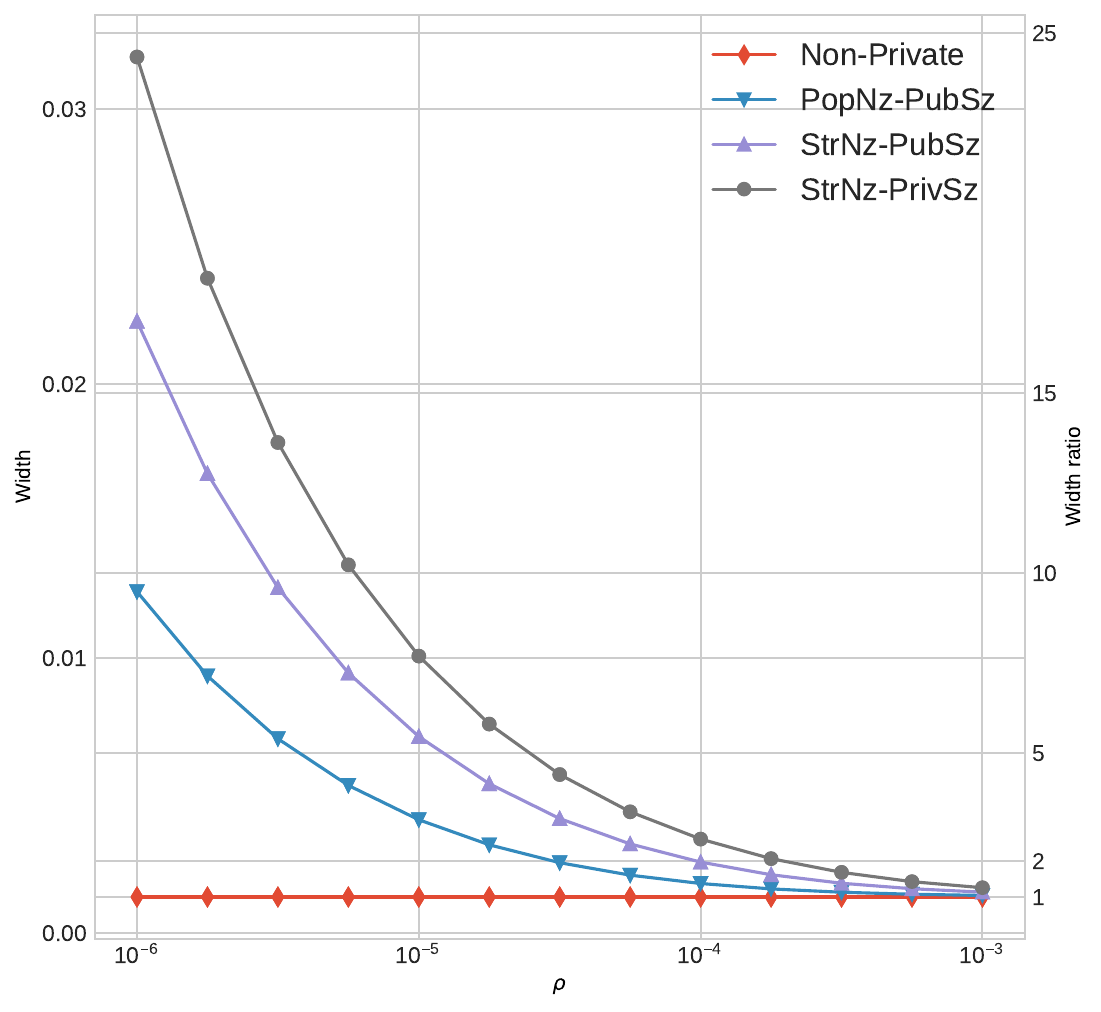}
         \caption{Width and ratio}
         \label{fig:app1_width}
     \end{subfigure}
     \caption{The empirical coverage with error bars, average width and width ratio of DP-CIs of the unemployment rate.}
    \label{fig:app1}
\end{figure}

\subsubsection{Confidence Intervals for the Difference in Income Level}
In the second application, we want to investigate whether there was a discrepancy between the income levels of white males (population 1) and that of black males (population 2). Note that only those who had valid income numbers in the 1940 Census are considered. Since the poverty thresholds were not developed until the 1960s and thus are not available for the 1940 data, the national income average is used as a threshold instead. We are interested in examining the difference in subpopulation proportions of those whose income levels passed this threshold. 

The geographic feature ``STATEICP'' is used for stratification, yielding 49 strata, with stratum size ranges of $ (41838, 4621442)$ for the population of white males and $(50, 309214)$ for the population of black males. Sampling rates are adaptively chosen based on stratum sizes. For the population of white males, the range of sampling rates is also $[0.5\%, 15\%]$, whereas the range of sampling rates is $[0.5\%, 30\%]$ for the population of black males given its small stratum sizes.
Additionally, to allow solid approximations based on the asymptotic results, we impose that the sample sizes are adjusted to be 50 if the sampling rates give smaller sizes than 50. See Table \ref{tab:rates_app2}
 for detail.

\begin{table}[!htb]
\caption{Sampling rates for two populations. Stratum sizes $n_h \in (4.1\times 10^4, 4.7\times 10^6)$ for the population of white males and stratum sizes $n_h \in (50, 3.1 \times 10^5)$ for the population of black males. *The sample size will be adjusted to be 50 if the above sampling rate results in a size smaller than 50.} 
\resizebox{1.1\columnwidth}{!}{%
    \begin{minipage}[c]{.5\linewidth}
      \centering 
        \begin{tabular}{@{}l|l@{}}
\toprule
Stratum size $N_h $ of white males                     & Sampling rate \\ \midrule
$N_h \leq 5 \times 10^4$                    & $15\%$        \\
$ 5 \times 10^4 < N_h \leq  10^5$      & $10\%$        \\
$ 10^5 < N_h\leq 5 \times 10^5$     & $5\%$         \\
$5 \times 10^5 < N_h \leq  10^6$  & $2\%$         \\
$ 10^6 < N_h \leq 4 \times 10^6$ & $1\%$         \\
$N_h> 4 \times 10^6$                    & $0.5\%$       \\ \bottomrule
\end{tabular}
    \end{minipage}%
    \begin{minipage}[c]{0.8\linewidth}
      \centering
        \begin{tabular}{@{}l|l@{}}
\toprule
Stratum size $N_h $ of black males                & Sampling rate* \\ \midrule
$N_h \leq 500$                & $30\%$        \\
$500 < N_h \leq 5 \times 10^3$        & $15\%$        \\
$5 \times 10^3< N_h \leq 10^4$   & $5\%$         \\
$10^4 < N_h \leq 2 \times 10^4$  & $2\%$         \\
$2 \times 10^4 < N_h \leq 3 \times 10^4$ & $1\%$         \\
$n_h > 3 \times 10^4$                & $0.5\%$       \\ \bottomrule
\end{tabular}
    \end{minipage} 
    }
\label{tab:rates_app2}
\end{table}

Let $p_1$ and $p_2$ denote the proportions of eligible individuals who earned more than the national average income level \$442.12.
The true values of proportions are $p_1 = 49.0223\%$ and $p_2 = 29.5152\%$.
Let $ p_{\text{diff}} =  p_1 -  p_2$, then the true difference in these two proportions is  $ p_{\text{diff}} = 19.5071\%$.
By the additivity of two independent normal distributions, naturally, we use the following differentially private CI:
\begin{equation}
    \tilde p_{\text{diff}} + z_{1-\alpha/2} \sqrt{\widetilde{\operatorname{V}}(\tilde p_{\text{diff}})},
\end{equation}
where $\widetilde{V}(\cdot)$ denotes a private estimator of variance, $\tilde p_{\text{diff}}$ is defined as $\tilde p_1 - \tilde p_2$ and $$ \widetilde{\operatorname{V}}(p_{\text{diff}}) =
\widetilde{\operatorname{V}}(p_1) + \widetilde{\operatorname{V}}(p_2)
.$$

\begin{figure}[ht]
     \centering
     \begin{subfigure}[b]{2.2in}
         \centering
         \includegraphics[width=2.2in]{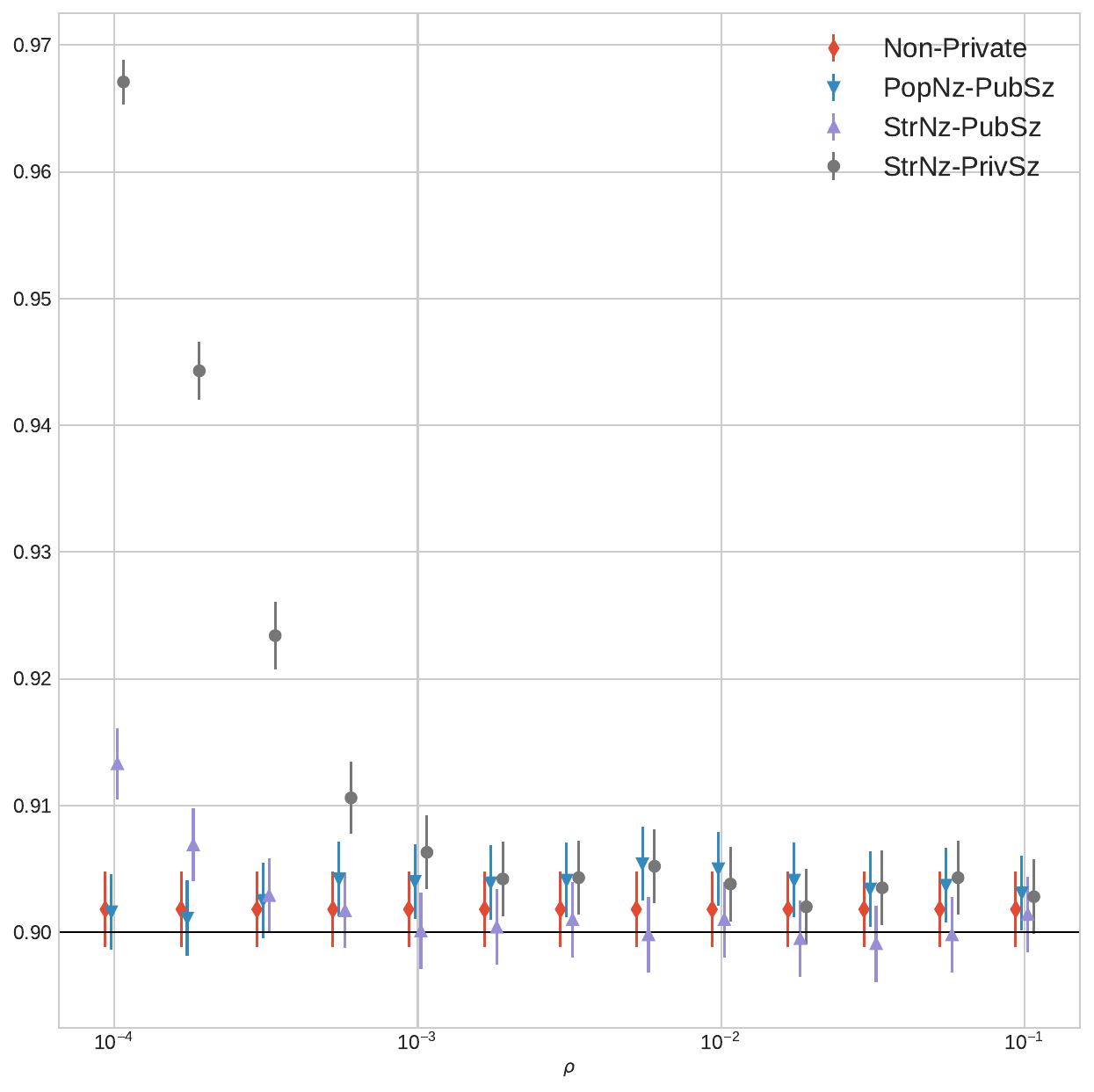}
         \caption{Empirical coverage}
         \label{fig:app2_cov}
     \end{subfigure}
     \hfill
     \begin{subfigure}[b]{2.4in}
         \centering
         \includegraphics[width=2.4in]{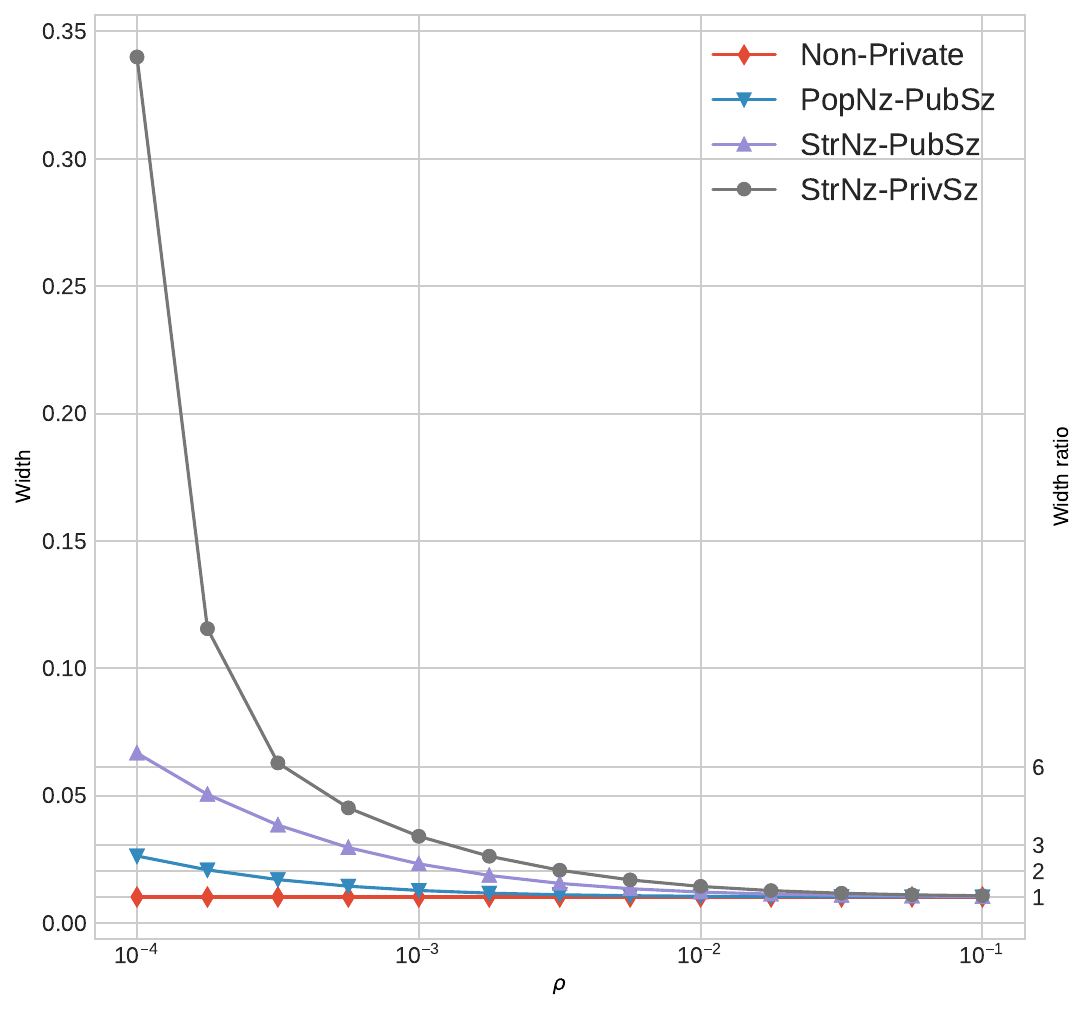}
         \caption{Width and ratio
         }
         \label{fig:app2_width}
     \end{subfigure}
     \caption{The empirical coverage with error bars, average width and width ratio of DP-CIs of the difference of the above-national-income-level proportions between black and white males with valid income values.}
    \label{fig:app2}
\end{figure}

In Figure \ref{fig:app2}, similar patterns are observed in this application as in the first. All CIs have empirical coverage around/above the nominal confidence level as in the simulation study in Section 5.1.2. The phenomenon of higher coverage is due to small $\rho$ and effective clipping. 
When the range of stratum sizes is large (it is (50, 309214) in this application), that is, when the stratum sizes are very different, 
a large privacy budget $\rho$ should be chosen. The choice of a small $\rho$ harms the estimates of small-sized strata. We advise that the smallest $\rho$ be chosen given the tolerance of uncertainty in terms of width and/or width ratio. 
For example, if the accuracy requirement is that the width should be under 0.05 or WR under 5, then the  best choices of $\rho$ among the experiments in Figure \ref{fig:app2_width} are (1) $0.0001$ for {PopNz-PubSz}, (2) 0.0018 for {StrNz-PubSz}, and (3) 0.0056 for {StrNz-PrivSz}.

\section{Discussion}
We have designed three algorithms to construct confidence intervals for the population proportion under stratified random sampling with zero concentrated differential privacy guarantees. We consider both the case where the sample sizes are public and the case where they are private information.
Theoretical results including privacy guarantees and asymptotic properties are established. With proper conditions on the relation between the privacy budget and sample sizes, as stated in the theorems, the resulting confidence intervals will achieve the desired coverage asymptotically, and the width tends to be that of a non-private confidence interval when the sample sizes go to infinity.

In the simulation studies and two applications, we have experimented with a wide range of privacy budgets under a variety of parameter setups. The three algorithms always perform well in terms of empirical coverage. The width and width ratio are in a reasonable range even under the strict regime where $\rho = 1/\max_h{n_h}$.
Typically in practice, a constant between 0.001 to 10 is chosen to be the privacy budget. According to our experiments, with the choice of the smallest budget in this range, 0.001, the three algorithms still have fairly good results even when the smallest stratum has only a size 50 (as demonstrated in the second application). 

The comparative analysis of the three algorithms in Section 4.2 gives actionable guidance to practitioners. When releasing the population proportion is the only goal and there are enough strata (such that  Eq.(\ref{eq:weight1}) regarding sample weights is greater than 1), {PopNz-PubSz} is the better option. However, if stratum proportions should also be released or there are just a few strata, {StrNz-PubSz} is preferable.
On the other hand, when the population proportion and sample sizes must be protected simultaneously, {StrNz-PrivSz} is the only algorithm presented in this paper. {StrNz-PrivSz}, compared to the case with public sample sizes, needs a larger budget to meet the same width requirement on account of the additional cost of protecting sample sizes.

There are a few open questions worth considering for future research. 
In this paper, we discuss the classic case where the number of strata is fixed, and the sample sizes tend to infinity. In principle, asymptotic normality is also valid in other settings with finite sample sizes. For example, it has been shown in the non-private setting that as the total sample size $N\to\infty$ with many small samples or a few large samples, or some combination thereof, central limit theorems hold under certain (complex) conditions \citep{Bickel1984}.
Under the constraints of differential privacy, we have shown that the trade-off between the privacy and accuracy (a.k.a., utility) of DP-CIs depends on the smallest sample, i.e, recall the condition, $\rho = \omega(1/n_h)$ for all $h$, in Section \ref{sec:thmresults}.
The overall privacy loss is determined by the largest privacy loss among all strata.
However, when the strata sizes $N_h$ remain finite, the weights ($w_h = N_h/N$) of these strata tend to 0 as $N\to\infty$. Therefore, the noise injected into these small strata should not harm the overall accuracy of the intervals if $N$ is sufficiently large.

More interestingly, we do not provide `PopNz-PrivSz'  -- an analogous algorithm to {PopNz-PubSz} for the private sample sizes case. To protect both the population proportion and the sample sizes, the direct addition of noise to the non-private aggregated estimator is not plausible. One should consider more sophisticated mechanisms other than directly adding noise to the statistics. If `PopNz-PrivSz' were proposed, we shall expect it to yield a narrower confidence interval since we only need to publish the private population proportion without being able to provide private confidence intervals for stratum proportions at the same time. 

Another direction for future research would be optimal budget allocation. We do not discuss how to best divide the total budget for {PopNz-PubSz} or {StrNz-PrivSz}. 
Budgeting for the composed application of the algorithms may also be of interest, like in Section 5.2.2 where we apply the algorithms twice for two independent populations. 

Lastly, one broad direction is to develop the differentially private versions for other alternatives to the basic Wald interval, such as the Wilson Interval, Jeffreys interval, etc.(see~\cite{Franco2019ComparativeSO} for a comparative summary of seven such types of confidence intervals for proportions). Many of these latter are specifically designed for the case of small sample sizes, which we do not consider here and for which we expect fundamentally different approaches to differential privacy likely to be necessary.

\appendix

\section{Proofs}\label{app}
\renewcommand{\thesection}{\Alph{section}}

\subsection{Proof of Theorem \ref{prop:condmom}}
\begin{lemma}
Let $X \sim \mathcal{N}(\mu, \sigma^2)$ and $S = \{\mu-a\leq X \leq \mu + a\}$. For any $a>0$ and an integer $k\geq 1$, the conditional even moments
\begin{equation}
    \mathbb{E}[(X - \mu)^{2k}\mid S] =  \sigma^{2k}(2k-1)!!-O\left(e^{-\frac{a^2}{2\sigma^2}} a^{2k-1}\right),
\end{equation}
where the big-O hides a constant depending on $\sigma$ and $k$.
\label{lem: moments}
\end{lemma}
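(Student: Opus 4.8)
The plan is to reduce to a centered Gaussian and then split the conditional moment into the full (unconditional) moment minus a tail correction. First I would set $Y = X - \mu \sim \mathcal{N}(0,\sigma^2)$, so that $S = \{|Y| \le a\}$ and the quantity of interest becomes $\expect[Y^{2k}\mid |Y|\le a]$. Writing the conditional expectation as a ratio,
\begin{equation*}
\expect[Y^{2k}\mid S] = \frac{\expect[Y^{2k}] - T(a)}{\Pr(|Y|\le a)}, \qquad T(a) := \expect\!\left[Y^{2k}\,\mathbbm{1}(|Y|>a)\right],
\end{equation*}
and recalling the standard Gaussian moment $\expect[Y^{2k}] = \sigma^{2k}(2k-1)!!$, I reduce the lemma to estimating the tail contribution $T(a)$ in the numerator and the tail probability $\Pr(|Y|>a)$ in the denominator, both in the regime $a/\sigma \to \infty$ (with $\sigma, k$ fixed, matching the hidden constant).

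Next I would estimate $T(a) = \frac{2}{\sqrt{2\pi}\,\sigma}\int_a^\infty y^{2k} e^{-y^2/(2\sigma^2)}\,dy$ by integration by parts, using $\frac{d}{dy}\bigl(-\sigma^2 e^{-y^2/(2\sigma^2)}\bigr) = y\,e^{-y^2/(2\sigma^2)}$. This yields the recursion
\begin{equation*}
\int_a^\infty y^{2k} e^{-y^2/(2\sigma^2)}\,dy = \sigma^2 a^{2k-1} e^{-a^2/(2\sigma^2)} + \sigma^2(2k-1)\int_a^\infty y^{2k-2} e^{-y^2/(2\sigma^2)}\,dy,
\end{equation*}
so that iterating produces a sum of terms of the form $a^{2k-1}, a^{2k-3}, \dots$ each multiplied by $e^{-a^2/(2\sigma^2)}$ and a constant depending on $\sigma$ and $k$. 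Since $a^{2k-1}$ dominates for large $a$, I obtain $T(a) = O\!\bigl(a^{2k-1} e^{-a^2/(2\sigma^2)}\bigr)$. For the denominator I would invoke the standard Gaussian tail bound $\Pr(|Y|>a) = O\!\bigl(e^{-a^2/(2\sigma^2)}\bigr)$ (e.g.\ via the Mills-ratio estimate $\int_a^\infty e^{-y^2/(2\sigma^2)}\,dy \le \frac{\sigma^2}{a}e^{-a^2/(2\sigma^2)}$), which also shows $\Pr(|Y|\le a)\to 1$.

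Finally I would combine these two estimates. Writing $Q(a) = \Pr(|Y|>a)$ and using $\frac{1}{1-Q} = 1 + O(Q)$ for $a$ large enough that $Q \le \tfrac12$,
\begin{equation*}
\expect[Y^{2k}\mid S] - \sigma^{2k}(2k-1)!! = \frac{\sigma^{2k}(2k-1)!!\,Q(a) - T(a)}{1-Q(a)},
\end{equation*}
and both $\sigma^{2k}(2k-1)!!\,Q(a) = O(e^{-a^2/(2\sigma^2)})$ and $T(a) = O(a^{2k-1}e^{-a^2/(2\sigma^2)})$ are absorbed into $O(a^{2k-1}e^{-a^2/(2\sigma^2)})$ (using $a^{2k-1}\ge 1$ once $a\ge 1$ and $1-Q(a)\ge \tfrac12$), which gives the claimed expansion. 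The main obstacle I anticipate is the bookkeeping of the tail integral's order: I must confirm that integration by parts produces exactly the power $a^{2k-1}$ as the leading term, and that all lower-order pieces together with the denominator correction collapse cleanly into the single error term $O(a^{2k-1}e^{-a^2/(2\sigma^2)})$ rather than a larger power of $a$.
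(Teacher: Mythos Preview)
Your proposal is correct. Both your argument and the paper's rely on the same integration-by-parts recursion for Gaussian integrals, but you organize the computation differently. The paper works directly on the truncated integral $\expect[(X-\mu)^{2k}\mathbbm{1}_S]$: integrating by parts over $[-a,a]$ produces an induction that yields $\sigma^{2k}(2k-1)!!\,\operatorname{erf}(a/\sigma\sqrt{2}) - O(e^{-a^2/(2\sigma^2)}a^{2k-1})$, and then dividing by $\Pr(S)=\operatorname{erf}(a/\sigma\sqrt{2})$ cancels the error-function factor. You instead take the complementary decomposition: start from the known unconditional moment $\sigma^{2k}(2k-1)!!$, subtract the tail $T(a)=\expect[Y^{2k}\mathbbm{1}(|Y|>a)]$ controlled by the same integration-by-parts recursion on $[a,\infty)$, and handle the denominator $1-Q(a)$ via a Mills-ratio bound. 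Your route avoids the error function entirely and makes the emergence of the leading power $a^{2k-1}$ more transparent; the paper's route has the virtue of never leaving the compact interval $S$, so no separate tail-probability estimate is needed. Both are equally rigorous once the big-$O$ regime is understood as $a\to\infty$ with $\sigma,k$ fixed.
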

\begin{proof}
Without loss of generality, we assume $\mu = 0$. We prove the lemma by induction. 
Set $k = 1$, integrate by parts,
\begin{equation*}
\begin{aligned}
    \mathbb{E}[X^{2}I_S] & = \int_{-a}^{a}x^2 \frac{1}{\sigma \sqrt{2 \pi}} e^{-\frac{x^2}{2\sigma^2}}\,dx \\
    & = \frac{\sigma}{ \sqrt{2 \pi}} \left( -xe^{-\frac{x^2}{2\sigma^2}}  \big|_{-a}^{a} + \int_{-a}^{a} e^{-\frac{x^2}{2\sigma^2}}\,dx \right).
\end{aligned}
\end{equation*}
Integrate by substitution, the integral in the second term becomes
\begin{equation*}
    \int_{-a}^{a}e^{-\frac{x^2}{2\sigma^2}}\,dx
    =\sigma \sqrt{2\pi} \operatorname{erf}\left(\frac{a}{\sigma\sqrt{2}}\right)
\end{equation*}
where
\begin{equation*}
    \operatorname{erf}(z) = \sigma \int_{0}^{z}e^{-t^2}\,dt
\end{equation*}
is the \textit{error function}.
Then,
\begin{equation*}
    \mathbb{E}[X^{2}I_S] = \sigma^2 \operatorname{erf}\left(\frac{a}{\sigma\sqrt{2}}\right) - O\left(e^{-\frac{a^2}{2\sigma^2}}a\right).
    \end{equation*}
Assuming 
\begin{equation}
    \mathbb{E}[X^{2k}I_S] =  \sigma^{2k}(2k-1)!! \operatorname{erf}\left(\frac{a}{\sigma\sqrt{2}}\right) - O\left(e^{-\frac{a^2}{2\sigma^2}}a^{2k-1}\right),
    \label{eq:induction-k}
    \end{equation}
then integrate by parts for the $k+1$ case,
\begin{equation*}
\begin{aligned}
    \mathbb{E}[X^{2(k+1)}I_S] &= \int_{-a}^{a}x^{2k+2} \frac{1}{\sigma \sqrt{2 \pi}} e^{-\frac{x^2}{2\sigma^2}}\,dx \\
    & = \frac{\sigma}{ \sqrt{2 \pi}} \int_{-a}^{a} x^{2k+1} \cdot \frac{x}{\sigma^2} e^{-\frac{x^2}{2\sigma^2}}  \,dx \\
    & = \frac{\sigma}{ \sqrt{2 \pi}}\left( -x^{2k+1}e^{-\frac{x^2}{2\sigma^2}}  \big|_{-a}^{a} + (2k+1)\int_{-a}^{a}x^{2k}e^{-\frac{x^2}{2\sigma^2}}\,dx \right) \\
    & =  \sigma^2 (2k+1)\mathbb{E}[X^{2k}I_S] - O\left(e^{-\frac{a^2}{2\sigma^2}}a^{2k+1} \right).
\end{aligned}
\end{equation*}
Plug in (\ref{eq:induction-k}), we obtain
\begin{equation*}
    \mathbb{E}[X^{2(k+1)}I_S] = \sigma^{2k+2}(2k+1)!! \operatorname{erf}\left(\frac{a}{\sigma\sqrt{2}}\right) -O\left(e^{-\frac{a^2}{2\sigma^2}}a^{2k+1}\right).
\end{equation*}
So far we have proved (\ref{eq:induction-k}).
Note that
\begin{equation*}
    \Pr(S) = \int_{-a}^{a}\frac{1}{\sigma \sqrt{2 \pi}} e^{-\frac{x^2}{2\sigma^2}}\,dx = \operatorname{erf}\left(\frac{a}{\sigma\sqrt{2}}\right),
\end{equation*}
and that the image of $\operatorname{erf}(z)$ is between $(-1,1)$.
Therefore, 
\begin{equation*}
   \mathbb{E}[X^{2k}\mid S] = \mathbb{E}[X^{2k}I_S]/ \Pr(S) = \sigma^{2k}(2k-1)!!  - O\left(e^{-\frac{a^2}{2\sigma^2}}a^{2k-1}\right).
\end{equation*}
\end{proof}

\begin{proof}[Proof of (\ref{eq:condmean}) in Theorem \ref{prop:condmom}]
Consider the Taylor series of $\frac{1}{x}$ at $x = \mu$:
\begin{equation*}
    \frac{1}{x} = \sum_{j=0}^\infty \frac{(-(x-\mu))^j}{\mu^{j+1}}= \frac{1}{\mu} - \frac{x-\mu}{\mu^2}+ \frac{(x-\mu)^2}{\mu^3} - \frac{(x-\mu)^3}{\mu^4}+ \dots
\end{equation*}
Let $y_m$ be the partial sum of the above series, i.e., $y_m(x)=  \sum_{k=0}^m \frac{(-(x-\mu))^k}{\mu^{k+1}}$.
Then $y_m(x)$ converges to $\frac{1}{x}$ in $(0, 2\mu)$ which contains $[1, 2\mu - 1]$. Let 
\begin{equation*}
    g(x) = \sum_{k=0}^\infty \frac{|x-\mu|^k}{\mu^{k+1}} = 
    \begin{cases}
       \frac{1}{x}, & \text{if $1\leq x \leq \mu$}\\
       \frac{1}{2\mu - x} & \text{if $\mu < x \leq 2 \mu -1$}
    \end{cases}
\end{equation*}
Then $g$ is integrable as
\begin{equation*}
    \int_{1}^{2\mu - 1}|g(x)|d\nu = \int_{1}^{\mu} \frac{1}{x} d\nu + \int_{\mu }^{2\mu -1} \frac{1}{2\mu - x} d\nu
    = 2\int_{1}^{\mu} \frac{1}{x} d\nu <\infty,
\end{equation*}
where $d\nu = f(x)dx$ is induced by $\mathcal{N}(\mu, \sigma^2)$ conditional on event $S$.
Note also that $|y_m(x)| \leq g(x)$ for any naturals $m$ and $x \in [1, 2\mu-1]$. 
By the dominated convergence theorem, the operations of limit and integral are exchangeable for $y_m(x)$. 
\begin{equation}
    \begin{aligned}
      \int_{1}^{2\mu -1}  \frac{1}{x} d\nu & =  \int_{1}^{2\mu -1}  \lim_{m\rightarrow \infty} y_m(x) d\nu \\
     & =   \lim_{m\rightarrow \infty} \int_{1}^{2\mu -1}  y_m(x) d\nu\\
     & =  \lim_{m\rightarrow \infty} \int_{1}^{2\mu -1}  \left(\sum_{j=0}^m \frac{(-(x-\mu))^j}{\mu^{j+1}}\right) d\nu \\
     & = \lim_{m\rightarrow \infty} \left(\sum_{j=0}^m \int_{1}^{2\mu -1}  \frac{(-(x-\mu))^j}{\mu^{j+1}} d\nu  \right)\\
    \end{aligned}
\end{equation}
Then,
\begin{equation}
\begin{aligned}
       \mathbb{E}\left( \frac{1}{X} \mid S \right) & = \sum_{j = 0}^\infty \frac{1}{\mu^{j+1}}\mathbb{E}\left[(-(X - \mu ))^{j}  \mid S \right]\\
       & = \sum_{j = 0}^\infty \frac{1}{\mu^{2j+1}}\mathbb{E}\left[(X - \mu )^{2j}  \mid S \right]\\
       & = \sum_{j = 0}^k \frac{1}{\mu^{2j+1}}\mathbb{E}\left[(X - \mu )^{2j}  \mid S \right] + \frac{1}{\mu}\sum_{j = k+1}^\infty \mathbb{E}\left[\left(\frac{X-\mu}{\mu}\right)^{2j} \mid S \right].
\label{eq:cond_mean}
\end{aligned}
\end{equation}
The second equality is because the odd moments are zero due to symmetry.

Note that given event $S$, $|\frac{X-\mu}{\mu}| \leq \frac{\mu - 1}{\mu} < 1$, then
\begin{equation}
    \mathbb{E}\left[\left(\frac{X-\mu}{\mu}\right)^{2k+2} \mid S \right]\leq \left(\frac{\mu-1}{\mu}\right)^{2}\mathbb{E}\left[\left(\frac{X-\mu}{\mu}\right)^{2k} \mid S \right].
\label{ineq:factor}
\end{equation}
It follows that
\begin{equation*}
\begin{aligned}
     \sum_{j = k + 1 }^\infty \mathbb{E}\left[\left(\frac{X-\mu}{\mu}\right)^{2j}  \mid S \right] 
     &\leq \sum_{j = 0 }^\infty \left(\frac{\mu-1}{\mu}\right)^{2j} \mathbb{E}\left[\left(\frac{X-\mu}{\mu}\right)^{2k + 2} \mid S \right]  \\
     & = \frac{\mu^2}{2\mu -1}\mathbb{E}\left[\left(\frac{X-\mu}{\mu}\right)^{2k+2} \mid S \right]\\
     & = O\left( \frac{1}{\mu^{2k+1}}\right)\cdot 
     \mathbb{E}[\left({X-\mu}\right)^{2k+2} \mid S ].
\end{aligned}
\end{equation*}
Applying Lemma \ref{lem: moments}, by the choice of $a = \mu - 1$, (\ref{eq:cond_mean}) becomes
\begin{equation*}
\begin{aligned}
        \mathbb{E}\left( \frac{1}{X} \mid S \right) 
        & = \frac{1}{\mu}\sum_{j = 0}^{k} \frac{(2j-1)!!\sigma^{2j}}{\mu^{2j}}   + O\left(\frac{\sigma^{2k+2}}{\mu^{2k+2}}\right).
\end{aligned}
\label{eq:moments}
\end{equation*}
\end{proof}

\begin{proof}[Proof of (\ref{eq:condvar}) in Theorem \ref{prop:condmom}]
We conduct a similar procedure for the second moment of $X\mid S$. 
Based on the Taylor expansion
\begin{equation*}
    \frac{1}{x^2} = \sum_{j=0}^\infty \frac{(j+1)(-(x-\mu))^j}{\mu^{j+2}}= \frac{1}{\mu^2} - \frac{2(x-\mu)}{\mu^3}+ \frac{3(x-\mu)^2}{\mu^4} - \frac{4(x-\mu)^3}{\mu^5}+ \cdots,
\end{equation*}
we have 
\begin{equation}
\begin{aligned}
       \mathbb{E}\left( \frac{1}{X^2} \mid S \right) & = \sum_{j = 0}^\infty \frac{j+1}{\mu^{j+2}}\mathbb{E}\left[(-(X - \mu ))^{j}  \mid S \right]\\
       & = \sum_{j = 0}^\infty \frac{2j+1}{\mu^{2j+2}}\mathbb{E}\left[(X - \mu )^{2j}  \mid S \right]\\
       & = \sum_{j = 0}^k \frac{2j+1}{\mu^{2j+2}}\mathbb{E}\left[(X - \mu )^{2j}  \mid S \right] + \frac{1}{\mu^2}\sum_{j = k+1}^\infty (2j+1) \mathbb{E}\left[\left(\frac{X-\mu}{\mu}\right)^{2j} \mid S \right] .
\end{aligned}
\label{eq:2ndmom}
\end{equation}
Due to (\ref{ineq:factor}),
it follows that
\begin{equation}
    \begin{aligned}
     & \quad \sum_{j = k+1}^\infty (2j+1) \mathbb{E}\left[\left(\frac{X-\mu}{\mu}\right)^{2j} \mid S \right] \\
     &\leq \mathbb{E}\left[\left(\frac{X-\mu}{\mu}\right)^{2k + 2} \mid S \right]  \cdot \sum_{j = 0 }^\infty (2k+3 + 2j) \left(\frac{\mu-1}{\mu}\right)^{2j}  \\
     & = \mathbb{E}\left[\left(\frac{X-\mu}{\mu}\right)^{2k + 2} \mid S \right]\cdot \left[ (2k+3) \sum_{j = 0}^\infty \left(\frac{\mu-1}{\mu}\right)^{2j} + 2\sum_{j = 1 }^\infty j \left(\frac{\mu-1}{\mu}\right)^{2j} \right]\\
     & = \mathbb{E}\left[\left(\frac{X-\mu}{\mu}\right)^{2k+2} \mid S \right]\cdot \left[ \frac{(2k+3)\mu^2}{2\mu-1} + 2\frac{\mu^2(\mu-1)^2}{(2\mu-1)^2}\right]\\
     & = \mathbb{E}\left[\left(\frac{X-\mu}{\mu}\right)^{2k+2} \mid S \right]\cdot O(\mu^2)\\
     & = O\left( \frac{1}{\mu^{2k}}\right)\cdot 
     \mathbb{E}[\left({X-\mu}\right)^{2k+2} \mid S ],
    \end{aligned}
\end{equation}
where the term $\sum_{j = 1 }^\infty j \left(\frac{\mu-1}{\mu}\right)^{2j}$ is a sum of an arithmetic–geometric sequence. 
By Lemma \ref{lem: moments}, (\ref{eq:2ndmom}) becomes
\begin{equation}
\begin{aligned}
       \mathbb{E}\left( \frac{1}{X^2} \mid S \right)
        & = \frac{1}{\mu^2}\sum_{j = 0}^{k} \frac{(2j+1)!!\sigma^{2j}}{\mu^{2j}}   + O\left(\frac{\sigma^{2k+2}}{\mu^{2k+2}}\right).
\end{aligned}
\end{equation}
\end{proof}

\subsection{Proof of Theorem \ref{thm:privacy}}

\begin{proof}[Proof for Algorithm \ref{alg:strlevel}] Under neighboring relation $\sim_{ss}$, only one record changes within one stratum and sample sizes remain the same.
Applying the Gaussian mechanism to each stratum at the level of $\rho$ gives $\rho$-zCDP guarantee. By post-processing, the confidence interval is also $\rho$-zCDP.
\end{proof}

\begin{proof}[Proof for Algorithm \ref{alg:poplevel}]
The sensitivities of $\hat p$ and $\widehat{\operatorname{Var}}(\hat p)$ are $\Delta_p$ and $\Delta_V$, respectively.
Applying the Gaussian mechanism, it follows that $\tilde{p}$ is $\rho_1$-zCDP and $\widetilde{V}$ is $\rho_2$-zCDP. By basic composition, the confidence interval $\tilde{p} \pm z_{1-\frac{\alpha}{2}} \sqrt{\widetilde{V}}$ is $(\rho_1 + \rho_2)$-zCDP.
\end{proof}

\begin{proof}[Proof for Algorithm \ref{alg:{StrNz-PrivSz}}]
By the Gaussian mechanism and the basic composition property of zCDP, we know that $\tilde{p}_h$ is $\rho$-zCDP. Under neighboring relation $\sim_{r}$, only one record changes within one stratum. Then, by post-processing, the confidence interval is $\rho$-zCDP.
\end{proof}

\subsection{Proof of Theorem \ref{thm:NoiStraPubSize}}
Before proving the theorem, we revisit the finite-population CLT first:
\begin{theorem}[Theorem 1, \cite{XinranLi}]
Consider a finite population $\Pi = \{X_{1}, ..., X_{N}\}$ of size
$N$. Let $\mu$ be the population mean and $\bar X_n$ be the mean of a simple random sample of
size $n$ from $\Pi$, and ${\operatorname{Var}}(\bar X_n)$ is the variance of $\bar X_n$. The finite population variance of $\Pi$ is denoted by $$v = \frac{1}{N-1}\sum_{i=1}^N (X_{i} - \mu)^2.$$ 
As $N \rightarrow \infty$, if 
\begin{equation}
    \frac{1}{\min(n, N-n)}\cdot \frac{\max_{1\leq i \leq N}(X_{i} - \mu)^2}{v} \rightarrow 0,
\label{clt:cond}
\end{equation}
we have
\begin{equation}
    \frac{\bar X_n - \mu}{\sqrt{\strut \operatorname{Var}(\bar X_n)}} \stackrel{d}{\rightarrow} \mathcal{N}(0,1).
\end{equation}
\label{thm: SRS_CLT}
\end{theorem}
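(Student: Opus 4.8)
The plan is to read Theorem~\ref{thm: SRS_CLT} as a combinatorial (finite-population) central limit theorem and to obtain it from Hoeffding's CLT for linear permutation statistics, verifying that the latter's Lindeberg-type hypothesis is exactly condition~(\ref{clt:cond}). First I would encode a simple random sample through a uniformly random permutation $\pi$ of $\{1,\dots,N\}$: the units occupying positions $1,\dots,n$ form a genuine simple random sample, so that $n\bar X_n = \sum_{i=1}^N c_i X_{\pi(i)}$ with $c_i = \mathbbm{1}[i\le n]$. This exhibits $\bar X_n$ as a linear permutation statistic $W=\sum_{i=1}^N a_{i\pi(i)}$ with array $a_{ij}=c_iX_j$, whose mean and variance are available in closed form; a direct evaluation recovers $\expect[\bar X_n]=\mu$ and $\operatorname{Var}(\bar X_n)=\frac{N-n}{N}\frac{v}{n}$.

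Next I would pass to the doubly-centered array $b_{ij}=a_{ij}-a_{i\cdot}-a_{\cdot j}+a_{\cdot\cdot}$, which for this choice factorizes cleanly as $b_{ij}=(c_i-\tfrac{n}{N})(X_j-\mu)$. Two elementary identities, $\sum_i (c_i-\tfrac{n}{N})^2=\tfrac{n(N-n)}{N}$ and $\sum_j (X_j-\mu)^2=(N-1)v$, give $\operatorname{Var}(W)=\tfrac{1}{N-1}\sum_{i,j}b_{ij}^2=\tfrac{n(N-n)}{N}v$, consistent with the variance above, while $\max_{i,j}b_{ij}^2=\big(\max\{\tfrac{n}{N},\tfrac{N-n}{N}\}\big)^2\max_j(X_j-\mu)^2$. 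The quantity governing Hoeffding's condition is the ratio $\max_{i,j}b_{ij}^2\big/\big(\tfrac1N\sum_{i,j}b_{ij}^2\big)$, and a short computation shows this ratio equals, up to a multiplicative factor that stays bounded between positive constants as both $n$ and $N-n$ grow, the left-hand side $\tfrac{1}{\min(n,N-n)}\tfrac{\max_i(X_i-\mu)^2}{v}$ of~(\ref{clt:cond}). Since this max-to-variance ratio tending to zero implies the Lindeberg condition for permutation statistics, I would then invoke Hoeffding's combinatorial CLT to conclude that $W$, hence the standardized $\bar X_n$, converges in distribution to $\mathcal{N}(0,1)$.

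The conceptual crux --- and the reason a hypothesis like~(\ref{clt:cond}) is unavoidable --- is the dependence created by sampling without replacement: the indicators $Z_i=\mathbbm{1}[\text{unit }i\text{ is sampled}]$ are negatively correlated, so the ordinary Lindeberg CLT does not apply to $\bar X_n$ directly. The combinatorial CLT is precisely the device that absorbs this dependence. For a fully self-contained proof I would instead run Stein's method of exchangeable pairs, building $W'$ from $W$ by transposing two uniformly chosen coordinates of $\pi$; the linearity identity $\expect[W'-W\mid\pi]=-\lambda(W-\expect W)$ holds with an explicit $\lambda$ of order $1/N$, and the Stein error is controlled by exactly the second- and fourth-moment sums of the $b_{ij}$ computed above. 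I expect the main obstacle to lie in this error control --- showing that~(\ref{clt:cond}) is strong enough to annihilate the Stein remainder (equivalently, to pass the Lindeberg truncation) uniformly as $n$ and $N-n$ both tend to infinity --- rather than in any conceptual step, since the reduction to a known permutation CLT is routine once the array $a_{ij}=c_iX_j$ is identified.
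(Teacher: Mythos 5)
Your proof is essentially correct, but note a structural point first: the paper does not prove this statement at all --- Theorem \ref{thm: SRS_CLT} is restated verbatim from \cite{XinranLi} and used as an imported tool, so your write-up is best read as a reconstruction of the classical argument behind the cited result rather than an alternative to a proof in this paper. On the merits, your computations check out: with $a_{ij}=c_iX_j$ the doubly centered array does factorize as $b_{ij}=(c_i-\tfrac{n}{N})(X_j-\mu)$; the identities $\sum_i(c_i-\tfrac{n}{N})^2=\tfrac{n(N-n)}{N}$ and $\sum_j(X_j-\mu)^2=(N-1)v$ give $\operatorname{Var}(W)=\tfrac{n(N-n)}{N}v=n^2\operatorname{Var}(\bar X_n)$, matching $\operatorname{Var}(\bar X_n)=\tfrac{N-n}{N}\tfrac{v}{n}$; and since
\begin{equation*}
\frac{\bigl(\max\{n,N-n\}\bigr)^2}{N\,n(N-n)}=\frac{\max\{n,N-n\}}{N\,\min\{n,N-n\}}\in\Bigl[\tfrac{1}{2\min\{n,N-n\}},\,\tfrac{1}{\min\{n,N-n\}}\Bigr],
\end{equation*}
the Hoeffding max-to-variance ratio is indeed sandwiched between $\tfrac12$ and $1$ times the left-hand side of (\ref{clt:cond}), so its vanishing forces the Lindeberg sum for the permutation statistic to be empty for large $N$, exactly as you argue. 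Two points of hygiene you should make explicit: (i) the limit statement must be phrased for a triangular array --- a sequence of populations $\Pi^{(N)}$ with $n=n_N$, since both the $X_i$ and $n$ change with $N$ --- which is how \cite{XinranLi} states it; and (ii) the sufficient condition you invoke is really Motoo's Lindeberg-type condition for the combinatorial CLT (Hoeffding's original 1951 theorem used moment-ratio hypotheses), and for the product structure $b_{ij}=u_iv_j$ arising here one may equally cite H\'ajek's CLT for simple random sampling, where a condition of this form is essentially necessary and sufficient; naming the precise theorem matters because you are otherwise quoting ``Hoeffding's condition'' for a statement Hoeffding did not quite prove. Your Stein exchangeable-pair alternative is a known viable route (it yields Berry--Esseen rates for the combinatorial CLT), but it is not needed once the reduction to the permutation CLT is in place.
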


The variance of $\bar X_n$ is determined by the population variance $v$ which is unknown. Nevertheless, the sample variance $\widehat{\operatorname{Var}}(\bar X_n)$ can be used to estimate $v$. To make sure the CLT still holds when substituting ${\operatorname{Var}}(\bar X_n)$ by $\widehat{\operatorname{Var}}(\bar X_n)$, 
the consistency of $\widehat{\operatorname{Var}}(\bar X_n)$ is crucial, as stated in the following lemma.

\begin{lemma}
Let $\widehat{\operatorname{\operatorname{Var}}}(\bar X_n)$ be the sample variance. $\widehat{\operatorname{\operatorname{Var}}}(\bar X_n)$ is an unbiased estimator for ${\operatorname{\operatorname{Var}}}(\bar X_n)$.
Moreover, under the condition in Theorem \ref{thm: SRS_CLT},  as $N \rightarrow \infty$, $$\widehat{\operatorname{\operatorname{Var}}}(\bar X_n)/{\operatorname{\operatorname{Var}}}(\bar X_n) \stackrel{p}{\rightarrow} 1.$$
\label{prop: consistency}
\end{lemma}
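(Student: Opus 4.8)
The plan is to reduce the whole statement to a single population-level fact: that the sample variance $s^2 = \frac{1}{n-1}\sum_{i\in\mathcal{S}}(X_i-\bar X_n)^2$ (with $\mathcal{S}$ the sampled index set) is unbiased for $v$ and satisfies $s^2/v \stackrel{p}{\rightarrow} 1$. This reduction is immediate once one writes $\operatorname{Var}(\bar X_n) = \frac{N-n}{N}\frac{v}{n}$ and $\widehat{\operatorname{Var}}(\bar X_n) = \frac{N-n}{N}\frac{s^2}{n}$: the finite-population-correction factors cancel, so $\widehat{\operatorname{Var}}(\bar X_n)/\operatorname{Var}(\bar X_n) = s^2/v$ exactly, and unbiasedness of $\widehat{\operatorname{Var}}(\bar X_n)$ is equivalent to $\expect[s^2]=v$.

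For unbiasedness I would invoke the algebraic identity $\sum_{i\in\mathcal{S}}(X_i-\bar X_n)^2 = \sum_{i\in\mathcal{S}}(X_i-\mu)^2 - n(\bar X_n-\mu)^2$. Writing $Y_i = (X_i-\mu)^2$, the first sum has expectation $n\cdot\frac{1}{N}\sum_{i=1}^N Y_i = n\frac{N-1}{N}v$, since the SRS sample mean of the $Y_i$ is unbiased for their population mean $\frac{N-1}{N}v$; the second has expectation $n\operatorname{Var}(\bar X_n) = \frac{N-n}{N}v$. Subtracting, one finds $\expect\big[\sum_{i\in\mathcal{S}}(X_i-\bar X_n)^2\big] = v(n-1)$, and dividing by $n-1$ gives $\expect[s^2]=v$.

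For consistency I would set $\bar Y_{\mathcal{S}} = \frac1n\sum_{i\in\mathcal{S}}Y_i$ and $D=(\bar X_n-\mu)^2$, so that $s^2 = \frac{n}{n-1}(\bar Y_{\mathcal{S}} - D)$. Two facts then suffice. First, $D/v \stackrel{p}{\rightarrow}0$: since $\expect[D]=\operatorname{Var}(\bar X_n)=\frac{N-n}{N}\frac{v}{n}$, Markov's inequality applied to the nonnegative $D/v$ gives the claim because $n\to\infty$. Second, $\bar Y_{\mathcal{S}}/v\stackrel{p}{\rightarrow}1$: its mean is $\frac{N-1}{N}v\to v$, and by the SRS variance formula applied to the $Y$-population, $\operatorname{Var}(\bar Y_{\mathcal{S}}) = \frac{N-n}{N}\frac{v_Y}{n}$ with $v_Y=\frac{1}{N-1}\sum_i(Y_i-\bar Y)^2$. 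The crucial estimate is $v_Y\le \frac{1}{N-1}\sum_i Y_i^2 = \frac{1}{N-1}\sum_i(X_i-\mu)^4 \le \max_i(X_i-\mu)^2\cdot v$, whence $\operatorname{Var}(\bar Y_{\mathcal{S}})/v^2 \le \frac{\max_i(X_i-\mu)^2}{nv}$, which tends to $0$ under condition (\ref{clt:cond}) since $\min(n,N-n)\le n$. Chebyshev's inequality then yields $\bar Y_{\mathcal{S}}/v\stackrel{p}{\rightarrow}1$, and combining with $D/v\stackrel{p}{\rightarrow}0$ and $\frac{n}{n-1}\to1$ gives $s^2/v\stackrel{p}{\rightarrow}1$.

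The main obstacle is the consistency half, and within it the bound on $\operatorname{Var}(\bar Y_{\mathcal{S}})$: one must recognize that the variance of the sample variance is governed by the spread of the squared deviations $Y_i=(X_i-\mu)^2$, and then translate the Lindeberg-type ratio in (\ref{clt:cond}) into control of the fourth-moment quantity $v_Y$. The inequality $\sum_i(X_i-\mu)^4\le \max_i(X_i-\mu)^2\sum_i(X_i-\mu)^2$ is precisely what converts the hypothesis into the vanishing of $\operatorname{Var}(\bar Y_{\mathcal{S}})/v^2$; everything else is bookkeeping with the finite-population correction and the elementary limits $\frac{N-1}{N},\frac{n}{n-1}\to1$.
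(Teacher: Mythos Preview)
The paper does not actually prove this lemma: it is stated immediately after the finite-population CLT (Theorem~\ref{thm: SRS_CLT}, quoted from \cite{XinranLi}) and then used without argument in the proof of Theorem~\ref{thm:{\NoiStraPubSize}}. So there is no ``paper's own proof'' to compare against; the authors treat it as a standard survey-sampling fact.

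Your argument is correct and self-contained. The reduction to $s^2/v$ via cancellation of the finite-population correction is the right framing, the unbiasedness computation is clean, and the key inequality $v_Y\le \max_i(X_i-\mu)^2\cdot v$ is exactly the bridge from condition~(\ref{clt:cond}) to $\operatorname{Var}(\bar Y_{\mathcal S})/v^2\to 0$. One small point worth tightening: when you invoke ``because $n\to\infty$'' for $\expect[D/v]=\frac{N-n}{Nn}\to 0$ and for $\frac{n}{n-1}\to 1$, note that condition~(\ref{clt:cond}) by itself does not force $n\to\infty$ in full generality (it could hold with bounded $n$ if $\max_i(X_i-\mu)^2/v\to 0$). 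In the paper's binary setting $\max_i(X_i-\mu)^2/v$ is bounded below, so $\min(n,N-n)\to\infty$ is automatic; if you want the lemma at the generality stated, either add $n\to\infty$ as an explicit hypothesis or observe that the downstream theorems already assume it.
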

Now we prove Theorem \ref{thm:NoiStraPubSize}:
\begin{proof}[Proof of Theorem \ref{thm:NoiStraPubSize}]
It suffices to show $\frac{\tilde p_h - p_h}{\sqrt{\widetilde{V}_h}} \stackrel{d}{\rightarrow} \mathcal{N}(0, 1)$ for all $h$.
By the finite-population CLT in Theorem \ref{thm: SRS_CLT}, we know
\begin{equation*}
    \frac{\hat p_h - p_h}{\sqrt{\operatorname{Var}(\hat p_h)}} \stackrel{d}{\rightarrow} \mathcal{N}(0, 1).
\end{equation*}
Since $\tilde p_h = \hat p_h + e_h $ where $e_h \sim \mathcal{N}(0, \frac{1}{2 \rho n_h^2})$, we have
\begin{equation}
    \frac{\tilde p_h - p_h}{\sqrt{\operatorname{Var}(\tilde p_h)}} \stackrel{d}{\rightarrow} \mathcal{N}(0, 1)
    \label{nas:an}
\end{equation}
where 
\begin{equation*}
    \operatorname{Var}(\tilde p_h) = \operatorname{Var}(\hat p_h) + \frac{1}{2 \rho n_h^2}.
\end{equation*}
Let
\begin{equation}
\begin{aligned}
\widetilde{V}_h
    & =  \left(\frac{N_h-n_h}{N_h}\right)\frac{\tilde p_h(1-\tilde p_h) +\frac{1}{2\rho n_h^2}}{n_h-1} + \frac{1}{2\rho n_h^2}\\
    & = \widehat{\operatorname{Var}}(\hat p_h) + \left(\frac{N_h - n_n}{N_h}\right) \frac{e_h - 2 \tilde p_h e_h - e_h^2 + \frac{1}{2 \rho n_h^2}}{n_h - 1} + \frac{1}{2 \rho n_h^2}.
\end{aligned}
\label{nas:var_tilde}
\end{equation}
Since $e_h \sim \mathcal{N}(0, \frac{1}{2 \rho n_h^2})$, we have $e_h = O_P(\frac{1}{\sqrt{\rho}n_h})$. Then, the second term of (\ref{nas:var_tilde}) is $O_P(\frac{1}{\sqrt{\rho}n_h^2})$,
and thus,
$\widetilde{V}_h-\operatorname{Var}(\tilde p_h) = \widehat{\operatorname{Var}}(\hat p_h) -  \operatorname{Var}(\hat p_h) + O_P\left(\frac{1}{\sqrt{\rho}n_h^2}\right)$.
Note that $\widehat{\operatorname{Var}}(\hat p_h)$ is of order $\frac{1}{n_h}$, and that by Lemma \ref{prop: consistency}, $\widehat{\operatorname{Var}}(\hat p_h) \stackrel{p}{\rightarrow} \operatorname{Var}(\hat p_h)$. 
Therefore, 
$\widetilde{V}_h\stackrel{p}{\rightarrow} \operatorname{Var}(\tilde p_h)$, and thus,
$\widetilde{V} \stackrel{p}{\rightarrow} \operatorname{Var}(\tilde p)$.

Combining the consistency of $\widetilde{V}$ with (\ref{nas:an}), we have 
\begin{equation}
    \frac{\tilde p_h - p_h}{\sqrt{\widetilde{V}_h}} \stackrel{d}{\rightarrow} \mathcal{N}(0, 1)
\end{equation}
by Slutsky's Theorem. Then, $\frac{\tilde p - p}{\sqrt{\widetilde{V}}} \stackrel{d}{\rightarrow} \mathcal{N}(0, 1)$.
Therefore, the confidence interval given by $p \pm z_{1-\alpha/2}\sqrt{\widetilde V}$ has asymptotic coverage level $1 - \alpha$.

\end{proof}

\subsection{Proof of Theorem \ref{thm:NoiPopPubSize}}

\begin{proof}
Since $\frac{\hat p - p}{\sqrt{\operatorname{Var}(\hat p)}} \stackrel{d}{\rightarrow} \mathcal{N}(0, 1)$ and $\tilde p = \hat p + \mathcal{N}(0, \Delta_p^2 / 2\rho_1)$ with $\Delta_p = \max_h\frac{w_h}{n_h}$, it follows that
\begin{equation*}
    \frac{\tilde p - p}{\sqrt{\operatorname{Var}(\tilde p)}} \stackrel{d}{\rightarrow} \mathcal{N}(0, 1),
\end{equation*}
and
\begin{equation*}
     \operatorname{Var}(\tilde p) = \operatorname{Var}(\hat p) + \frac{\Delta_p^2}{2\rho_1}.
\end{equation*}
In Algorithm \ref{alg:poplevel}, we set 
\begin{equation}
    \widetilde{V} = \widehat{\operatorname{Var}}(\hat p) + \frac{\Delta_{ p}^2}{2\rho_1} + e_{V}, 
\end{equation}
where $e_{V} \sim \mathcal{N}(0, \frac{\Delta_V^2}{2\rho_2})$ with $\Delta_{V} = \max_h\left(\frac{C_h}{n_h}\left(1-\frac{1}{n_h}\right) \right)$ and $C_h = w_h^2 \frac{N_h-n_h}{N_h}\frac{1}{n_h-1}$. 
Since $\Delta_V = O(\frac{1}{\max_h n_h^2})$, we have $e_V = O_P(\frac{1}{\max\limits_h \sqrt{\rho_2}n_h^2})$. 
Thus,
$\widetilde{V}-\operatorname{Var}(\tilde p) = \widehat{\operatorname{Var}}(\hat p) -  \operatorname{Var}(\hat p) + O_P\left(\frac{1}{\max_h\sqrt{\rho}n_h^2}\right)$.
Since $\widehat{\operatorname{Var}}(\hat p) \stackrel{p}{\rightarrow} \operatorname{Var}(\hat p)$ by finite-population CLT, we have  $\widetilde{V} \stackrel{p}{\rightarrow} \operatorname{Var}(\tilde p)$.

Therefore, by Slutsky's Theorem, \begin{equation}
    \frac{\tilde p - p}{\sqrt{\widetilde{V}}} \stackrel{d}{\rightarrow} \mathcal{N}(0, 1).
\end{equation}
Then, the confidence interval given by $p \pm z_{1-\alpha/2}\sqrt{\widetilde V}$ has the asymptotic coverage level $1 - \alpha$.

\end{proof}

\subsection{Proof of Theorem \ref{thm:consistency}}
\begin{proof}
For $\tilde n \sim \mathcal{N}(n, \frac{1}{2\rho_2})$, by Proposition \ref{prop:condmom},
we derive the $k$th-order Taylor series of the conditional expectation of $\tilde p$ given $S = \{1\leq \tilde n\leq 2n-1\}$:
\begin{equation}
\begin{aligned}
    \mathbb{E}\left(\tilde p \mid S \right) & =  p \sum_{j = 0}^{k} \frac{(2j-1)!!}{n^{2j}(2\rho_2)^j} + O\left(\frac{1}{n^{2k+1}\rho_2^{k+1}}\right).
\label{tl:mean}
\end{aligned}
\end{equation}
For example, when $k=2$, 
\begin{equation}
\begin{aligned}
    \mathbb{E}\left(\tilde p \mid S \right) & =  p \left( 1 + \frac{1}{2n^2\rho_2} + \frac{3}{4n^4\rho_2^2} \right) +  O\left(\frac{1}{n^{5}\rho_2^{3}}\right).
\end{aligned}
\label{taylor:mean2}
\end{equation}
To obtain a Taylor expansion for the conditional variance, we plug
\begin{equation*}
  \mathbb{E}\left(\frac{1}{\tilde n}\mid S \right)       
         = \frac{1}{n}\sum_{j = 0}^{k} \frac{(2j-1)!!}{n^{2j}(2\rho_2)^{j}}   + O\left(\frac{1}{n^{2k+2}\rho_2^{k+1}}\right)
  \label{eq:1stmomofnt}  
\end{equation*}
 and 
\begin{equation*}
  \mathbb{E}\left(\frac{1}{\tilde n^2}\mid S \right) = \frac{1}{n^2}\sum_{j = 0}^{k} \frac{(2j+1)!!}{n^{2j}(2\rho_2)^{j}}   + O\left(\frac{1}{n^{2k+2}
  \rho_2^{k+1}}\right)
  \label{eq:2ndmomofnt}  
\end{equation*} 
into
\begin{equation*}
\begin{aligned}
    \operatorname{Var}(\tilde p\mid S ) & = 
    \mathbb{E}(\tilde p^2 \mid S )-   (\mathbb{E}(\tilde p \mid S ))^2=
    \mathbb{E}\tilde c^2\mathbb{E}\left(\frac{1}{\tilde n^2}\mid S \right) - (\mathbb{E}(\tilde p \mid S ))^2,
\end{aligned}
\end{equation*}
by which we derive a general expansion for the conditional variance:
\begin{equation}
\begin{aligned}
    \operatorname{Var}(\tilde p\mid S) 
  & = \operatorname{Var}(
    \hat p) \sum_{j = 0}^{k} \frac{(2j+1)!!}{n^{2j}(2\rho_2)^{j}}   +
      p^2 \left( 
    \sum_{j = 0}^{k} \frac{(2j+1)!!}{n^{2j}(2\rho_2)^{j}} - \left(\sum_{j = 0}^{k} \frac{(2j-1)!!}{n^{2j}(2\rho_2)^{j}}   \right)^2 \right)\\
      & \quad + \frac{1}{2\rho_1}\sum_{j = 0}^{k} \frac{(2j+1)!!}{n^{2j+2}(2\rho_2)^{j}}  +  O\left(\frac{1}{n^{2k}
  \rho_2^{k+1}}\right) +O\left(\frac{1}{n^{2k+2}
  \rho_1\rho_2^{k+1}}\right).
\label{eq:varofptilde}
\end{aligned}
\end{equation}
When $k = 2$,
\begin{equation}
\begin{aligned}
    \operatorname{Var}(\tilde p\mid S)
    & = \operatorname{Var}(\hat p) \left(1 +  \frac{3}{2n^2\rho_2} + \frac{15}{4n^4\rho_2^2}\right) + p^2\left(\frac{1}{2n^2\rho_2} + \frac{2}{n^4\rho_2^2}- \frac{6}{8n^6\rho_2^3} - \frac{9}{16n^8\rho_2^4}\right)  \\
    & \quad +
    \frac{1}{2\rho_1} \left(\frac{1}{n^2}+  \frac{3}{2n^4\rho_2}+ \frac{15}{4n^6\rho_2^2}\right)+ O\left(\frac{1}{n^4\rho_2^3}\right) + O\left(\frac{1}{n^6\rho_1\rho_2^3}\right).
\end{aligned}
\label{taylor:var2}
\end{equation}

Based on Taylor expansion with $k = 2$ for both conditional mean and variance given in (\ref{taylor:mean2}) and (\ref{taylor:var2}), 
under the condition $\frac{1}{\rho_1 n} = o(1)$ and $\frac{1}{\rho_2 n} = o(1)$, we have
\begin{equation*}
     \mathbb{E} (\tilde p \mid S) = p +  o\left( \frac{1}{n}\right)
\end{equation*}
and 
\begin{equation*}
     \operatorname{Var} (\tilde p \mid S)= \operatorname{Var}(\hat p) + o\left( \frac{1}{n}\right).
\end{equation*}
Then, $\tilde p \mid S$ is asymptotically unbiased.
Note that $\operatorname{Var}(\hat p)$ is of order  $\frac{1}{n}$ and thus $\tilde p \mid S$ has a vanishing variance. Therefore, $\tilde p \mid S$ converges to $p$ in probability. Note also that $\Pr(S) \rightarrow 1$ as $n \rightarrow \infty$, then for any $\epsilon > 0$, 
\begin{equation*}
    \Pr(|\tilde p - p |> \epsilon) =  \Pr(|\tilde p - p | > \epsilon \mid S) + \Pr(|\tilde p - p | > \epsilon \mid S^\mathsf{c}) \rightarrow 0.
\end{equation*}
That is, $\tilde p$ is a consistent estimator for $p$.

\end{proof}

\subsection{Proof of Theorem \ref{thm:NoiStraPrivSize}}
To prove Theorem \ref{thm:NoiStraPrivSize}, we need the following theorem and lemmas.
\begin{theorem}[Theorem 1, \cite{ratiov2013onthe}]
Let $X$ be a normal random variable with positive mean $\mu_x$, variance $\sigma_x^2$
and coefficient of variation $\delta_x = \sigma_x / \mu_x$ such that $ 0 < \delta_x < \lambda  \leq 1$, where $\lambda$ is
a known constant. For every $\epsilon > 0$, there exists $\gamma(\epsilon) \in (0, \sqrt{\lambda^2 - \delta_x^2})$ and also a
normal random variable $Y$ independent of $X$, with positive mean $\mu_y$, variance $\sigma_y^2$
and coefficient of variation $\delta_y = \sigma_y / \mu_y$ that satisfy the conditions,
\begin{equation}
    0 < \delta_y \leq  \gamma(\epsilon) \leq \sqrt{\lambda^2 - \delta_x^2} < \lambda
\label{lem:cond}
\end{equation}
for which the following result holds. Any $z$ that belongs to the interval
\begin{equation*}
    I = \left[\beta - \frac{\sigma_z}{\lambda}, \beta + \frac{\sigma_z}{\lambda} \right],
\end{equation*}
where $\beta = \mu_x / \mu_y$, $\sigma_z = \beta \sqrt{\delta_x^2 + \delta_y^2}$, satisfies that 
\begin{equation*}
    |G(z) - F_Z(z)| < \epsilon,
\end{equation*}
where $G(z)$ is the cumulative distribution function of $\mathcal{N}(\beta, \sigma^2_z)$, and $F_Z$ is that of $Z = X/Y$. Note that once a given $Y$ fulfills the closeness between the corresponding
$G$ to $F_Z$ , any other random variables with a smaller coefficient of variation will satisfy this result too.
\label{lem:ratio}
\end{theorem}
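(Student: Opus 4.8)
The plan is to derive the normal approximation directly from the exact tail probability of the ratio, exploiting that a small coefficient of variation $\delta_y$ makes the denominator $Y$ overwhelmingly positive and tightly concentrated about $\mu_y$. Writing $X = \mu_x(1+\delta_x U)$ and $Y = \mu_y(1+\delta_y V)$ with $U,V$ independent standard normals gives $Z = \beta(1+\delta_x U)/(1+\delta_y V)$. Let $\Phi$ denote the standard normal CDF. For a threshold $z$, I would split the event $\{X/Y \le z\}$ according to the sign of $Y$:
\begin{equation*}
F_Z(z) = \Pr(X - zY \le 0,\, Y > 0) + \Pr(X - zY \ge 0,\, Y < 0).
\end{equation*}
The second term, together with the error incurred by dropping the constraint $Y>0$ in the first, is bounded by $\Pr(Y<0) = \Phi(-1/\delta_y)$, which can be driven below $\epsilon/2$ by taking $\delta_y$ small. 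This is precisely the role of the threshold $\gamma(\epsilon)$.

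Next I would analyze the leading term. Since $X - zY \sim \mathcal{N}(\mu_x - z\mu_y,\, \sigma_x^2 + z^2\sigma_y^2)$,
\begin{equation*}
\Pr(X - zY \le 0) = \Phi\!\left(\frac{\mu_y(z-\beta)}{\sqrt{\sigma_x^2 + z^2\sigma_y^2}}\right).
\end{equation*}
The target law $G$ is $\mathcal{N}(\beta,\sigma_z^2)$, and using $\sigma_z^2 = \beta^2(\delta_x^2+\delta_y^2) = (\sigma_x^2 + \beta^2\sigma_y^2)/\mu_y^2$ one checks directly that $G(z) = \Phi\big(\mu_y(z-\beta)/\sqrt{\sigma_x^2 + \beta^2\sigma_y^2}\big)$. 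Thus $G$ and the leading term of $F_Z$ differ only through $z^2$ versus $\beta^2$ inside the denominator, and the whole problem reduces to controlling that single mismatch.

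To make the mismatch quantitative on the interval $I$, note that the hypothesis $\delta_y \le \sqrt{\lambda^2-\delta_x^2}$ forces $\sigma_z/\lambda \le \beta$, so $I \subseteq [0,2\beta]$ stays near the positive mean. On $I$ one has $|z-\beta|\le \sigma_z/\lambda$, hence $|z^2-\beta^2| = |z-\beta|(z+\beta) = O(\beta^2\sqrt{\delta_x^2+\delta_y^2})$; multiplying by $\sigma_y^2 = \delta_y^2\mu_y^2$ and comparing to $\sigma_x^2 + \beta^2\sigma_y^2 \ge \delta_x^2\beta^2\mu_y^2$ shows the two squared denominators agree up to a factor $1 + O(\delta_y^2)$ (with $\delta_x,\lambda$ fixed). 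Since $|\mu_y(z-\beta)| \le \sqrt{\sigma_x^2+\beta^2\sigma_y^2}/\lambda$ on $I$, the two standardized arguments then differ by $O(\delta_y^2)$, and as $\Phi$ is Lipschitz with constant $1/\sqrt{2\pi}$, the corresponding CDF values agree to within $O(\delta_y^2)$ uniformly over $I$. Choosing $\gamma(\epsilon)$ small enough that this uniform error is below $\epsilon/2$ and combining with the sign bound yields $|G(z)-F_Z(z)|<\epsilon$ on $I$.

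The main obstacle is making the denominator-mismatch bound genuinely uniform over $I$ while $I$ itself contracts as $\delta_y\to 0$: because $\sigma_z$ depends on $\delta_y$, the comparison must be carried out on the standardized scale, where the image $\{(z-\beta)/\sigma_z : z\in I\}$ is the fixed set $[-1/\lambda, 1/\lambda]$, independent of $\delta_y$. On that scale the Lipschitz estimate is uniform, which is what closes the argument. Finally, the monotonicity remark in the statement—that any $Y$ with smaller coefficient of variation also works—follows immediately, since both controlling quantities, $\Phi(-1/\delta_y)$ and the $O(\delta_y^2)$ mismatch, are increasing in $\delta_y$.
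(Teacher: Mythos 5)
This theorem is not proved in the paper at all: it is quoted verbatim as Theorem~1 of \cite{ratiov2013onthe}, so there is no internal proof to compare against and your proposal must stand on its own --- which it does. The sign-splitting identity $F_Z(z)=\Pr(X-zY\le 0,\,Y>0)+\Pr(X-zY\ge 0,\,Y<0)$ is exact, and the total error from replacing $F_Z(z)$ by $\Pr(X-zY\le 0)$ is at most $2\Phi(-1/\delta_y)$ (each discarded or added piece is bounded by $\Pr(Y<0)$), which only costs a harmless factor of $2$ in your $\epsilon/2$ budget. The algebra checks out: $\Pr(X-zY\le 0)=\Phi\bigl(\mu_y(z-\beta)/\sqrt{\sigma_x^2+z^2\sigma_y^2}\bigr)$, $\sigma_z^2=(\sigma_x^2+\beta^2\sigma_y^2)/\mu_y^2$, and on $I$ the hypothesis $\delta_y\le\sqrt{\lambda^2-\delta_x^2}$ gives $\sigma_z/\lambda\le\beta$, hence $0\le z\le 2\beta$ and $|z^2-\beta^2|\le 3\beta^2\sqrt{\delta_x^2+\delta_y^2}/\lambda$; dividing by $\sigma_x^2+\beta^2\sigma_y^2\ge\delta_x^2\beta^2\mu_y^2$ shows the two variances agree to a factor $1+O(\delta_y^2)$ with constants depending only on $\delta_x$ and $\lambda$. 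Combined with $|\mu_y(z-\beta)|\le\sqrt{\sigma_x^2+\beta^2\sigma_y^2}/\lambda$ on $I$ and the Lipschitz bound on $\Phi$, this yields a uniform $O(\delta_y^2)$ discrepancy; your observation that the standardized image of $I$ is the fixed set $[-1/\lambda,1/\lambda]$, independent of $\delta_y$, is exactly what keeps the estimate uniform as $I$ contracts, and since none of the constants involve $\mu_y$, both the existential claim about $Y$ and the closing monotonicity remark follow as you state. In spirit this is the classical Hinkley-type route (exact ratio CDF via the sign of the denominator, then control of the $z$-dependent variance on the standardized scale) that also underlies the original proof in \cite{ratiov2013onthe}; what your write-up adds is an explicit quantitative error of the form $\Phi(-1/\delta_y)+O(\delta_y^2)$, which makes the choice of $\gamma(\epsilon)$ concrete rather than implicit.
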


\begin{lemma}[] For a population of size $N$, let $p$ be the true proportion in the population with the attribute of interest. Consider simple random sampling with sample size $n$. Let
 $Z^* \sim \mathcal{N}(p, V)$ where $V  = \left( \frac{N-n}{N-1} \right) \frac{p(1-p)}{n} + \frac{1}{2\rho_1 n^2}  + \frac{p^2}{2\rho_2 n^2} $.
If $\rho_1 = \omega(1/n^2)$ and $\rho_2 = \omega(1/n)$, as $N - n$ and $n$ both tend to infinity, then for any $z \in (0, 2p)$, 
\begin{equation}
    |F_{\tilde p}(z) - F_{Z^*}(z)| \rightarrow 0.
\end{equation}

\label{lem:CDF}
\end{lemma}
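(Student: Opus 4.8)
The plan is to recognize that $Z^*$ is exactly the Gaussian approximation to a ratio of two \emph{exact} normals supplied by Theorem \ref{lem:ratio}, and then to close the gap between the true ratio $\tilde p = \tilde c/\tilde n$ and that idealized ratio. To this end I would introduce an auxiliary variable $X' \sim \mathcal{N}(np, \operatorname{Var}(\tilde c))$, taken independent of $Y := \tilde n \sim \mathcal{N}(n, \frac{1}{2\rho_2})$, where $\operatorname{Var}(\tilde c) = \frac{1}{2\rho_1} + n^2\operatorname{Var}(\hat p)$ as in (\ref{eq:varofc}). Computing the parameters $\beta = \mu_x/\mu_y$ and $\sigma_z^2 = \beta^2(\delta_x^2+\delta_y^2)$ of Theorem \ref{lem:ratio} for the pair $(X',Y)$ yields $\beta = p$ and $\sigma_z^2 = \left(\frac{N-n}{N-1}\right)\frac{p(1-p)}{n} + \frac{1}{2\rho_1 n^2} + \frac{p^2}{2\rho_2 n^2} = V$, so the approximating Gaussian $G$ of Theorem \ref{lem:ratio} is precisely $F_{Z^*}$. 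I would also record $\delta_y^2 = \frac{1}{2\rho_2 n^2}$ and $\delta_x^2 = \frac{\operatorname{Var}(\hat p)}{p^2} + \frac{1}{2\rho_1 n^2 p^2}$, and observe that under $\rho_2 = \omega(1/n)$ and $\rho_1 = \omega(1/n^2)$ both coefficients of variation vanish, so for every $\epsilon>0$ the hypotheses (\ref{lem:cond}) of Theorem \ref{lem:ratio} hold for all large $n$.

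By the triangle inequality $|F_{\tilde p}(z) - F_{Z^*}(z)| \le |F_{\tilde p}(z) - F_{X'/Y}(z)| + |F_{X'/Y}(z) - F_{Z^*}(z)|$, so it suffices to control the two terms. For the first, note the count $c$ is asymptotically normal by the finite-population CLT (Theorem \ref{thm: SRS_CLT}, whose condition (\ref{clt:cond}) holds for the bounded $\{0,1\}$-valued population once $\min(n,N-n)\to\infty$); convolving the standardized $c$ with the independent Gaussian noise of variance $\frac{1}{2\rho_1}$ preserves asymptotic normality, giving $\frac{\tilde c - np}{\sqrt{\operatorname{Var}(\tilde c)}} \stackrel{d}{\to} \mathcal{N}(0,1)$. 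Since the limit is continuous, P\'olya's theorem upgrades this to $\sup_t |F_{\tilde c}(t) - F_{X'}(t)| \to 0$. Conditioning on $Y=\tilde n$ and discarding the event $\{\tilde n \le 0\}$ (whose probability vanishes under $\rho_2 = \omega(1/n)$), I would write both CDFs as mixtures $F_{\tilde p}(z) = \int_0^\infty F_{\tilde c}(zy)f_{\tilde n}(y)\,dy + o(1)$ and $F_{X'/Y}(z) = \int_0^\infty F_{X'}(zy)f_{\tilde n}(y)\,dy + o(1)$, whence $|F_{\tilde p}(z) - F_{X'/Y}(z)| \le \sup_t |F_{\tilde c}(t)-F_{X'}(t)| + o(1) \to 0$, uniformly in $z>0$.

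For the second term I would split on the location of $z$. At the center $z=p$, the point $p=\beta$ always lies in the interval $I = [\beta - \sigma_z/\lambda, \beta+\sigma_z/\lambda]$, so Theorem \ref{lem:ratio} gives $|F_{X'/Y}(p) - F_{Z^*}(p)| < \epsilon$ for all large $n$, and since $\epsilon$ is arbitrary this difference vanishes. For fixed $z \in (0,2p)$ with $z\ne p$, I would instead use concentration: since $\operatorname{Var}(X'/n) = \operatorname{Var}(\hat p) + \frac{1}{2\rho_1 n^2} \to 0$ and $\operatorname{Var}(Y/n) = \frac{1}{2\rho_2 n^2} \to 0$ under the stated rates, Slutsky's theorem gives $X'/Y \stackrel{p}{\to} p$, and likewise $Z^* \stackrel{p}{\to} p$ as $V\to 0$; hence both $F_{X'/Y}(z)$ and $F_{Z^*}(z)$ converge to the common value $\mathbbm{1}\{z \ge p\}$, evaluated at a continuity point, so their difference tends to $0$. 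Combining the two terms through the triangle inequality yields $|F_{\tilde p}(z) - F_{Z^*}(z)| \to 0$ for each fixed $z \in (0,2p)$.

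The main obstacle I anticipate is the first term: justifying that the \emph{unconditional} law of the numerator $\tilde c$ — a convolution of a Hypergeometric count with Gaussian noise, hence not itself normal — is uniformly close in CDF to the fitted normal $X'$, and then transferring this uniform bound through the ratio by the mixture argument above (including the harmless sign subtlety from $\{\tilde n \le 0\}$). A secondary but conceptually important point is reconciling the \emph{shrinking} approximation interval $I$ of Theorem \ref{lem:ratio}, whose half-width $\sigma_z/\lambda \to 0$ because $\sigma_z^2 = V \to 0$, with a pointwise conclusion for fixed $z$; this is exactly why the center $z=p$ must be handled by the ratio approximation while all other points are dispatched by consistency.
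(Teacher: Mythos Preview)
Your first-term argument (CLT on $c$, convolution with the independent Gaussian noise, P\'olya to upgrade to a uniform CDF bound, then the mixture over $y$) is essentially the paper's argument, and you are slightly more careful than the paper about the sign of $y$ in the identity $F_{\tilde c/\tilde n}(z)=\int F_{\tilde c}(zy)f_{\tilde n}(y)\,dy$, which as written is only valid for $y>0$.

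The second term is where you diverge. The paper does not treat the approximation interval $I$ of Theorem~\ref{lem:ratio} as shrinking; it chooses $\lambda=\sqrt{\delta_x^2+2\delta_y^2}$, which tends to $0$ at the same rate as $\sigma_z=p\sqrt{\delta_x^2+\delta_y^2}$, so that the half-width $\sigma_z/\lambda = p\sqrt{(\delta_x^2+\delta_y^2)/(\delta_x^2+2\delta_y^2)}\to p$ (using $\delta_y=o(\delta_x)$, which requires $\rho_2=\omega(1/n)$). Thus $I\to(0,2p)$, and Theorem~\ref{lem:ratio} alone delivers $|F_{X'/Y}(z)-F_{Z^*}(z)|<\epsilon$ simultaneously for all $z\in I$. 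Your route --- invoking the ratio theorem only at the center $z=p$ and disposing of every fixed $z\neq p$ by the consistency $X'/Y\stackrel{p}{\to}p$, $Z^*\stackrel{p}{\to}p$ --- is correct for the lemma as stated and is arguably more elementary; it even shows convergence for every fixed $z\neq p$, not just $z\in(0,2p)$. What the paper's choice buys is uniformity on an interval containing $p$: this matters downstream in the proof of Theorem~\ref{thm:{\NoiStraPrivSize}2}, where the lemma is applied at the moving points $U=p+z_{1-\alpha/2}\sqrt{V}$ and $L=p-z_{1-\alpha/2}\sqrt{V}$, which tend to $p$ and therefore escape any fixed-$z$ consistency argument. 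Your proof establishes the pointwise statement of the lemma, but if you later need it at such $n$-dependent points near $p$, you will want the paper's shrinking-$\lambda$ device.
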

\begin{proof}
By the CLT in Theorem \ref{thm: SRS_CLT}, we know that $\hat p \sim \mathcal{AN}(p, \operatorname{Var}(\hat p))$. Recall that $\tilde c = n\hat p + \mathcal{N}(n, \frac{1}{2\rho_1})$, then $\tilde c \sim \mathcal{AN}(np, n^2\operatorname{Var}(\hat p) + \frac{1}{2\rho_1})$.

Let $\tilde X \sim \mathcal{AN}(np, n^2\operatorname{Var}(\hat p) + \frac{1}{2\rho_1})$ and $X \sim \mathcal{N}(np, n^2\operatorname{Var}(\hat p) + \frac{1}{2\rho_1})$. Therefore, for any $\epsilon > 0$, there exists some $n_0 = n_0(\epsilon)$ such that for any $x$ and $n > n_0$,
\begin{equation}
    |F_{\tilde X}(x) - F_{X}(x)| < \epsilon,
    \label{eq:1}
\end{equation}
where $F$ denotes the cumulative density function.
Let $Y \sim \mathcal{N}(n, \frac{1}{2\rho_2})$, $\tilde Z = \tilde X/Y$ and $Z = X/Y$, then 
\begin{equation*}
    F_{\tilde Z}(z) = \Pr \left(\frac{\tilde X}{Y} < z\right) 
    = \Pr (\tilde X < Yz)
    = \int_{-\infty}^{\infty} F_{\tilde X}(yz)f_y(y)dy,
\end{equation*}
where $f_y(y)$ is the density function of $Y$.
From (\ref{eq:1}), $ |F_{\tilde X}(yx) - F_{X}(yx)| < \epsilon$. 
It follows that, 
\begin{equation*}
    \int_{-\infty}^{\infty} (F_{X}(yx) - \epsilon) f_y(y)dy < \int_{-\infty}^{\infty} F_{\tilde X}(yx)f_y(y)dy < \int_{-\infty}^{\infty} (F_{X}(yx) + \epsilon)f_y(y)dy,
\end{equation*}
which is equivalent to
\begin{equation*}
    \left|\int_{-\infty}^{\infty}F_{\tilde X}(yx)f_y(y)dy - \int_{-\infty}^{\infty}F_{X}(yx)f_y(y)dy\right| < \epsilon,
\end{equation*}
i.e., 
\begin{equation}
    |F_{\tilde Z}(z) - F_{Z}(z) | < \epsilon.
\end{equation}

Let $\delta_x$ and $\delta_y$ be the coefficients of variation of $X$ and $Y$, respectively, then $\delta_x^2 = (\operatorname{Var}(\hat p) + \frac{1}{2 \rho_1 n^2}) / p^2$ and $\delta_y^2 = \frac{1}{2 \rho_2 n^2}$. Under the condition $\frac{1}{\rho_1n} = o(1)$, we have $\delta_x^2 = O(\frac{1}{n})$ since $\operatorname{Var}(\hat p) = O(\frac{1}{n})$. Under the condition $\frac{1}{\rho_2 n} = o(1)$, we know $\delta^2_y = o(\frac{1}{n})$ and then $\delta_y = o(\delta_x)$.
When $n$ is sufficiently large, $\delta_y$ is sufficiently small. Let $\lambda = \sqrt{\delta_x^2+ 2\delta_y^2}$ and $F_{Z^*}(z)$ be the distribution function of $Z^* \sim \mathcal{N}(p, \operatorname{Var}(\hat p) + \frac{1}{2\rho_1 n^2}  + \frac{p^2}{2\rho_2 n^2} )$. By Lemma \ref{lem:ratio}, for a
normal random variable $Y$ independent of $X$, with small enough $\delta_y$, the condition (\ref{lem:cond}) is satisfied and we have
\begin{equation}
    |F_{Z}(z)  - F_{Z^*}(z) | < \epsilon,
\end{equation}
for any $z \in I = \left[ p - \frac{\sigma_{z^*}}{\lambda}, p + \frac{\sigma_{z^*}}{\lambda} \right]$ where  $\sigma_{z^*} = p \sqrt{\delta_x^2 + \delta_y^2}$. Hence, for $z \in I$, 
\begin{equation}
    |F_{\tilde Z}(z)  - F_{Z^*}(z) | < |F_{\tilde Z}(z)  - F_{Z}(z) |  + |F_{Z}(z)  - F_{Z^*}(z) | < 2\epsilon.
\end{equation}
Note also that as $n \rightarrow \infty$, $\frac{\sigma_{z^*}}{\lambda} \rightarrow p$, and the limit of $I$ is $(0, 2p)$. 

So far, we have shown that as $n$ goes to infinity, under the conditions $\frac{1}{\rho_1 n} = o(1)$ and $\frac{1}{\rho_2 n} = o(1)$, 
for $z \in I_h$,
\begin{equation}
    |F_{\tilde Z}(z)  - F_{Z^*}(z) | \rightarrow 0.
\end{equation}
\end{proof}

\begin{lemma}
\label{lem:conv}
   Let $Z_1, ..., Z_H$ and $Z_1^*, ..., Z_H^*$ be independent continuous random variables which depend on $n$. Let $F$ denote the distribution function. As $n \rightarrow \infty$, if
\begin{equation*}
    |F_{Z_h}(z) - F_{Z_h^*}(z)| \rightarrow 0
\end{equation*}
holds for any $h = 1, ..., H$ and $z$ in an interval $(a_h, b_h)$
and $\Pr(Z_h \in (a_h, b_h)) \rightarrow 1$, $ \Pr(Z^*_h \in (a_h, b_h)) \rightarrow 1
$.
 Then,
\begin{equation*}
\left|F_{\sum_{h = 1}^H c_h Z_h}(z) - F_{\sum_{h = 1}^H c_h Z_h^*}(z) \right| \rightarrow 0 
\end{equation*}
for any $z \in \left(\sum_{h = 1}^H c_h a_h, \sum_{h = 1}^H c_h b_h \right)$, where $c_h$'s are constants.
\end{lemma}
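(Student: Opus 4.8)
The plan is to argue by induction on $H$, replacing the summands $c_h Z_h$ by $c_h Z_h^*$ one at a time and controlling the change in the CDF of the sum at each step; I take the $c_h$ to be positive, which is the relevant case here (with $c_h=w_h$) and is implicit in the statement, since otherwise $(\sum_h c_h a_h, \sum_h c_h b_h)$ need not be an interval. A first simplification: the CDF of a sum of the $Z_h$'s (respectively the $Z_h^*$'s) depends only on the already-fixed joint law of the unstarred (respectively starred) variables, so I may pass to a single probability space on which all $2H$ variables $Z_1,\dots,Z_H,Z_1^*,\dots,Z_H^*$ are mutually independent. This leaves $F_{\sum_h c_h Z_h}$ and $F_{\sum_h c_h Z_h^*}$ unchanged while turning every intermediate mixed sum into a genuine convolution. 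I also record that the partial sums concentrate: since $\{c_h Z_h\in(c_h a_h,c_h b_h)\ \forall h\}\subseteq\{\sum_h c_h Z_h\in(\sum_h c_h a_h,\sum_h c_h b_h)\}$, a union bound over the hypotheses $\Pr(Z_h\in(a_h,b_h))\to1$ gives $\Pr(\sum_h c_h Z_h\in(\sum_h c_h a_h,\sum_h c_h b_h))\to1$, and likewise for the starred sum.

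The inductive core is a two-variable swap. Writing $U=c_H Z_H$, $U^*=c_H Z_H^*$, $V=\sum_{h<H}c_h Z_h$ and $I_U=(c_H a_H,c_H b_H)$, the base case $H=1$ (immediate from the hypothesis after rescaling by $c_H>0$) supplies $|F_U(t)-F_{U^*}(t)|\to0$ for $t\in I_U$ together with concentration of $U,U^*$ in $I_U$. Using independence of $U^*$ and $V$, for a fixed $z$
\[
F_{U+V}(z)-F_{U^*+V}(z)=\int\left(F_U(z-v)-F_{U^*}(z-v)\right)\,dF_V(v),
\]
so it suffices to show $\int g(z-v)\,dF_V(v)\to0$, where $g:=|F_U-F_{U^*}|$. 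I split the range of the argument $t=z-v$ into the left tail $t\le c_H a_H$, the right tail $t\ge c_H b_H$, and the middle $t\in I_U$. On the left tail, monotonicity gives $g(t)\le\max\{F_U(c_H a_H),F_{U^*}(c_H a_H)\}$, and concentration of $U,U^*$ in $I_U$ forces $F_U(c_H a_H),F_{U^*}(c_H a_H)\to0$; symmetrically, on the right tail $g(t)\le\max\{1-F_U(c_H b_H),\,1-F_{U^*}(c_H b_H)\}\to0$. Hence $g\to0$ uniformly on both tails, so their contribution to the integral is at most a quantity tending to $0$ times $\Pr(V\in\cdot)\le1$.

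On the middle region I invoke the convergence hypothesis: the contribution there is bounded by $\bigl(\sup_{t\in I_U}g(t)\bigr)\,\Pr\!\left(V\in(z-c_H b_H,\,z-c_H a_H)\right)\le\sup_{t\in I_U}g(t)$. Combining the three regions yields $F_{U+V}(z)-F_{U^*+V}(z)\to0$; a symmetric swap of $V$ for $V^*$ (now integrating $|F_V-F_{V^*}|$ against $dF_{U^*}$, using the induction hypothesis and concentration of $V,V^*$ in $I_V=(\sum_{h<H}c_h a_h,\sum_{h<H}c_h b_h)$) gives $F_{U^*+V}(z)-F_{U^*+V^*}(z)\to0$, and the triangle inequality closes the induction for $z\in(\sum_h c_h a_h,\sum_h c_h b_h)$. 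The main obstacle is precisely the middle region: passing the limit inside the $n$-dependent integral $\int\cdots\,dF_{V}$ requires $\sup_{t\in I_U}g(t)\to0$, i.e.\ that the convergence $|F_{Z_h}-F_{Z_h^*}|\to0$ be (at least locally) uniform on the interval rather than merely pointwise. This is exactly the strength delivered by the preceding Lemma~\ref{lem:CDF}, whose bound $|F_{\tilde Z}(z)-F_{Z^*}(z)|<2\epsilon$ holds uniformly for $z$ in the interval; under this reading the middle-region estimate is immediate, and a harmless $\eta$-shrinking of $I_U$ near its endpoints (absorbing the endpoint strips into the tails) removes any difficulty arising at the boundary.
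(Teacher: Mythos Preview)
Your approach is essentially the same as the paper's: both reduce to swapping one summand at a time via the convolution representation of the sum's CDF, split the integration variable into a middle region (where the hypothesis applies) and tails (where concentration forces both CDFs to $0$ or $1$), and close with the triangle inequality. You are more explicit than the paper about the uniformity needed to pass the limit through the $n$-dependent integral and about the implicit positivity of the $c_h$, but the overall architecture is identical.
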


\begin{proof}
It suffices to show that, for any $z \in (a_1c_1 + a_2c_2, b_1c_1 + b_2c_2)$,
\begin{equation*}
    \left|F_{c_1Z_1 + c_2Z_2}(z)- F_{c_1Z^*_1 + c_2Z^*_2}(z) \right| \rightarrow 0
\end{equation*}
as $n \rightarrow \infty$.
We have
\begin{equation}
\begin{aligned}
    & \quad F_{c_1Z_1 + c_2Z_2}(z) \\
    & = \Pr (c_1Z_1 + c_2Z_2 < z) \\
    & = \Pr\left(Z_1 < \frac{z - c_2Z_2}{c_1}\right) \\
    & = \int_{\mathbb{R}} F_{Z_1}\left(\frac{z - c_2x}{c_1}\right)f_{Z_2}(x)dx \\
    & = \int_{\mathbb{R}} \left[F_{Z_1}\left(\frac{z - c_2x}{c_1}\right) - F_{Z^*_1}\left(\frac{z - c_2x}{c_1}\right) \right] f_{Z_2}(x)dx  + \int_{\mathbb{R}} F_{Z^*_1}\left(\frac{z - c_2x}{c_1}\right)f_{Z_2}(x)dx \\
    & = \int_{\mathbb{R}} \left[F_{Z_1}\left(\frac{z - c_2x}{c_1}\right) - F_{Z^*_1}\left(\frac{z - c_2x}{c_1}\right) \right] f_{Z_2}(x)dx  + F_{c_1Z^*_1 + c_2Z_2}(z).
\end{aligned} 
\label{eq:conv}
\end{equation}
When $a_1 < (z - c_2x)/c_1 < b_1$, we know
\begin{equation}
    \left|F_{Z_1}\left(\frac{z - c_2x}{c_1}\right) - F_{Z_1^*}\left(\frac{z - c_2x}{c_1}\right)\right| \rightarrow 0.
    \label{eq:conv2}
\end{equation}

Since $F_{Z_1}(b_1) - F_{Z_1}(a_1) \rightarrow 1$ and $F_{Z^*_1}(b_1) - F_{Z^*_1}(a_1) \rightarrow 1$, for any $a < a_1$, it holds that $F_{Z_1}(a) \rightarrow 0$ and $F_{Z^*_1}(a) \rightarrow 0$, and for any $b > b_1$, $F_{Z_1}(b) \rightarrow 1$ and $F_{Z^*_1}(b) \rightarrow 1$. 
Thus, (\ref{eq:conv2}) also holds when $(z - c_2x)/c_1 $ is outside $(a_1, b_1)$. 
Therefore, the first term of the right-hand side of (\ref{eq:conv}) converges to 0. Then

\begin{equation*}
   \left|F_{c_1Z_1 + c_2Z_2}(z)- F_{c_1Z^*_1 + c_2Z_2}(z) \right| \rightarrow 0.
\end{equation*}
Similarly, we have
\begin{equation*}
   \left|F_{c_1Z^*_1 + c_2Z_2}(z)- F_{c_1Z^*_1 + c_2Z^*_2}(z) \right| \rightarrow 0.
\end{equation*}
By the triangle inequality,
\begin{equation*}
   \left|F_{c_1Z_1 + c_2Z_2}(z)- F_{c_1Z^*_1 + c_2Z^*_2}(z) \right| \rightarrow 0.
\end{equation*}

\end{proof}

\begin{proof}[Proof of Theorem \ref{thm:NoiStraPrivSize}]
By Lemma \ref{lem:CDF}, 
for each stratum, under the conditions $\rho_1 = \omega(1/n_h)$ and $\rho_2 = \omega(1/n_h)$, the distribution function of $\tilde p_h$ converges to that of $\mathcal{N}(p_h, V_h)$ in the interval $(0, 2 p_h)$ where 
\begin{equation}
\label{eq:true varh priv}
    V_h = \left( \frac{N_h-n_h }{N_h -1} \right) \frac{p_h (1-p_h )}{n_h } + \frac{1}{2\rho_1 n_h^2}  + \frac{p_h ^2}{2\rho_2 n_h^2}.
\end{equation}
Let $p^* \sim \mathcal{N}(p, V)$ where $V = \sum_{h = 1}^H w_h^2 V_h$.
By Lemma \ref{lem:conv}, in the interval $(0, 2 p)$, we have
\begin{equation}
    |F_{\tilde p}(z)  - F_{p^*}(z) | \rightarrow 0.
    \label{eq:diff}
\end{equation}
where $F_{\tilde p}$ denotes the distribution function of $\tilde p$ designed in Algorithm \ref{alg:{StrNz-PrivSz}} and $F_{p^*}$ is the distribution function of $p^*$.

Let $ L =  p - z_{1-\alpha/2} \sqrt{ V}$, $ U =  p+ z_{1-\alpha/2} \sqrt{V}$, $\tilde L =  p - z_{1-\alpha/2} \sqrt{\widetilde V}$ and $\tilde U =  p+ z_{1-\alpha/2} \sqrt{\widetilde V}$. Note that $L$ and $U$ are constants whereas $\tilde L$ and $\tilde U$ are random variables. 
Provided that $n_h$'s are sufficiently large, $U$ and $L$ lie in the interval where the following hold
 due to (\ref{eq:diff}), 
\begin{equation}
    |F_{\tilde p}( U) - F_{p^*}( U) |   \rightarrow 0
    \label{eq:conv_FU}
\end{equation}
and 
\begin{equation}
    |F_{\tilde p}( L) - F_{p^*}( L) | \rightarrow 0.
\end{equation}

On the other hand, by Theorems \ref{prop:condmom} and \ref{thm:consistency}, we know that $\tilde p_h \stackrel{p}{\rightarrow} p_h$ and $\frac{1}{\tilde n_h} \stackrel{p}{\rightarrow} \frac{1}{n_h}$ under the conditions $\rho_1 = \omega(1/n_h)$ and $\rho_2 = \omega(1/n_h)$. 
By the continuous mapping theorem, $\widetilde V_h\stackrel{p}{\rightarrow} V_h$ as $n_h \rightarrow \infty$, and, hence, $\widetilde V\stackrel{p}{\rightarrow} V$. Therefore, $\tilde U\stackrel{p}{\rightarrow} U$ and $\tilde L\stackrel{p}{\rightarrow} L$. Since $F_{\tilde p}$ is continuous, we have
\begin{equation}
    |F_{\tilde p}(\tilde U) - F_{\tilde p}( U)| \stackrel{p}{\rightarrow} 0
\end{equation}
and 
\begin{equation}
    |F_{\tilde p}(\tilde L) - F_{\tilde p}( L) | \stackrel{p}{\rightarrow} 0.
    \label{eq:conv_FL_tilde}
\end{equation}

Therefore,
\begin{equation*}
    \begin{aligned}
        &\quad \Pr \left( p \in \left(\tilde p - z_{1-\alpha/2}\sqrt{\widetilde V}, \tilde p + z_{1-\alpha/2}\sqrt{\widetilde V}\right)\right) \\
        & = \Pr\left( p - z_{1-\alpha/2}\sqrt{\widetilde V} < \tilde p <p + z_{1-\alpha/2}\sqrt{\widetilde V} \right) \\
        & = \left(F_{\tilde p}(\tilde U) - F_{\tilde p}( U) \right)+ \left( F_{\tilde p}( U)  - F_{p^*}( U) \right) \\ 
        & \quad 
        -\left(F_{\tilde p}(\tilde L) - F_{\tilde p}( L) \right) - \left( F_{\tilde p}( L)  - F_{p^*}( L) \right) +  \left(F_{p^*}( U) - F_{p^*}( L)\right) .
    \end{aligned}
\end{equation*}
Putting together (\ref{eq:conv_FU}) through (\ref{eq:conv_FL_tilde}) and $F_{p^*}( U) - F_{p^*}( L) = 1- \alpha$, we have
\begin{equation*}
    \lim_{n\rightarrow \infty}\Pr \left( p \in \left(\tilde p - z_{1-\alpha/2}\sqrt{\widetilde V}, \tilde p + z_{1-\alpha/2}\sqrt{\widetilde V}\right)\right)
      \rightarrow 1- \alpha.
\end{equation*}

Since $\frac{1}{\tilde n_h} \stackrel{p}{\rightarrow} \frac{1}{n_h}$, under the conditions $\rho_1 = \omega(1/n_h)$ and $\rho_2 = \omega(1/n_h)$, it holds that $\frac{1}{2\rho_1 \tilde n_h^2} = o_P(\frac{1}{n_h})$ and $\frac{\tilde p^2_h}{2\rho_2 \tilde n_h^2} = o_P(\frac{1}{n_h})$. Therefore, the additional error in estimating the conditional variance of $\tilde p$ caused by the injected noise is $O_p\left(\frac{1}{\rho_1n_h^2} + \frac{1}{\rho_2n_h^2}\right)= o_p\left(\frac{1}{n_h}\right)$.

\end{proof}

\begin{acks}[Acknowledgments]
We are grateful for helpful conversations with and comments
from (in no particular order) Rolando Rodriguez, 
Brian Finley, Jörg Drechsler, Gary Benedetto, Michael Freiman, and Justin Doty. 
This project was funded by
the U.S. Census Bureau cooperative agreements CB20ADR0160001.
\end{acks}


\bibliographystyle{imsart-number} 
\bibliography{bibliography}       


\end{document}